\newcommand{\blind}{0}
\newtheorem{theorem}{Theorem}
\newtheorem{definition}{Definition}
\newtheorem{lemma}{Lemma}
\renewenvironment{proof}[1][\proofname]{{\noindent \bfseries #1.}}{\qed \vspace{0.5cm}}
\def\kron{\hspace{0.5mm} \otimes \hspace{0.5mm}} 
\def\diff(#1,#2){\displaystyle \frac{\partial #1}{\partial #2}} 
\providecommand{\abs}[1]{\lvert#1\rvert} 
\providecommand{\norm}[1]{\lVert#1\rVert} 
\def\spacingset#1{\renewcommand{\baselinestretch}%
{#1}\small\normalsize} \spacingset{1}
\begin{document}

\if0\blind
{
  \title{\bf Robust Inference for Seemingly Unrelated Regression Models}
  \author{Kris Peremans and Stefan Van Aelst \hspace{.2cm} \\
    Department of Mathematics, KU Leuven, 3001 Leuven, Belgium \\
  }
  \maketitle
} \fi

\if1\blind
{
  \vspace*{0.35cm}
  \begin{center}
    {\LARGE \bf Robust Inference for Seemingly Unrelated Regression Models\par}
  \end{center}
  \vspace*{3.5cm}
} \fi

\bigskip
\begin{abstract} 
\noindent Seemingly unrelated regression models generalize linear regression models by considering multiple regression equations that are linked by contemporaneously correlated disturbances. Robust inference for seemingly unrelated regression models is considered. MM-estimators are introduced to obtain estimators that have both a high breakdown point and a high normal efficiency. A fast and robust bootstrap procedure is developed to obtain robust inference for these estimators. Confidence intervals for the model parameters as well as hypothesis tests for linear restrictions of the regression coefficients in seemingly unrelated regression models are constructed. Moreover, in order to evaluate the need for a seemingly unrelated regression model, a robust procedure is proposed to test for the presence of correlation among the disturbances. The performance of the fast and robust bootstrap inference is evaluated empirically in simulation studies and illustrated on real data. 
\end{abstract}

\noindent KEYWORDS: 
Diagonality test; Fast and robust bootstrap; MM-estimator; Robust testing
\vfill

\noindent \copyright{} 2018. This manuscript version is made available under the CC-BY-NC-ND 4.0 license \url{http://creativecommons.org/licenses/by-nc-nd/4.0/}

\newpage
\spacingset{1.45} 


\section{Introduction}

\noindent Many scientists have investigated statistical problems involving multiple linear regression equations. Unconsidered factors in these equations can lead to highly correlated disturbances. In such cases, estimating the regression parameters equation-by-equation by, e.g., least squares is not likely to yield efficient estimates. Therefore, seemingly unrelated regression (SUR) models have been developed. SUR models take the underlying covariance structure of the error terms across equations into account. Applications in econometrics and related fields include demand and supply models~\citep{Kotakou2011,Martin2007}, capital asset pricing models~\citep{Hodgson2002,Pastor2002}, chain ladder models~\citep{Hubert2017,Zhang2010}, vector autoregressive models~\citep{Wang2010}, household consumption and expenditure models~\citep{Kuson2012,Lar2011}, environmental sciences~\citep{Olaolu2011,Zaman2011}, natural sciences~\citep{Cadavez2012,Hasenauer1998} and many more.

A SUR model, introduced by~\citet{Zellner1962}, consists of $m>1$ dependent linear regression equations, also called blocks. Denote the $j$th block in matrix form by
\[ y_j = X_j \beta_j + \varepsilon_j, \]
where $y_j = (y_{1j},\ldots,y_{nj})^\top$ contains the $n$ observed values of the response variable and $X_j$ is an $n \times p_j$ matrix containing the values of $p_j$ input variables. Note that the number of predictors does not need to be the same for all blocks. The vector $\beta_j = (\beta_{1j},\ldots,\beta_{p_jj})^\top$ contains the unknown regression coefficients for the $j$th block and $\varepsilon_j = (\varepsilon_{1j},\ldots,\varepsilon_{nj})^\top$ constitutes its error term. The error term $\varepsilon_j$ is assumed to have ${\rm E}[\varepsilon_j] = 0$ and ${\rm Cov}[\varepsilon_j] = \sigma_{jj} I_n$ where $\sigma_{jj}$ is the unknown variance of the errors in the $j$th block, and $I_n$ represents the identity matrix of size $n$. In the SUR model blocks are connected by the assumption of contemporaneous correlation. That is, the $i$th element of the error term of block $j$ may be correlated with the $i$th element of the error term of block $k$. With $i$ and $\ell$ observation numbers and $j$ and $k$ block numbers, the covariance structure of the disturbances can be summarized as
\begin{align*}
{\rm E}[\varepsilon_{ij} \varepsilon_{ik}] &= \sigma_{jk}, \quad i=1,\ldots,n \text{ and } j,k=1,\ldots,m; \\
{\rm E}[\varepsilon_{ij} \varepsilon_{\ell j}] &= 0, \quad i \neq \ell; \\
{\rm E}[\varepsilon_{ij} \varepsilon_{\ell k}] &= 0, \quad j \neq k \text{ and } i \neq \ell.
\end{align*}
Note that each regression equation in a SUR model is a linear regression model in its own right. The different blocks may seem to be unrelated at first sight, but are actually related through their error terms.

The regression equations in a SUR model can be combined into two equivalent single matrix form equations. Let bdiag$()$ denote the operator that constructs a block diagonal matrix from its arguments. Moreover, let $\kron$ denote the Kronecker product and let $\Sigma$ be a symmetric matrix with elements $\sigma_{jk}$. First, the SUR model can be rewritten as a single linear regression model
\[ y = X \beta + \varepsilon, \]
where $y = (y_1^\top,\ldots,y_m^\top)^\top$, $X = {\rm bdiag}(X_1,\ldots,X_m)$ a $nm \times p$ block diagonal matrix with $p = \sum_{j=1}^m p_j$, and $\beta = (\beta_1^\top,\ldots,\beta_m^\top)^\top$. For the error term $\varepsilon = (\varepsilon_1^\top,\ldots,\varepsilon_m^\top)^\top$ it then holds that ${\rm Cov}[\varepsilon] = \Sigma \kron I_n$. Secondly, the SUR model can be represented as a multivariate linear regression model
\[ Y = \tilde{X} \mathcal{B} + \mathcal{E}, \]
where $Y = (y_1,\ldots,y_m)$, $\tilde{X} = (X_1,\ldots,X_m)$, $\mathcal{B} = {\rm bdiag}(\beta_1,\ldots,\beta_m)$ and $\mathcal{E} = (\varepsilon_1,\ldots,\varepsilon_m)$. Equivalently, we can write the error matrix as $\mathcal{E} = (e_1,\ldots,e_n)^\top$ with $e_i = (\varepsilon_{i1},\ldots,\varepsilon_{im})^\top$ which satisfies ${\rm Cov}[e_i] = \Sigma$. Hence, the covariance of the error matrix $\mathcal{E}$ is given by ${\rm Cov}[\mathcal{E}] = \Sigma \kron I_n$.

It is well-known that ordinary least squares which ignores the correlation patterns across blocks may yield inefficient estimators. Generalized least squares (GLS) is a modification of least squares that can deal with any type of correlation, including contemporaneous correlation. For the SUR model, the GLS estimator takes the form
\begin{equation}
\label{beta_GLS}
\hat{\beta}_{\text{GLS}} = (X^\top (\Sigma^{-1} \kron I_n) X)^{-1} X^\top (\Sigma^{-1} \kron I_n) y.
\end{equation}
GLS coincides with the separate least squares estimates if $\sigma_{jk}$ for $j \neq k$, or if $X_1 = \ldots = X_m$. GLS is more efficient than least squares estimator~\citep{Zellner1962}, but in most situations the covariance $\Sigma$ needed in GLS is unknown. Feasible generalized least squares (FGLS) estimates the elements of $\Sigma$ by $\hat{\sigma}_{jk} = \hat{\varepsilon}_j^\top \hat{\varepsilon}_k / n$ where $\hat{\varepsilon}_j$ is the residual vector of the $j$th block obtained from ordinary least squares and then replaces $\Sigma$ in GLS by the resulting estimator $\hat{\Sigma}$. The finite-sample efficiency of FGLS is smaller than for GLS, although the asymptotic efficiency of both methods is identical. Note that FGLS can be repeated iteratively.

Alternatively, maximum likelihood estimators (MLE) can be considered~\citep[see][]{Srivastava1987}. Assuming that the disturbances are normally distributed, the log-likelihood of the SUR model is given by
\begin{equation}
\label{logLikelihood}
l(\beta,\Sigma|X,y) = - \frac{mn}{2} \ln(2 \pi) - \frac{n}{2} \ln(\abs{\Sigma}) - \frac{1}{2} (y - X \beta)^\top (\Sigma^{-1} \kron I_n) (y - X \beta).
\end{equation}
Maximizing this log-likelihood with respect to $(\beta,\Sigma)$ yields the estimators $(\hat{\beta}_{\text{MLE}},\hat{\Sigma}_{\text{MLE}})$ which are the solutions of the following equations
\begin{equation}
\begin{gathered}
\label{MLE}
\hat{\beta}_{\text{MLE}} = (X^\top (\hat{\Sigma}_{\text{MLE}}^{-1} \kron I_n) X)^{-1} X^\top (\hat{\Sigma}_{\text{MLE}}^{-1} \kron I_n) y \\
\hat{\Sigma}_{\text{MLE}} = (Y - \tilde{X} \hat{\mathcal{B}}_{\text{MLE}})^\top (Y - \tilde{X} \hat{\mathcal{B}}_{\text{MLE}}) / n
\end{gathered}
\end{equation}
with $\hat{\mathcal{B}}_{\text{MLE}}$ the block diagonal form of $\hat{\beta}_{\text{MLE}}$. Hence, the maximum likelihood estimators correspond to the fully iterated FGLS estimators.

It is well-known that outliers in the data (observations which deviate from the majority of the data) can severely influence classical estimators such as LS, MLE and their modifications. Hence, FGLS and MLE are expected to yield non-robust estimates. Robust M-estimators for the SUR model have been proposed, but these estimators lack affine equivariance~\citep{Koenker1990}. \citet{Bilodeau2000} have introduced robust and affine equivariant S-estimators. Recently,~\citet{Hubert2017} developed an efficient algorithm for these estimators. Despite its remarkable robustness properties, S-estimators can have a low efficiency, which makes them less suitable for inference. Therefore, we introduce MM-estimators for the SUR model which can combine high robustness with a high efficiency. To obtain efficient and powerful robust tests, we also introduce an efficient MM-estimator of the error scale based on the residuals of the MM-estimates.

Asymptotic theory can be used to draw inference corresponding to the MM-estimates in the SUR model. However, these asymptotic results rely on assumptions that are hard to verify in practice. The bootstrap~\citep{Efron1979} offers an alternative approach that does not require strict assumptions. However, the standard bootstrap lacks speed and robustness. Therefore, the fast and robust bootstrap (FRB) procedure of~\citet{Salibian2002} is adapted to the SUR setting. The FRB can be used to construct confidence intervals~\citep{Salibian2006,Salibian2002} as well as to develop hypothesis tests~\citep{Salibian2005,Salibian2016,VanAelst2011}. In particular, one of our main goals is to develop a robust test for diagonality of the covariance matrix $\Sigma$ to evaluate the need for using a SUR model.

To set the scene, MM-estimators for the SUR model are introduced in Section~\ref{Robust Estimators for the SUR Model} as an extension of S-estimators. Section~\ref{Fast and Robust Bootstrap} focuses on the fast and robust bootstrap procedure to develop robust inference. In Section~\ref{Robust Tests for the Regression Parameters} the MM-estimator of scale is introduced and hypothesis tests concerning the regression coefficients are studied. In Section~\ref{Robust Test for Diagonality of the Covariance Matrix} we investigate a robust procedure to test for diagonality of the covariance matrix $\Sigma$, i.e., to test whether a SUR model is really needed. The finite-sample performance of the FRB inference procedures is investigated by simulation in Section~\ref{Finite-Sample Performance}. Section~\ref{Example: Grunfeld Data} illustrates the robust inference on a real data example from economics and Section~\ref{Conclusion} concludes. The supplementary material includes properties of MM-estimators and the proposed test statistics, and contains some extra results on robust confidence intervals.


\section{Robust Estimators for the SUR Model}
\label{Robust Estimators for the SUR Model}


\subsection{S-estimators}

\noindent 
We first introduce S-estimators for the SUR model as proposed by~\citet{Bilodeau2000}. Consider so-called $\rho$-functions which satisfy the following conditions:
\begin{enumerate}
\item[(C1)] $\rho$ is symmetric, twice continuously differentiable and satisfies $\rho(0)=0$
\item[(C2)] $\rho$ is strictly increasing on $[0,c]$ and constant on $[c,\infty[$ for some $c>0$.
\end{enumerate}
The most popular family of $\rho$-functions is the class of Tukey bisquare $\rho$-functions given by $\rho(u) = {\rm min} (u^2/2 - u^4/2 c^2 + u^6/6 c^4,c^2/6)$ where $c>0$ is a tuning parameter.

\begin{definition}
\label{Sestimator}
Let $(X_j,y_j) \in \mathbb{R}^{n \times (p_j + 1)}$ for $j=1,\ldots,m$ and let $\rho_0$ be a $\rho$-function with parameter $c_0$ in (C2). Then, the S-estimators of the SUR model $(\tilde{\mathcal{B}},\tilde{\Sigma})$ are the solutions that minimize $\abs{C}$ subject to the condition
\[ \frac{1}{n} \sum_{i=1}^n \rho_0 \left( \sqrt{e_i(B)^\top C^{-1} e_i(B)} \right) = \delta_0, \]
where the minimization is over all $B = {\rm bdiag}(b_1,\ldots,b_m) \in \mathbb{R}^{p \times m}$ and $C \in \text{PDS}(m)$ with PDS$(m)$ the set of positive definite and symmetric matrices of dimension $m \times m$. The determinant of $C$ is denoted by $|C|$ and $e_i(B)^\top$ represents the $i$th row of the residual matrix $Y - \tilde{X} B$.
\end{definition}

The constant $\delta_0$ can be chosen as $\delta_0 = {\rm E}_F[\rho_0(\norm{e})]$ to obtain a consistent estimator at an assumed error distribution $F$. Usually, the errors are assumed to follow a normal distribution with mean zero and then we can take $F \sim \mathcal{N}_m(0,I_m)$. As before, the regression coefficient estimates in the matrix $\tilde{\mathcal{B}}$ can also be collected in the vector $\tilde{\beta} = (\tilde{\beta}_1^\top,\ldots,\tilde{\beta}_m^\top)^\top$.

The first-order conditions corresponding to the above minimization problem yield the following fixed-point equations for S-estimators
\begin{equation}
\begin{gathered}
\label{EstimatingEquationS}
\tilde{\beta} = (X^\top (\tilde{\Sigma}^{-1} \kron \tilde{D}) X)^{-1} X^\top (\tilde{\Sigma}^{-1} \kron \tilde{D}) y \\
\tilde{\Sigma} = m (Y - \tilde{X} \tilde{\mathcal{B}})^\top \tilde{D} (Y - \tilde{X} \tilde{\mathcal{B}}) \left( \sum_{i=1}^n v_0(\tilde{d}_i) \right)^{-1}
\end{gathered}
\end{equation}
with diagonal matrix $\tilde{D} = {\rm diag}(w_0(\tilde{d}_1),\ldots,w_0(\tilde{d}_n))$ where $\tilde{d}_i^2 = e_i(\tilde{\mathcal{B}})^\top \tilde{\Sigma}^{-1} e_i(\tilde{\mathcal{B}}),$ $w_0(u) = \psi_0(u)/u$, $\psi_0(u) = \rho_0'(u)$ and $v_0(u) = \psi_0(u) u - \rho_0(u) + \delta_0$. Note the similarities with the GLS in~\eqref{beta_GLS} and the MLE in~\eqref{MLE}. The factor $w_0(\tilde{d}_i)$ can be interpreted as the weight that the estimator gives to the $i$th observation. A small (large) residual distance $\tilde{d}_i$ leads to a large (small) weight $w_0(\tilde{d}_i)$. The smaller the weight of an observation, the smaller its contribution to the SUR fit. To compute the S-estimates efficiently,~\citet{Hubert2017} developed the fastSUR algorithm based on the ideas of~\citet{Salibian2006b}.

The breakdown point of an estimator is the smallest fraction of the data that needs to be contaminated in order to drive the bias of the estimator to infinity. S-estimators with a bounded loss function, as we consider here, have a positive breakdown point~\citep{Lopuhaa1991,VanAelst2005}. Their asymptotic breakdown point equals $\varepsilon^* = \delta_0/\rho_0(c_0)$. The constant $\delta_0$ has been fixed to guarantee consistency, but the parameter $c_0$ can be tuned to obtain any desired breakdown point $0 < \varepsilon^* \leq 0.5$. Hence, S-estimators can attain the maximal breakdown point of $50\%$. S-estimators with a smaller value of $c_0$ downweight observations more heavily and correspond to a higher breakdown point.

S-estimators satisfy the first-order conditions of M-estimators~\citep[see][]{Huber1981}, so they are asymptotically normal. However, the choice of the tuning parameter $c_0$ involves a trade-off between breakdown point (robustness) and efficiency at the central model~\citep{Bilodeau2000}. For this reason, S-estimators are less adequate for robust inference. MM-estimators~\citep{Yohai1987} avoid this trade-off by computing an efficient M-estimator starting from a highly robust S-estimator~\citep[see, e.g.,][]{Kudraszow2011,Tatsuoka2000,VanAelst2013}. We now introduce MM-estimators for the SUR model.


\subsection{MM-estimators}

\noindent Let $\tilde{\Sigma}$ denote the S-estimator of covariance in Definition~\ref{Sestimator}. Decompose $\tilde{\Sigma}$ into a scale component $\tilde{\sigma}$ and a shape matrix $\tilde{\Gamma}$ such that $\tilde{\Sigma} = \tilde{\sigma}^2 \tilde{\Gamma}$ with $\abs{\tilde{\Gamma}}=1$.

\begin{definition}
\label{MMestimator}
Let $(X_j,y_j) \in \mathbb{R}^{n \times (p_j + 1)}$ for $j=1,\ldots,m$ and let $\rho_1$ be a $\rho$-function with parameter $c_1$ in (C2). Given the S-scale $\tilde{\sigma}$, MM-estimators of the SUR model $(\hat{\mathcal{B}},\hat{\Gamma})$ minimize
\[ \frac{1}{n} \sum_{i=1}^n \rho_1 \left( \frac{\sqrt{e_i(B)^\top G^{-1} e_i(B)}}{\tilde{\sigma}} \right), \]
over all $B = {\rm bdiag}(b_1,\ldots,b_m) \in \mathbb{R}^{p \times m}$ and $G \in \text{PDS}(m)$ with $\abs{G} = 1$. The MM-estimator for covariance is defined as $\hat{\Sigma} = \tilde{\sigma}^2 \hat{\Gamma}$.
\end{definition}

As before, the MM-estimator of the regression coefficients $\hat{\mathcal{B}}$ can also be written in vector form $\hat{\beta}=(\hat{\beta}_1^\top,\ldots,\hat{\beta}_m^\top)^\top$. Similarly as for S-estimators, the first-order conditions corresponding to the above minimization problem yield a set of fixed-point equations:
\begin{equation}
\begin{gathered}
\label{EstimatingEquationMM}
\hat{\beta} = (X^\top (\hat{\Sigma}^{-1} \kron D) X)^{-1} X^\top (\hat{\Sigma}^{-1} \kron D) y \\
\hat{\Sigma} = m (Y - \tilde{X} \hat{\mathcal{B}})^\top D (Y - \tilde{X} \hat{\mathcal{B}}) \left( \sum_{i=1}^n \psi_1(d_i) d_i \right)^{-1}
\end{gathered}
\end{equation}
with $D = {\rm diag}(w_1(d_1),\ldots,w_1(d_n))$ where $d_i^2 = e_i(\hat{\mathcal{B}})^\top \hat{\Sigma}^{-1} e_i(\hat{\mathcal{B}})$, $w_1(u) = \psi_1(u)/u$ and $\psi_1(u) = \rho_1'(u)$. Starting from the initial S-estimates, the MM-estimates are calculated easily by iterating these estimating equations until convergence.

MM-estimators inherit the breakdown point of the initial S-estimators. Hence, they can attain the maximal breakdown point if initial high-breakdown point S-estimators are used. Moreover, since MM-estimators also satisfy the first-order conditions of M-estimators, they are asymptotically normal. In the supplementary material it is shown that the asymptotic efficiency of $\hat{\beta}$ does not depend on the $\rho$-function $\rho_0$ of the initial S-estimator. Therefore, the breakdown point and the efficiency of MM-estimators can be tuned independently. That is, the tuning constant $c_0$ in $\rho_0$ can be chosen to obtain an S-scale estimator with maximal breakdown point, while the constant $c_1 (> c_0)$ in $\rho_1$ is tuned to attain a desired efficiency, e.g., $90\%$, at the central model with normal errors. Note that while MM-estimators have maximal breakdown point, there is some loss of robustness because the bias due to contamination is generally higher as compared to S-estimators~\citep[see, e.g.,][]{Berrendero2007}.


\section{Fast and Robust Bootstrap}
\label{Fast and Robust Bootstrap}

\noindent The asymptotic distribution of MM-estimators can be used to obtain inference for the parameters in the SUR model based on their MM-estimates. However, these asymptotic results are only reasonable for sufficiently large samples and rely on the assumption of elliptically symmetric errors which does not necessarily hold in practice. The bootstrap offers an alternative approach that requires less assumptions. Unfortunately, for robust estimators the standard bootstrap procedure lacks speed and robustness. The standard bootstrap is computer intensive because many bootstrap replicates are needed and the fastSUR algorithm is itself already computationally intensive. Moreover, classical bootstrap does not yield robust inference results. Indeed, due to the resampling with replacement, the proportion of outlying observations varies among bootstrap samples. Some bootstrap samples thus contain a majority of outliers, resulting in breakdown of the estimator. These estimates affect the bootstrap distribution leading to unreliable inference. Therefore, we use the fast and robust bootstrap introduced by~\citet{Salibian2002} and generalized in e.g., \citet{Salibian2006} and~\citet{Peremans2017}.

Consider an estimator $\hat{\theta}$ of a parameter $\theta$ that satisfies the fixed-point equations $g(\hat{\theta})=\hat{\theta}$ where the function $g$ depends on the given sample. For a bootstrap sample it equivalently holds that $g^*(\hat{\theta}^*)=\hat{\theta}^*$. Now, consider $g^*(\hat{\theta})$ as a first-step approximation of the bootstrap estimate $\hat{\theta}^*$. These first-step approximations underestimate the variability of the bootstrap distribution since the starting value is the same for all bootstrap approximations. To remedy this deficiency a linear correction factor can be derived from a Taylor expansion of $g^*(\hat{\theta}^*)$. This yields the fast and robust bootstrap (FRB) estimator, given by
\[ \hat{\theta}^{R*} = \hat{\theta} + (I - \nabla g(\hat{\theta}) )^{-1} (g^*(\hat{\theta}) - \hat{\theta}), \]
with $\nabla g(\hat{\theta})$ the gradient of $g$ evaluated at $\hat{\theta}$. Consistency of $\hat{\theta}^{R*}$ has been discussed in detail by~\citet{Salibian2002,Salibian2006}. The FRB estimator is computationally much more efficient because the first-step approximations are easy to compute and the linear correction term needs to be calculated only once, since it depends only on the original sample. Moreover, for a robust estimator the fixed-point equations usually correspond to a weighted version of the corresponding equations for the non-robust MLE or generalized least squares estimator. The weights in the equations downweight outlying observations. In such case, the FRB estimator is robust because no matter how many times an outlying observation appears in a bootstrap sample, it receives the same low weight as in the original sample since the weights depend on the estimate $\hat{\theta}$ corresponding to the original sample.

To apply the FRB to the S and MM-estimators for the SUR model, we rewrite the estimating equations of S-estimators in~\eqref{EstimatingEquationS} as
\begin{align*}
g_4(\tilde{\beta},\tilde{\Sigma}) \quad &= \quad (X^\top ( \tilde{\Sigma}^{-1} \kron \tilde{D}) X)^{-1} X^\top ( \tilde{\Sigma}^{-1} \kron \tilde{D}) y \\
g_3(\tilde{\beta},\tilde{\Sigma}) \quad &= \quad m (Y - \tilde{X} \tilde{\mathcal{B}})^\top \tilde{D} (Y - \tilde{X} \tilde{\mathcal{B}}) \left( \sum_{i=1}^n v_0(\tilde{d}_i) \right)^{-1}
\end{align*}
where $\tilde{D} = {\rm diag}(w_0(\tilde{d}_1),\ldots,w_0(\tilde{d}_n))$, $\tilde{d}_i^2 = \tilde{e}_i( \tilde{\mathcal{B}})^\top \tilde{\Sigma}^{-1} \tilde{e}_i( \tilde{\mathcal{B}})$. Similarly, we rewrite the estimating equations~\eqref{EstimatingEquationMM} of MM-estimators as
\begin{align*}
g_1(\hat{\beta},\hat{\Gamma},\tilde{\Sigma}) \quad &= \quad (X^\top ( \hat{\Gamma}^{-1} \kron D) X)^{-1} X^\top ( \hat{\Gamma}^{-1} \kron D) y \\
g_2(\hat{\beta},\hat{\Gamma},\tilde{\Sigma}) \quad &= \quad \phi ( (Y - \tilde{X} \hat{\mathcal{B}})^\top D (Y - \tilde{X} \hat{\mathcal{B}}) )
\end{align*}
where $D = {\rm diag}(w_1(d_1),\ldots,w_1(d_n))$, $d_i^2 = \abs{\tilde{\Sigma}}^{-1/m} e_i(\hat{\mathcal{B}})^\top \hat{\Gamma}^{-1} e_i(\hat{\mathcal{B}})$, and $\phi(A) = \abs{A}^{-1/m} A$ for an $m \times m$ matrix $A$. Now, let $\hat{\theta}=(\hat{\beta}^\top,{\rm vec}(\hat{\Gamma})^\top,{\rm vec}(\tilde{\Sigma})^\top,\tilde{\beta}^\top)^\top$ be the vector which combines the S and MM-estimates for the SUR model and let
\begin{equation}
\label{FRBgfunctionMMestimator}
g(\hat{\theta}) = (g_1(\hat{\beta},\hat{\Gamma},\tilde{\Sigma})^\top,g_2(\hat{\beta},\hat{\Gamma},\tilde{\Sigma})^\top,g_3(\tilde{\beta},\tilde{\Sigma})^\top,g_4(\tilde{\beta},\tilde{\Sigma}))^\top.
\end{equation}
Then, we have that $g(\hat{\theta})=\hat{\theta}$. Expressions for the partial derivatives in $\nabla g$ can be found in the supplementary material.

Based on the FRB estimates $\hat{\theta}^{R*}$ confidence intervals for the model parameters can be constructed by using standard bootstrap techniques. This is shown in more detail in the supplementary material. In the next sections we construct robust test procedures for the SUR model and show how FRB can be used to estimate their null distribution.


\section{Robust Tests for the Regression Parameters}
\label{Robust Tests for the Regression Parameters}

\noindent Consider the following general null and alternative hypothesis with respect to the regression parameters in the SUR model
\begin{equation}
\label{HypothesisTestbeta}
H_0: R \beta = q \quad \text{vs} \quad H_1: R \beta \neq q,
\end{equation}
for some $R \in \mathbb{R}^{r \times p}$ and $q \in \mathbb{R}^r$. Here $r \leq p$ represents the number of linear restrictions on the regression parameters under the null hypothesis. For example, for $R = (0,\ldots,0,1)$ and $q = 0$ the null hypothesis simplifies to $\beta_{p_mm} = 0$. Note that the null hypothesis can restrict regression parameters of different blocks, e.g., $H_0: \beta_{11} = \beta_{12}$.

For maximum likelihood estimation, the standard test statistic is the well-known likelihood-ratio statistic. With the log-likelihood in~\eqref{logLikelihood} it is given by
\[ \Lambda_{\text{MLE}} = -n \ln \left( \frac{\abs{\hat{\Sigma}_{\text{MLE}}}}{\abs{\hat{\Sigma}_{\text{MLE},r}}} \right), \] 
where $\hat{\Sigma}_{\text{MLE}}$ is the MLE in the full model and $\hat{\Sigma}_{\text{MLE},r}$ the MLE in the restricted model under the null hypothesis. Under the null hypothesis the test statistic is asymptotically chi-squared distributed with $r$ degrees of freedom. See, e.g., \citet{Henningsen2007} for more details on standard test statistics (such as Wald and F-statistics) in SUR models.

A robust likelihood-ratio type test statistic corresponding to MM-estimators can be obtained by using the plug-in principle. Let $\hat{\Sigma}$ denote the unrestricted scatter MM-estimator and $\hat{\Sigma}_r$ the restricted MM-estimator. Then, the robust likelihood-ratio statistic becomes
\begin{equation}
\label{RobustLikelihoodRatioS}
\Lambda_{\text{S}} = -n \ln \left( \frac{\abs{\hat{\Sigma}}}{\abs{\hat{\Sigma}_r}} \right) = - 2nm \ln \left( \frac{\tilde{\sigma}}{\tilde{\sigma}_r} \right),
\end{equation}
with $\tilde{\sigma}$ and $\tilde{\sigma}_r$ the scale S-estimators of the full and null model, respectively. Similarly to $\Lambda_{\text{MLE}}$, the test statistic $\Lambda_{\text{S}}$ is nonnegative, since $\tilde{\sigma} \leq \tilde{\sigma}_r$ by definition of the S-estimators.

The test statistic $\Lambda_{\text{S}}$ in~\eqref{RobustLikelihoodRatioS} only depends on S-scale estimators. Hence, the low efficiency of S-estimators may affect the efficiency of tests based on $\Lambda_{\text{S}}$. In the linear regression context,~\citet{VanAelst2013b} recently introduced an efficient MM-scale estimator corresponding to regression MM-estimators. Analogously, we propose to update the S-estimator of scale $\tilde{\sigma}$ in the SUR model by a more efficient M-scale $\hat{\sigma}$, defined as
\[ \hat{\sigma} = \tilde{\sigma} \sqrt{\frac{1}{n \delta_1} \sum_{i=1}^n \rho_1 \left( \frac{\sqrt{e_i(\hat{\mathcal{B}})^\top \hat{\Gamma}^{-1} e_i(\hat{\mathcal{B}})}}{\tilde{\sigma}} \right)}. \]
Similarly to $\delta_0$, the constant $\delta_1$ can be chosen as $\delta_1 = {\rm E}_F [\rho_1(\norm{e})]$ to obtain a consistent estimator at the assumed error distribution $F$, e.g., $F \sim \mathcal{N}_m(0,I_m)$. The likelihood-ratio type test statistic corresponding to this MM-scale estimator is then defined as
\begin{equation}
\label{RobustLikelihoodRatioM}
\Lambda_{\text{MM}} = - 2nm \ln \left( \frac{\hat{\sigma}}{\hat{\sigma}_r} \right).
\end{equation}

Results on the asymptotic distribution and influence function of these test statistics are provided in the supplementary material. Since the asymptotic distribution is only useful for sufficiently large samples, we consider FRB as an alternative to estimate the null distribution of the test statistics. However, since likelihood-ratio type test statistics converge at a higher rate than the estimators themselves, a standard application of FRB leads to an inconsistent estimate of the null distribution of the test statistic~\citep{VanAelst2011}. To overcome this issue, the test statistic $\Lambda_{\text{S}}$ in~\eqref{RobustLikelihoodRatioS} is rewritten as
\begin{equation}
\label{RobustLikelihoodRatioS_consistent}
\Lambda_{\text{S}} = - 2nm \ln \left( \frac{\tilde{s}(\tilde{\mathcal{B}},\tilde{\Gamma})}{\tilde{s}(\tilde{\mathcal{B}}_r,\tilde{\Gamma}_r)} \right),
\end{equation}
where $(\tilde{\mathcal{B}},\tilde{\Gamma})$ and $(\tilde{\mathcal{B}}_r,\tilde{\Gamma}_r)$ are the S-estimators in the full and null model respectively and where $\tilde{s}(B,G)$ is the multivariate M-estimator of scale corresponding to a given $B \in \mathbb{R}^{p \times m}$ and $G \in \text{PDS}(m)$ with $\abs{G}=1$. That is, $\tilde{s}(B,G)$ is the solution of
\begin{equation}
\label{MultivariateMscale}
\frac{1}{n} \sum_{i=1}^n \rho_0 \left( \frac{\sqrt{e_i(B)^\top G^{-1} e_i(B)}}{\tilde{s}(B,G)} \right) = \delta_0.
\end{equation} 
Similarly, the MM-based test statistic $\Lambda_{\text{MM}}$ in~\eqref{RobustLikelihoodRatioM} is rewritten as
\begin{equation}
\label{RobustLikelihoodRatioM_consistent}
\Lambda_{\text{MM}} = - 2nm \ln \left( \frac{\hat{s}(\tilde{\mathcal{B}},\tilde{\Gamma},\hat{\mathcal{B}},\hat{\Gamma})}{\hat{s}(\tilde{\mathcal{B}}_r,\tilde{\Gamma}_r,\hat{\mathcal{B}}_r,\hat{\Gamma}_r)} \right),
\end{equation}
where
\begin{equation}
\label{GeneralizedMultivariateMscale}
\hat{s}(\tilde{\mathcal{B}},\tilde{\Gamma},\hat{\mathcal{B}},\hat{\Gamma}) = \tilde{s}(\tilde{\mathcal{B}},\tilde{\Gamma}) \sqrt{\frac{1}{n \delta_1} \sum_{i=1}^n \rho_1 \left( \frac{\sqrt{e_i(\hat{\mathcal{B}})^\top \hat{\Gamma}^{-1} e_i(\hat{\mathcal{B}})}}{\tilde{s}(\tilde{\mathcal{B}},\tilde{\Gamma})} \right)}.
\end{equation}

Let $\hat{\theta} = (\hat{\beta}^\top,{\rm vec}(\hat{\Gamma})^\top,{\rm vec}(\tilde{\Gamma})^\top,\tilde{\beta}^\top)^\top$ contain the S and MM-estimators of the regression coefficients and error shape matrices for the full model and let $\hat{\theta}_r$ contain the corresponding estimators for the reduced model. Denote $\hat{\Theta} = (\hat{\theta},\hat{\theta}_r)$, then both test statistics can be written in the general form
\[ \Lambda_. = h(\hat{\Theta}), \]
where the dot in the subscript can be either S or MM and the function $h$ is determined by~\eqref{RobustLikelihoodRatioS_consistent}-\eqref{MultivariateMscale} or~\eqref{RobustLikelihoodRatioM_consistent}-\eqref{GeneralizedMultivariateMscale}, respectively. The FRB approximation for the null distribution of this test statistic then consists of the values
\[ \Lambda_.^{R*} = h^{*}(\hat{\Theta}^{R*}), \]
where $\hat{\Theta}^{R*} = (\hat{\theta}^{R*},\hat{\theta}_r^{R*})$ are the FRB approximations for the regression and shape estimates in the bootstrap samples. It can be checked that the function $h$ satisfies the condition
\begin{equation}
\label{consistencycondition}
\nabla h(\hat{\Theta}) = o_p(1),
\end{equation}
so the partial derivatives of $h$ vanish asymptotically. This condition guarantees that the FRB procedure consistently estimates the null distribution of the test statistic, as shown in~\citet{VanAelst2011}. Note that the FRB procedure for hypothesis tests is computationally less efficient than for the construction of confidence intervals (see supplementary material) because the S-scales of the full and null model have to be computed by an iterative procedure for each of the bootstrap samples. However, the increase in computation time is almost negligible compared to the time needed by the standard (non-robust) bootstrap for these robust estimators.

Bootstrapping a test statistic to estimate its null distribution requires that the bootstrap samples follow the null hypothesis, even when this hypothesis does not hold in the original data. Therefore, we first construct null data that approximately satisfy the null hypothesis, regardless of the hypothesis that holds in the original data. According to~\citet{Salibian2016}, for the linear constraints in~\eqref{HypothesisTestbeta} null data for $\Lambda_{\text{MM}}$ can be constructed as
\[ (\tilde{X}^{(0)},Y^{(0)}) = (\tilde{X},\tilde{X} \hat{\mathcal{B}}_r + E), \]
with $E = Y - \tilde{X} \hat{\mathcal{B}}$ the residuals in the full model. Bootstrap samples are now generated by sampling with replacement from the null data $(\tilde{X}^{(0)},Y^{(0)})$. Let $(\hat{\mathcal{B}}^{(0)},\hat{\Sigma}^{(0)})$ denote the MM-estimates for the null data in the full model and let $(\hat{\mathcal{B}}_r^{(0)},\hat{\Sigma}_r^{(0)})$ denote the MM-estimates for the null data in the restricted model. Due to affine equivariance we have that $(\hat{\mathcal{B}}^{(0)},\hat{\Sigma}^{(0)}) = (\hat{\mathcal{B}}_r,\hat{\Sigma})$, so these estimates can be obtained without extra computations. However, the estimates for the reduced model cannot be derived from equivariance properties and need to be computed from the transformed data. Similarly, null data can be constructed for $\Lambda_{\text{S}}$. Finally, when $N$ FRB recalculated values $\Lambda_.^{R*}$ of the test statistic have been calculated based on the null data, then the corresponding FRB p-value is given by
\begin{equation}
\label{p-value}
\text{p-value}= \frac{(\# \Lambda_.^{R*} > \Lambda_.) + 1}{N + 2},
\end{equation}
where $\Lambda_.$ is the value of the test statistic at the original sample.


\section{Robust Test for Diagonality of the Covariance Matrix}
\label{Robust Test for Diagonality of the Covariance Matrix}

\noindent The key feature of the SUR model is the existence of contemporaneous correlation, corresponding to a non-diagonal covariance matrix $\Sigma$. If the covariance matrix is diagonal the SUR model simplifies to $m$ unrelated regression models. Therefore, by testing for diagonality of $\Sigma$ the necessity of a SUR model is evaluated.

Consider the following hypotheses
\begin{equation}
\label{HypothesisTestSigma}
H_0: \Sigma \text{ is diagonal } \quad \text{vs} \quad H_1: \Sigma \text{ is not diagonal}.
\end{equation}
A popular diagonality test for the standard SUR model is the Breusch-Pagan test~\citep{Breusch1980} which is based on the Lagrange multiplier idea~\citep{Baltagi2008}. It measures the total sum of squared correlations:
\[ \text{LM}_{\text{MLE}} = n \sum_{j<k} r_{jk}^2, \]
with $r_{jk}$ the elements of the sample correlation matrix of the residual vectors $\hat{\varepsilon}_j$, $j=1,\ldots,m$. Here, each $\hat{\varepsilon}_j$ is the residual vector corresponding to a single-equation LS fit in block $j$. Under the null hypothesis $\text{LM}_{\text{MLE}}$ is asymptotically chi-squared distributed with $m(m-1)/2$ degrees of freedom. Evidently, the LS based Breusch-Pagan test is vulnerable to outliers in the data. Therefore, we introduce robust Breusch-Pagan type tests.

Contrary to the classical estimators, the S and MM-estimators in a SUR model do not simplify to their univariate analogues under the null hypothesis. However, to calculate the restricted estimates the S and MM-estimators and corresponding fastSUR algorithm can be adapted such that the equations for the off-diagonal elements of the covariance matrix are excluded. For example, in case of MM-estimators the estimating equations become
\begin{equation*}
\begin{gathered}
\hat{\beta}_r = (X^\top (\hat{\Sigma}_r^{-1} \kron D) X)^{-1} X^\top (\hat{\Sigma}_r^{-1} \kron D) y \\
\hat{\sigma}_{r,jj} = m \left( \sum_{i=1}^n w_1(d_i) e_{ij}^2(\hat{\mathcal{B}}_r) \right) \left( \sum_{i=1}^n \psi_1(d_i) d_i \right)^{-1}
\end{gathered}
\end{equation*}
for $j=1,\ldots,m$ and with $D = {\rm diag}(w_1(d_1),\ldots,w_1(d_n))$ where $d_i^2 = e_i(\hat{\mathcal{B}}_r)^\top \hat{\Sigma}_r^{-1} e_i(\hat{\mathcal{B}}_r)$. The restricted covariance matrix estimates $\tilde{\Sigma}_r$ and $\hat{\Sigma}_r$ under $H_0$ then become diagonal matrices as needed. Since the tuning constants of the $\rho$-functions are kept fixed, the reduced estimators $(\tilde{\beta}_r,\tilde{\Sigma}_r,\hat{\beta}_r,\hat{\Sigma}_r)$ also have the same breakdown-point and efficiency level as their counterparts in the full model. Moreover, the multivariate structure is not lost, i.e., we still obtain a single weight for each observation across all blocks.

Based on the restricted estimators, we now estimate the correlation between the errors of block $j$ and $k$ as
\[ r_{jk} = \frac{\sum_{i=1}^n w_1(d_i) e_{ij}(\hat{\mathcal{B}}_r) e_{ik}(\hat{\mathcal{B}}_r)}{\sqrt{\left( \sum_{i=1}^n w_1(d_i) e_{ij}^2(\hat{\mathcal{B}}_r) \right) \left( \sum_{i=1}^n w_1(d_i) e_{ik}^2(\hat{\mathcal{B}}_r) \right)}}, \]
with $d_i^2 = e_i( \hat{\mathcal{B}}_r)^\top \hat{\Sigma}_r^{-1} e_i( \hat{\mathcal{B}}_r)$. Based on these correlation estimates we propose a robust Breusch-Pagan test statistic:
\begin{equation}
\label{RobustLM}
\text{LM}_{\text{MM}} = n \sum_{j<k} r_{jk}^2.
\end{equation}
Note that $\text{LM}_{\text{MM}}$ is nonnegative. Similarly, a robust Breusch-Pagan test based on S-estimators, denoted by $\text{LM}_{\text{S}}$, can be defined as well, but it will not benefit from the gain in efficiency of MM-estimators.

From their asymptotic chi-squared distribution (see the supplementary material) non-robust p-values may be derived. Alternatively, FRB can again be used to estimate the null distribution of the test statistics. Note that the robust Breusch-Pagan test statistic only requires the estimates in the restricted model as can be expected for a Lagrange multiplier test. Let $\hat{\theta}_r$ denote the vector that collects all S and MM-estimators in the restricted model. Based on the FRB approximations $\hat{\theta}_r^{R*}$, bootstrap replications for the null distribution of $\text{LM}_{\text{MM}}$ can be generated as
\[ \text{LM}_{\text{MM}}^{R*} = n \sum_{j<k} (r_{jk}^{R*})^2, \]
with
\[ r_{jk}^{R*} = \frac{\sum_{i=1}^n w_1(d_i^{R*}) e_{ij}(\hat{\mathcal{B}}_r^{R*}) e_{ik}(\hat{\mathcal{B}}_r^{R*})}{\sqrt{\left( \sum_{i=1}^n w_1(d_i^{R*}) e_{ij}^2(\hat{\mathcal{B}}_r^{R*}) \right) \left( \sum_{i=1}^n w_1(d_i^{R*}) e_{ik}^2(\hat{\mathcal{B}}_r^{R*}) \right)}}, \]
where $(d_i^{R*})^2 = e_i(\hat{\mathcal{B}}_r^{R*})^\top (\hat{\Sigma}_r^{R*})^{-1} e_i( \hat{\mathcal{B}}_r^{R*})$, and similarly for $\text{LM}_{\text{S}}$. It is straightforward to check that the consistency condition in~\eqref{consistencycondition} holds under $H_0$ for these test statistics, where $h$ is now defined through~\eqref{RobustLM}. Hence, the FRB procedure consistently estimates the null distribution of the test statistics.

To make sure that the bootstrap samples satisfy the null hypothesis, we generate bootstrap samples from the following transformed data
\begin{equation*}
(\tilde{X}^{(0)},Y^{(0)}) = (\tilde{X},\tilde{X} \hat{\mathcal{B}} + E \hat{\Sigma}^{-1/2}),
\end{equation*}
with $E = Y - \tilde{X} \hat{\mathcal{B}}$ the residuals in the full model. The residuals $E$ of the full SUR model are possibly correlated across blocks. By transforming these residuals with $\hat{\Sigma}^{-1/2}$, this correlation is removed and it can be expected that for the transformed data
\begin{equation}
\label{nullmatrix}
\hat{\Sigma}^{(0)} \approx I_m \quad \text{and} \quad \hat{\Sigma}_r^{(0)} \approx I_m,
\end{equation}
regardless of the hypothesis that holds in the original data. Note that in the SUR model we cannot rely on equivariance properties to obtain the identity matrix exactly because the model is only affine equivariant for transformations within blocks. However, extensive empirical investigation confirmed that~\eqref{nullmatrix} holds for the transformed data, and the corresponding value of the test statistic $\text{LM}_{\text{MM}}^{(0)}$ indeed becomes approximately zero. Similarly, null data can be created for $\text{LM}_{\text{S}}$ as well.


\section{Finite-Sample Performance}
\label{Finite-Sample Performance}

We now investigate by simulation the performance of FRB tests based on the robust likelihood-ratio test statistics $\Lambda_{\text{S}}$ and $\Lambda_{\text{MM}}$ and the robust Breusch-Pagan statistics $\text{LM}_{\text{S}}$ and $\text{LM}_{\text{MM}}$. The tests are performed at the $5\%$ significance level. We study both the efficiency of the tests under the null hypothesis and the power under the alternative as well as their robustness.

In the SUR model, bootstrap samples can be obtained by either case (row) resampling from the original sample $(\tilde{X},Y)$ or by resampling the $m$-dimensional residuals $e_i$, $i=1,\ldots,n$. While the results in the previous sections hold for both types of bootstrapping, in this paper we use case resampling which is a more nonparametric approach than the model based error resampling.

Consider first the following hypothesis test in a SUR model:
\begin{equation}
\label{HypothesisTestbeta_simulation}
H_0: \beta_{p_mm} = 0 \quad \text{vs} \quad H_1: \beta_{p_mm} \neq 0.
\end{equation}
To investigate the efficiency of the test procedures, data are simulated under the null hypothesis. Observations are generated according to a SUR model with three blocks ($m=3$) and two predictors (as well as an intercept) in each block. Hence, there are $p=9$ regression coefficients in the model. The predictor variables are generated independently from a standard normal distribution. The $p$-dimensional vector of regression coefficients equals $\beta = (1,\ldots,1,0)^\top$ such that the null hypothesis holds. The covariance matrix $\Sigma$ is taken to be a correlation matrix with all correlations equal to 0.5. The multivariate errors are generated from either $\mathcal{N}_m(0,\Sigma)$ or $t_m(0,\Sigma)$ (a multivariate elliptical t-distribution with mean zero and scatter $\Sigma$) with 3 degrees of freedom. To investigate the robustness of the procedure we also considered contaminated data. We have generated the worst possible type of outliers, namely bad leverage points, by replacing in each block all the regressors of the first 10\% or 30\% of the observations by uniform values between -10 and -5 and by adding to each of the corresponding original responses a value that is normally distributed with mean 20 and variance 1.

Robust S-estimators and MM-estimators with maximal breakdown point of 50\% are computed. The MM-estimator is tuned to have 90\% efficiency. The null distribution of both $\Lambda_{\text{S}}$ and $\Lambda_{\text{MM}}$ are estimated by FRB as explained in Section~\ref{Robust Tests for the Regression Parameters}, using $N=1000$ bootstrap samples. The corresponding p-values are obtained as in~\eqref{p-value}. For each simulation setting 1000 random samples are generated for sample sizes $n=25,50,75,100,150,200,250$ and 300 (recall that $n$ represents the number of observations per block). Figure~\ref{paper_simulationFRBsurLR_betatestH0} shows the empirical level of the two tests for both clean and contaminated data.
\begin{figure}[!ht]
\centering
\includegraphics[width=\textwidth, trim= 0mm 101.6mm 0mm 0mm, clip]{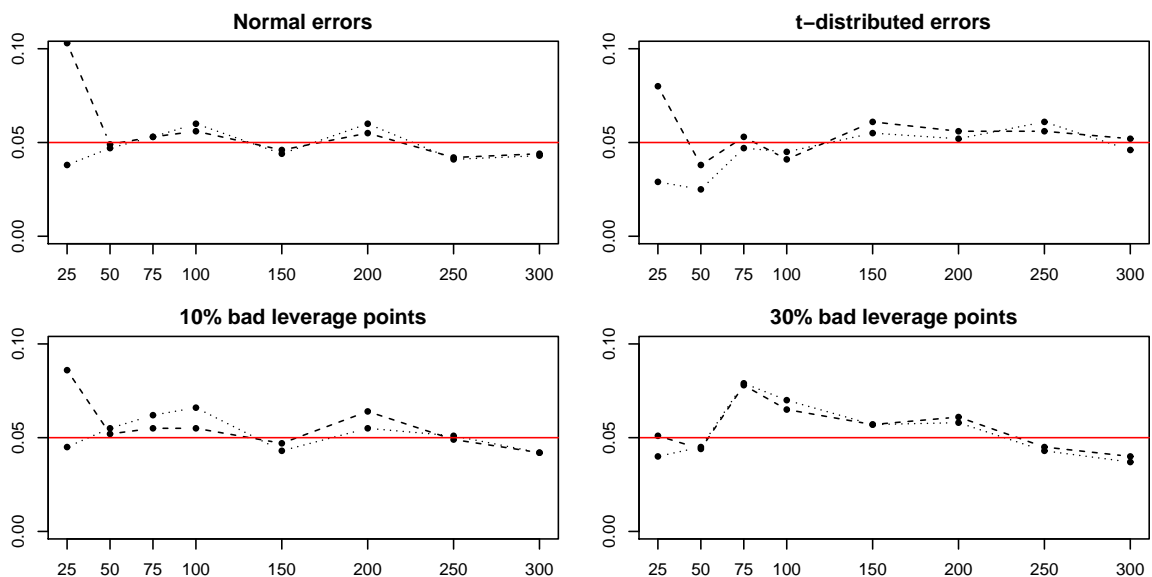}
\caption{Rejection rates of the hypothesis test in~\eqref{HypothesisTestbeta_simulation} based on the test statistics $\Lambda_{\text{S}}$ (dashed) and $\Lambda_{\text{MM}}$ (dotted). The solid (red) line represents the rejection level of 5\%.}
\label{paper_simulationFRBsurLR_betatestH0}
\end{figure}
It can be seen that the empirical levels are close to the 5\% nominal level in most cases. The difference between $\Lambda_{\text{S}}$ and $\Lambda_{\text{MM}}$ is mainly seen when the sample size is small. Indeed, for $n=25$, the test using $\Lambda_{\text{MM}}$ performs better than when $\Lambda_{\text{S}}$ is used. Note that outliers in the data only have a limited effect on the rejection rates, showing robustness of the level of the FRB tests.

To investigate the power of the robust tests, we have simulated data sets under the alternative hypothesis. In Figure~\ref{paper_simulationFRBsurLR_betatestH1} we show the power of the tests for samples of size $n=100$ with $\beta = (1,\ldots,1,d)^\top$ where $d$ ranges from 0 to 0.5 with step length 0.1.
\begin{figure}[!ht]
\centering
\includegraphics[width=\textwidth, trim= 0mm 152.4mm 0mm 0mm, clip]{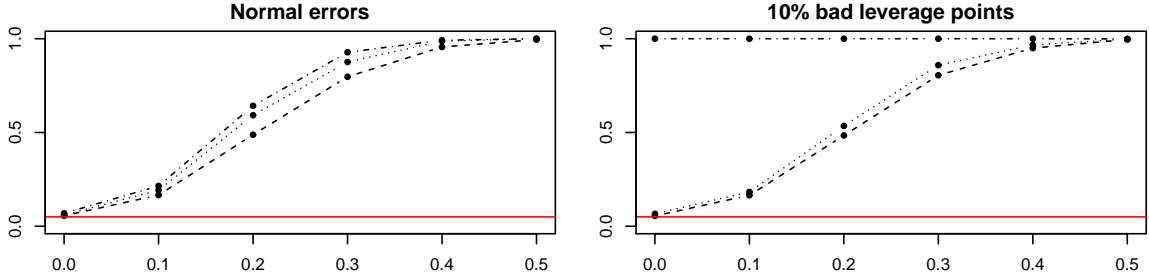}
\caption{Power curves of the hypothesis test in~\eqref{HypothesisTestbeta_simulation} based on the test statistics $\Lambda_{\text{S}}$ (dashed), $\Lambda_{\text{MM}}$ (dotted) and $\Lambda_{\text{MLE}}$ (dash-dotted). The solid (red) line represents the rejection level of 5\%.}
\label{paper_simulationFRBsurLR_betatestH1}
\end{figure}
From the left plot we see that the power increases quickly when $d$ becomes larger. The power of the robust tests is only slightly lower than for the classical test in the non-contaminated setting. Moreover, the power of the $\Lambda_{\text{MM}}$ test is (slightly) higher than for the $\Lambda_{\text{S}}$ test. The plot on the right shows that the classical test completely fails if the data is contaminated with 10\% of bad leverage points. On the other hand, the robust tests are not affected much by the contamination and yield similar power curves as in the case without contamination.

Let us now consider the test for diagonality of the covariance matrix in~\eqref{HypothesisTestSigma}. First, data are generated under the null hypothesis, i.e., data are simulated as in the previous section, but the multivariate errors are generated from either $\mathcal{N}_m(0,\Sigma)$ or from $t_m(0,\Sigma)$ with $\Sigma$ the identity matrix. The LM test statistic corresponding to both S and MM-estimators is computed. As before, 1000 data sets were generated for each setting. In Figure~\ref{paper_simulationFRBsurLM_diagtestH0} the rejection rates are plotted as a function of sample size for the four cases considered (normal errors, t-distributed errors, 10\% contamination and 30\% contamination).
\begin{figure}[!ht]
\centering
\includegraphics[width=\textwidth, trim= 0mm 101.6mm 0mm 0mm, clip]{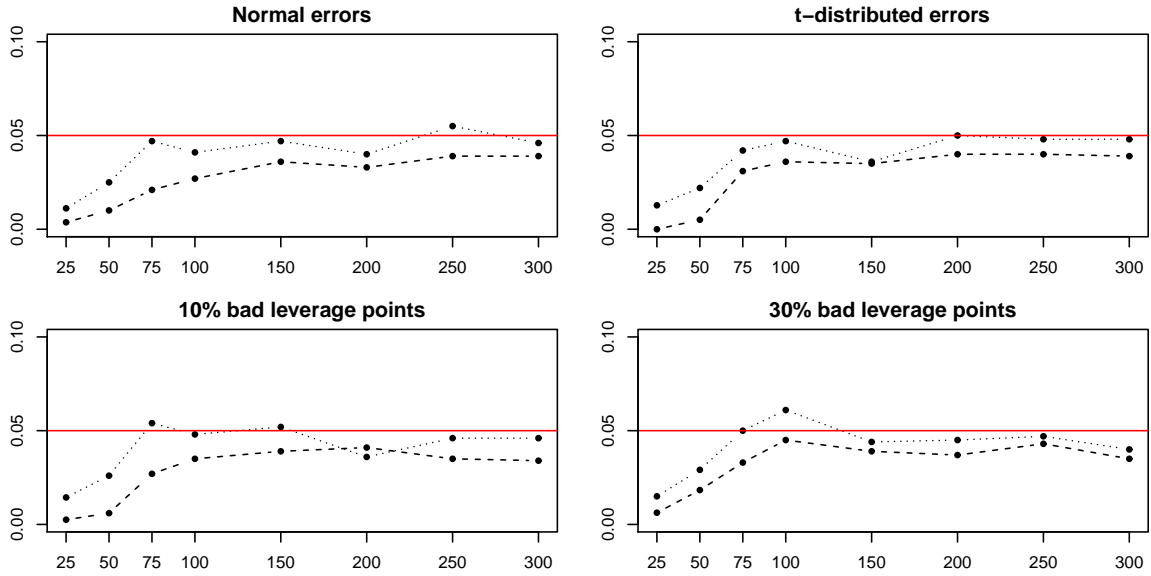}
\caption{Rejection rates of the hypothesis test in~\eqref{HypothesisTestSigma} based on the test statistics $\text{LM}_{\text{S}}$ (dashed) and $\text{LM}_{\text{MM}}$ (dotted). The solid (red) line represents the rejection level of 5\%.}
\label{paper_simulationFRBsurLM_diagtestH0}
\end{figure}
The rejection rates in the different cases behave similar. The lower efficiency of S-estimators becomes apparent as the empirical levels of $\text{LM}_{\text{S}}$ are lower in all (but one) cases. For small sample sizes the nominal level is clearly underestimated, but for MM-estimation the nominal level is already reached for $n \geq 75$. The efficiency of the tests is not much affected by heavy tailed errors or contamination which confirms their robustness under the null hypothesis.

To investigate the power of the test procedures, data were simulated under the alternative hypothesis as well. To this end, $\Sigma$ was set equal to an equicorrelation matrix with correlation $\tau$ taking values from 0 to 0.5 with step length 0.1 for the case $n=100$.
\begin{figure}[ht!]
\centering
\includegraphics[width=\textwidth, trim= 0mm 152.4mm 0mm 0mm, clip]{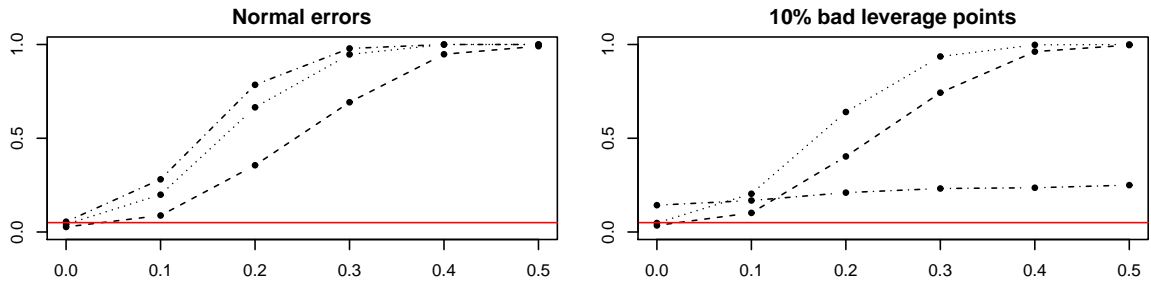}
\caption{Power curves of the hypothesis test in~\eqref{HypothesisTestSigma} based on the test statistics $\text{LM}_{\text{S}}$ (dashed), $\text{LM}_{\text{MM}}$ (dotted) and $\text{LM}_{\text{MLE}}$ (dash-dotted). The solid (red) line represents the rejection level of 5\%.}
\label{paper_simulationFRBsurLM_diagtestH1}
\end{figure}
The left plot in Figure~\ref{paper_simulationFRBsurLM_diagtestH1} shows the resulting power curves of the classical and robust Breusch-Pagan tests. We see that the test based on MM-estimators performs almost as well as the classical Breusch-Pagan test. For $\tau=0.4$ the empirical level of $\text{LM}_{\text{MM}}$ reaches almost one. The test based on S-estimators performs less well in this setting with $m=3$ blocks. However, we have noted that the performance of $\text{LM}_{\text{S}}$ increases with the number of blocks $m$ in the SUR model. For larger block sizes the difference with $\text{LM}_{\text{MM}}$ becomes negligible. The right plot in Figure~\ref{paper_simulationFRBsurLM_diagtestH1} shows that the classical Breusch-Pagan test cannot handle contamination, resulting in a drastic loss of power. On the other hand, the power of the robust tests is not affected much by the bad leverage points, resulting in power curves that are similar to the uncontaminated case. This setting where $\Sigma$ is an equicorrelation matrix can be considered to be a strong deviation from diagonality because the deviation is present in all covariance elements. Therefore, we also investigated the power of the diagonality test for other structures of $\Sigma$. It turns out that the comparison between the three tests remains the same for other settings. The power curves for the case where only one covariance deviates from zero are given in the supplementary material.


\section{Example: Grunfeld Data}
\label{Example: Grunfeld Data}

\noindent As an illustration we consider the well-known Grunfeld data (see, e.g., \citet{Bilodeau2000}). This dataset contains information on the annual gross investment of 10 large U.S.\ corporations for the period 1935-1954. The recorded response is the annual gross investment of each corporation (Investment). Two predictor variables have been measured as well, which are the value of outstanding shares at the beginning of the year (Shares) and the beginning-of-year real capital stock (Capital). One may expect that within the same year the activities of one corporation can affect the others. Hence, the SUR model seems to be appropriate. Unfortunately, the classical and robust estimators of the covariance matrix become singular when all 10 companies are considered. Therefore, we only focus on the measurements of three U.S.\ corporations: General Electric (GE), Westinghouse (W) and Diamond Match (DM). General Electric and Westinghouse are active in the same field of industry and thus their activities can highly influence each other. Since the interest is in modeling dependencies between the corporations within the same year, a SUR model with three blocks is considered. The model is given by
\begin{equation}
\label{Grunfeld_model}
\text{Investment}_{ij} = \beta_{0j} + \beta_{1j} \text{ Shares}_{ij} + \beta_{2j} \text{ Capital}_{ij} + \varepsilon_{ij},
\end{equation}
with ${\rm Cov}[\varepsilon_{ij},\varepsilon_{ik}]=\sigma_{jk}$ for $i=1,\ldots,20$ and $j,k=1,2,3$.

We consider inference corresponding to the standard MLE and robust MM-estimators. MM-estimates are obtained with 50\% breakdown point and a normal efficiency of 90\%. For the MLE, inference is obtained by using asymptotic results and standard bootstrap. For MM-estimators, robust inference is based on the asymptotic results as well as on FRB using $N=1000$ bootstrap samples generated by case resampling. Given the small sample size, we may expect that the bootstrap inference is more reliable than the asymptotic inference according the simulation results in the previous section.

Table~\ref{Grunfeld_beta_se} contains the estimates for the regression coefficients and corresponding standard errors (between brackets) based on bootstrap for the SUR model in~\eqref{Grunfeld_model}.
\begin{table}[!ht]
\renewcommand{\arraystretch}{1.25}
\begin{center}
\begin{tabular}{l c c c c c c}
\hline
\textbf{Corporation} & \multicolumn{3}{c}{\textbf{MLE}} & \multicolumn{3}{c}{\textbf{MM-estimator}} \\
 & Intercept & Shares & Capital & Intercept & Shares & Capital \\
\hline
\multirow{2}{*}{GE} &  -42.270  &  0.049  &  0.122  &  -30.661 &  0.033  &  0.152  \\
                    &  (27.559) & (0.016) & (0.034) & (26.679) & (0.014) & (0.026) \\
\hline
\multirow{2}{*}{W}  &   -3.684  &  0.067  &  0.018  &  -6.320  &  0.059  &  0.117  \\
                    &   (8.293) & (0.016) & (0.074) & (10.779) & (0.022) & (0.102) \\
\hline
\multirow{2}{*}{DM} &   -0.716  &  0.016  &  0.453  &  -0.855  &  0.002  &  0.614  \\
                    &   (1.394) & (0.022) & (0.144) &  (0.608) & (0.009) & (0.093) \\
\hline
\end{tabular}
\end{center}
\caption{Estimated regression coefficients and bootstrap standard errors (between brackets) for the MLE and MM-estimator applied to the SUR model for the Grunfeld data. Standard errors have been obtained by classical bootstrap (MLE) or FRB (MM-estimates).}
\label{Grunfeld_beta_se}
\end{table}
We can clearly see that there are differences between the estimates of both procedures. Focusing on the slope estimates, we see that the MM-estimator yields larger effects of Capital (beginning-of-year real capital stock) and smaller effects of Shares (value of outstanding shares at beginning of the year) on annual gross investments than the MLE. The largest differences can be seen in the estimates $\hat{\beta}_{22}$, $\hat{\beta}_{13}$, and $\hat{\beta}_{23}$ and their standard errors. The estimates for the scatter matrix $\Sigma$ and corresponding correlation matrix $R$ are given by
\[ \hat{\Sigma}_{\text{MLE}} =
\left[
\begin{array}{r@{}l  r@{}l  r@{}l}
 784&.2 & 224&.2 & 19&.4 \\
    &   &  97&.8 &  6&.5 \\
    &   &    &   &  1&.0 \\
\end{array} \right],
\quad
R_{\text{MLE}} = 
\left[
\begin{array}{r  r  r}
1 & 0.81 & 0.69 \\
  & 1    & 0.65 \\
  &      & 1    \\
\end{array} \right],
\]
and
\[ \hat{\Sigma}_{\text{MM}} = \left[
\begin{array}{r@{}l  r@{}l  r@{}l}
520&.9 & 194&.6 & 6&.1 \\
   &   & 110&.1 & 2&.6 \\
   &   &    &   & 0&.2 \\
\end{array} \right],
\quad
R_{\text{MM}} =
\left[
\begin{array}{r r  r}
1 & 0.81 & 0.56 \\
  & 1    & 0.52 \\
  &      & 1    \\
\end{array} \right], 
\]
respectively. The robust covariance estimates are generally smaller than the classical estimates. Both estimators find large correlations between the errors of the different blocks. The largest correlation occurs between the first two blocks, which correspond to the equations of General Electric and Westinghouse.

Since there are several differences between the non-robust MLE and the robust MM-estimates, we investigate the data for the presence of outliers. Outliers can be detected by constructing a multivariate diagnostic plot as in~\citet{Hubert2017}. This plot displays the residual distances of the observations versus the robust distance of its predictors. Based on the SUR estimates the residual distances are computed as
\[ d_i = \sqrt{e_i(\hat{\mathcal{B}}_{\text{MM}})^\top \hat{\Sigma}_{\text{MM}}^{-1} e_i(\hat{\mathcal{B}}_{\text{MM}})}. \]
Similarly, to measure how far an observations lies from the majority in the predictor space, robust distances can be calculated as
\[ \text{RD}_i = \sqrt{(\tilde{X}_i - \hat{m}_{\text{MM}})^\top \hat{C}_{\text{MM}}^{-1} (\tilde{X}_i - \hat{m}_{\text{MM}})}, \]
with $\tilde{X}_i$ the $i$th row of $\tilde{X}$ and where $\hat{m}_{\text{MM}}$ and $\hat{C}_{\text{MM}}$ are MM-estimates of the location and scatter of $\tilde{X}$~\citep{Tatsuoka2000}. Note that contributions of intercept terms have been removed from $\tilde{X}$ so that only the actual predictors are taken into account. For non-outlying observations with normal errors, the squared residual distances are asymptotically chi-squared distributed with $m$ degrees of freedom as usual. Therefore, a horizontal line at cut-off value $\sqrt{\chi_{m,0.975}^2}$ (the square root of the 0.975 quantile of a chi-squared distribution with $m$ degrees of freedom) is added to the plot to flag outliers. Observations that exceed this cut-off are considered to be outliers. Similarly, if the predictors of the regular observations are approximately normally distributed, then asymptotically the squared robust distances are approximately chi-squared distributed with $p$ degrees of freedom. Therefore, we add a vertical line to the plot at cut-off value $\sqrt{\chi_{p,0.975}^2}$ to identify outliers in the predictor space, i.e., leverage points. An observation is called a vertical outlier if its residual distance exceeds the cut-off but it is not outlying in the predictor space. If the observation is also outlying in the predictor space, it is called a bad leverage point. Observations with small residual distance which are outlying in the predictor space are called good leverage points because they still follow the SUR model. Similarly, a diagnostic plot can be constructed based on the initial S-estimates for the SUR model or even based on the MLE, although the latter will not reliably identify outliers due to the non-robustness of the estimates.

Multivariate diagnostic plots corresponding to our analysis of the Grunfeld data are shown in Figure~\ref{paper_Grunfeld_diagnostic}, based on both the MLE and MM-estimates.
\begin{figure}[!ht]
\centering
\includegraphics[width=\textwidth]{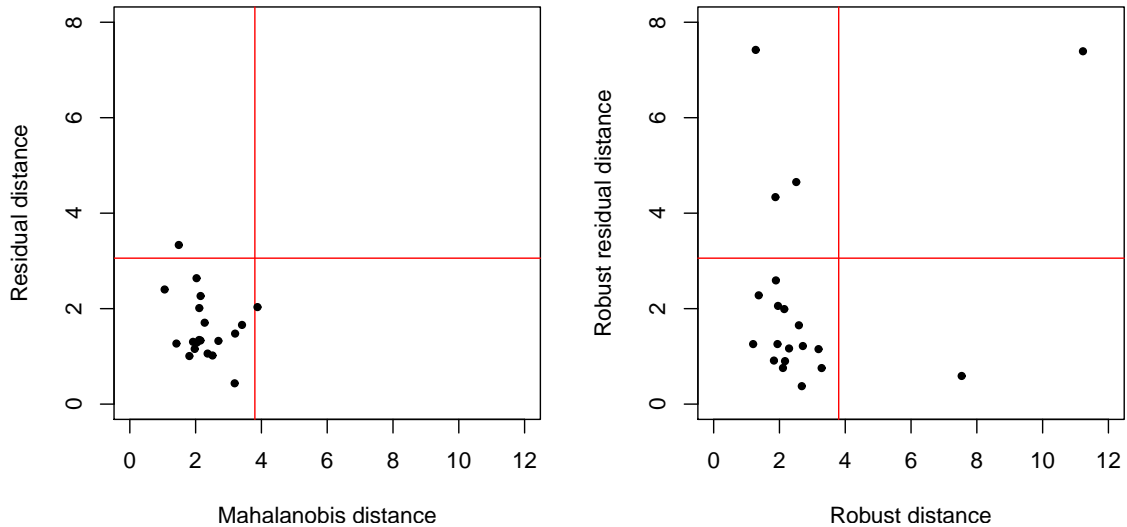}
\caption{Multivariate diagnostic plots based on the classical estimates (left panel) and robust estimates (right panel) for three companies in the Grunfeld data}
\label{paper_Grunfeld_diagnostic}
\end{figure}
The diagnostic plot corresponding to the classical non-robust estimates does not reveal any clear outliers. It seems that all observations follow the SUR model. However, outliers may have affected the estimates to the extent that the outliers are masked. Therefore, we consider the robust diagnostic plot corresponding to the MM-estimates. This plot indeed shows a different picture. Three vertical outliers and one bad leverage point are identified, as well as one good leverage point. The three vertical outliers correspond to the years 1946, 1947 and 1948, while the bad leverage point corresponds to the year 1954. Further exploration of the data indicates that the three vertical outliers are mainly due to exceptionally high investments in those three post World War II years. For the final year 1954, the measurements for all variables are rather extreme, most likely due to the postwar booming economy, which explains why this year is flagged as a bad leverage point in the robust analysis. These four outliers may potentially influence the inference results based on MLE, leading to misleading conclusions. To verify the effect of the outliers on the MLE estimates of the parameters, we also calculated the MLE estimates based on the data without the outliers. The results (not shown) confirmed that the outliers and especially the bad leverage point affect the MLE estimates, because without these outliers the MLE estimates highly resemble the MM-estimates in Table~\ref{Grunfeld_beta_se}.

The large correlation estimates between the errors of the different blocks already suggested that these correlations should not be ignored, and thus that the SUR model is indispensable. We can now formally test whether it is indeed necessary to use the SUR model. Therefore, we apply the diagonality test in Section~\ref{Robust Test for Diagonality of the Covariance Matrix} to test the hypotheses in~\eqref{HypothesisTestSigma}. Table~\ref{Grunfeld_LM} shows the results for the Breusch-Pagan test as well as our robust Breusch-Pagan test. The table contains the values of both test statistics, as well as the corresponding asymptotic p-values and bootstrap p-values. The proportionality constant for the asymptotic chi-squared distribution is estimated by using the empirical distribution to calculate the expected value.
\begin{table}[ht!]
\renewcommand{\arraystretch}{1.25}
\begin{center}
\begin{tabular}{l c c c}
\hline
\textbf{Estimator} & \textbf{LM} & \textbf{AS p-value} & \textbf{B p-value} \\
\hline
\multirow{1}{*}{MLE} & 23.482 & 0.001 & 0.003 \\
\multirow{1}{*}{MM}  & 14.825 & 0.003 & 0.019 \\
\hline
\end{tabular}
\end{center}
\caption{Results of the classical and robust Breusch-Pagan test for the hypothesis test in~\eqref{HypothesisTestSigma} using the Grunfeld data.}
\label{Grunfeld_LM}
\end{table}
We immediately see that at the $5\%$ significance level, the null hypothesis of diagonality is rejected in all cases. Hence, the outliers in this example do not affect the MLE estimates in such a way that the covariance structure of the SUR model is completely hidden.

From an econometric point of view it can now be interesting to investigate whether the predictors Shares and Capital have the same effect on investments for the two energy companies General Electric and Westinghouse. Hence, we test
\begin{equation}
\label{Grunfeld_HypothesisTestbeta}
H_0: \beta_{11} = \beta_{12} \text{ and } \beta_{21} = \beta_{22} \quad \text{vs} \quad H_1: \beta_{11} \neq \beta_{12} \text{ or } \beta_{21} \neq \beta_{22}.
\end{equation}
Table~\ref{Grunfeld_LR} contains the values of the likelihood-ratio statistics and corresponding asymptotic and bootstrap p-values.
\begin{table}[ht!]
\renewcommand{\arraystretch}{1.25}
\begin{center}
\begin{tabular}{l c c c}
\hline
\textbf{Estimator} & \textbf{$\Lambda$} & \textbf{AS p-value} & \textbf{B p-value} \\
\hline
\multirow{1}{*}{MLE} & 6.728 & 0.035 & 0.168 \\
\multirow{1}{*}{MM}  & 7.255 & 0.057 & 0.086 \\
\hline
\end{tabular}
\end{center}
\caption{Classical and robust test results for the hypothesis test in~\eqref{Grunfeld_HypothesisTestbeta} using the Grunfeld data.}
\label{Grunfeld_LR}
\end{table}
If we consider a $5\%$ significance level, then the conclusion is not completely clear for the MLE. The commonly used asymptotic p-value does reject the null hypothesis, but based on the bootstrap p-value we cannot reject the null hypothesis anymore. On the other hand, the robust test yields asymptotic and bootstrap p-values that lie closer together and which do not reject the null hypothesis. Hence, the presence of outliers does not affect the outcome of the robust hypothesis test while it seems to have caused instability for the classical test based on the MLE. Indeed, if we remove the bad leverage point, then the asymptotic p-value corresponding to the MLE already increases to $0.061$ which is in line with the p-value based on the MM-estimator for the full data set.


\section{Conclusion}
\label{Conclusion}

\noindent In this paper we have introduced MM-estimators for the SUR model as an extension of S-estimators. MM-estimators combine high robustness (breakdown point) with high efficiency at the central model. Based on these MM-estimators robust inference for the SUR model has been developed based on the FRB principle. We considered likelihood ratio type statistics to test the existence of linear restrictions among the regression coefficients. While MM-estimators update the S-estimates of the regression coefficients and shape matrix, they do not automatically update the S-scale estimate. However, it turns out that more accurate and powerful tests are obtained if a more efficient MM-scale estimator is used.

An important question is whether it is necessary to use a joint SUR model rather than individual linear regression models for each of the blocks. To evaluate the need for a SUR model we proposed a robust alternative for the well-known Breusch-Pagan test. The FRB was used again to obtain a highly reliable test for diagonality of the covariance matrix, i.e., for existence of contemporaneous correlation among the errors in the different blocks of the SUR model.


\section*{Acknowledgments}

\noindent This research has been partially supported by grant C16/15/068 of International Funds KU Leuven and the CRoNoS COST Action IC1408. The computational resources and services used in this work were provided by the VSC (Flemish Supercomputer Center), funded by the Research Foundation - Flanders (FWO) and the Flemish Government - department EWI.


\section*{Supplementary Material}

\noindent In the supplementary material we introduce functionals corresponding to MM-estimators and discuss important properties of these MM-functionals such as equivariance, influence function and asymptotic variance. Also influence functions and asymptotic distributions are derived for the proposed robust test statistics. Power curves are included for a situation which is less deviating from diagonality than the equicorrelation matrix. Furthermore, we construct bootstrap confidence intervals based on FRB and evaluate their performance in a simulation study. In addition, we illustrate these confidence intervals on Grunfeld data. The appendix also contains expressions for the partial derivatives required in the FRB procedure, a verification of the consistency conditions for the robust test on regression coefficients, and the proofs of the theorems.


\bibliographystyle{apalike}
\bibliography{ms_references}

\newpage
\spacingset{1} 


\if0\blind
{
  \title{\bf Robust Inference for Seemingly Unrelated Regression Models: Supplementary Material}
  \author{Kris Peremans and Stefan Van Aelst \hspace{.2cm} \\
    Department of Mathematics, KU Leuven, 3001 Leuven, Belgium \\
  }
  \maketitle
} \fi

\if1\blind
{
  \vspace*{0.35cm}
  \begin{center}
    {\LARGE \bf Robust Inference for Seemingly Unrelated Regression Models: Supplementary Material\par}
  \end{center}
  \vspace*{3.5cm}
} \fi

\bigskip
\begin{abstract} 
\noindent In the supplementary material we introduce functionals corresponding to MM-estimators and discuss important properties of these MM-functionals such as equivariance, influence function and asymptotic variance. Also influence functions and asymptotic distributions are derived for the proposed robust test statistics. Power curves are included for a situation which is less deviating from diagonality than the equicorrelation matrix. Furthermore, we construct bootstrap confidence intervals based on fast and robust bootstrap and evaluate their performance in a simulation study. In addition, we illustrate these confidence intervals on Grunfeld data. The appendix also contains expressions for the partial derivatives required in the fast and robust bootstrap procedure, a verification of the consistency conditions for the robust test on regression coefficients, and the proofs of the theorems. 
\end{abstract}

\noindent KEYWORDS: 
Asymptotic normal efficiency; Influence function; Fast and robust bootstrap; Robust confidence interval
\vfill

\newpage
\spacingset{1.45} 


\section{Properties of MM-estimators}
\label{Properties of MM-estimators}

\noindent We investigate the properties of MM-estimators in more detail. To this end, we first introduce MM-functionals corresponding to the MM-estimators introduced in the manuscript. We state equivariance properties of these MM-functionals and investigate their robustness and efficiency by deriving their influence function and asymptotic variance. We present results for both the estimator of the regression coefficients $\hat{\beta}$ and the estimator of the scatter $\hat{\Sigma}$.


\subsection{Functionals}

\noindent Functional versions of S and MM-estimators for the SUR model can be defined as follows.

\begin{definition}
Let $H: \mathbb{R}^{p+m} \longrightarrow \mathbb{R}$ be the distribution function of $(\tilde{X}^\top,Y^\top)^\top$ and let $\rho_0$ be a $\rho$-function as before. Then, the S-functionals of the SUR model $(\tilde{\mathcal{B}}(H),\tilde{\Sigma}(H))$ are the solutions that minimize $\abs{C}$ subject to the condition
\[ {\rm E}_H \left[ \rho_0 \left( \sqrt{e(B)^\top C^{-1} e(B)} \right) \right] = \delta_0, \]
over all $B = {\rm bdiag}(b_1,\ldots,b_m) \in \mathbb{R}^{p \times m}$ and $C \in \text{PDS}(m)$ with $e(B)=Y - B^\top \tilde{X}$.
\end{definition}

To define the MM-functionals, we again decompose the scatter matrix functional into a scale and a shape component, i.e., $\tilde{\Sigma}(H) = \tilde{\sigma}^2(H) \tilde{\Gamma}(H)$ such that $\abs{\tilde{\Gamma}(H)}=1$.

\begin{definition}
Let $H: \mathbb{R}^{p+m} \longrightarrow \mathbb{R}$ be the distribution function of $(\tilde{X}^\top,Y^\top)^\top$ and let $\rho_1$ be a $\rho$-function as before. Given the S-scale functional $\tilde{\sigma}(H)$, the MM-functionals of the SUR model $(\hat{\mathcal{B}}(H),\hat{\Gamma}(H))$ minimize
\[ {\rm E}_H \left[ \rho_1 \left( \frac{ \sqrt{e(B)^\top G^{-1} e(B)}}{\tilde{\sigma}(H)} \right) \right], \]
over all $B = {\rm bdiag}(b_1,\ldots,b_m) \in \mathbb{R}^{p \times m}$ and $G \in \text{PDS}(m)$ with $\abs{G} = 1$. The MM-functional for covariance is defined as $\hat{\Sigma}(H) = \tilde{\sigma}^2(H) \hat{\Gamma}(H)$.
\end{definition}

Note that the S and MM-estimators can be obtained by the choice $H=\hat{H}_n$, the empirical distribution function corresponding to the data.


\subsection{Equivariance}

\noindent Similarly as for S-estimators~\citep{Bilodeau2000}, it can easily be shown that the MM-functionals in the SUR model are equivariant under affine transformations of the regressors, regression transformations and blockwise scale transformations of the responses.

For ease of notation, let us write the MM-functionals as $\hat{\beta}(X,Y)$ and $\hat{\Sigma}(X,Y)$ with $X = {\rm bdiag} (X_1^\top, \ldots, X_m^\top)$ with $X_j \in \mathbb{R}^{p_j}$. Then, the MM-functionals satisfy the following equivariance properties:
\begin{enumerate}[(a)]
\item Affine equivariance of regressors:
\[ \hat{\beta}(X A,Y) = A^{-1} \hat{\beta}(X,Y) \quad \text{and} \quad \hat{\Sigma}(XA,Y) = \hat{\Sigma}(X,Y), \]
with $A= {\rm bdiag}(A_1,\ldots,A_m)$ where the blocks $A_j$ are of size $p_j \times p_j$.
\item Regression equivariance:
\[ \hat{\beta}(X,Y + Xa) = \hat{\beta}(X,Y) + a \quad \text{and} \quad \hat{\Sigma}(X,Y + Xa) = \hat{\Sigma}(X,Y), \]
for any $a \in \mathbb{R}^p$.
\item Scale equivariance of responses:
\[ \hat{\beta}(X,AY) = \tilde{A} \hat{\beta}(X,Y) \quad \text{and} \quad \hat{\Sigma}(X,AY) = A \hat{\Sigma}(X,Y) A, \]
for any diagonal matrix $A={\rm diag}(a_{11},\ldots,a_{mm})$ with diagonal matrix $\tilde{A} = {\rm diag}(a_{11},\ldots,a_{11},\ldots,a_{mm},\ldots,a_{mm})$ in which each diagonal element $a_{jj}$ of $A$ is repeated $p_j$ times.
\end{enumerate}


\subsection{Influence Function}

\noindent We now derive the influence functions of the MM-functionals introduced above. Since MM-functionals reduce to S-functionals when $\rho_1=\rho_0$, we only have to consider influence functions for MM-functionals. While the breakdown point is a global measure of robustness, the influence function is a local measure of robustness. The influence function of a functional $T$ measures the effect on $T$ of an infinitesimal amount of contamination at a point $z=(\tilde{x}^\top,y^\top)^\top \in \mathbb{R}^{p+m}$. Consider the contaminated distribution
\[ H_{\epsilon,\Delta_z} = (1 - \epsilon) H + \epsilon \Delta_z, \]
with $\Delta_z$ the point mass distribution at $z$ and $0<\epsilon<1$. Then, the influence function of $T$ is defined as
\[ \text{IF}(z;T,H) = \lim_{\epsilon \to 0} \frac{T(H_{\epsilon,\Delta_z}) - T(H)}{\epsilon} = \diff(,\epsilon) \left( T(H_{\epsilon,\Delta_z}) \right) \Big{|}_{\epsilon = 0}. \]

To derive the influence function, we consider the SUR model
\[ Y = B^\top \tilde{X} + \mathcal{E} = X \beta + \mathcal{E} , \]
where the $p$-dimensional vector $\tilde{X}$ has distribution $K$ and is independent of the $m$-dimensional error variable $\mathcal{E}$. We assume that $\mathcal{E}$ follows a unimodal elliptically symmetric distribution $F_\Sigma$ with density
\[ f_\Sigma(u)= \abs{\Sigma}^{-1/2} g(u^\top\Sigma^{-1}u), \]
where $\Sigma \in \text{PDS}(m)$ and the function $g$ has a strictly negative derivative. The error distribution is thus symmetric around the origin. Let $H_{\beta,\Sigma}$ denote the resulting distribution of $Z=(\tilde{X}^\top,Y^\top)^\top$. The following theorem gives the influence functions of the regression and scatter MM-functionals for model distributions $H_{\beta,\Sigma}$.

\begin{theorem}
\label{TheoremInfluenceFunctions}
If $Z=(\tilde{X}^\top,Y^\top)^\top$ has model distribution $H_{\beta,\Sigma}$ as defined above, then the influence functions of the MM-estimators for the SUR model are given by
\begin{equation}
\label{InfluenceFunctionbeta}
\text{IF}(z;\hat{\beta},H_{\beta,\Sigma}) = \frac{1}{\eta_1} w_1(\norm{e}_{\Sigma}) {\rm E}_K [X^\top \Sigma^{-1} X]^{-1} x^\top \Sigma^{-1} e,
\end{equation}
and
\[ \text{IF}(z;\hat{\Sigma},H_{\beta,\Sigma}) = \frac{m}{\pi_1} \psi_1(\norm{e}_{\Sigma}) \norm{e}_{\Sigma} \left( \frac{e e^\top}{\norm{e}_{\Sigma}^2} - \frac{1}{m} \Sigma \right) + \frac{2}{\gamma_0} (\rho_0(\norm{e}_{\Sigma}) - \delta_0) \hspace{0.5mm} \Sigma, \]
with $e=y-x\beta$ and where we use the notation $\norm{a}_{C}^2 = a^\top C^{-1} a$ for $a \in \mathbb{R}^{m}$ and $C \in \text{PDS}(m)$. With $F:= F_\Sigma $ the constants are given by
\begin{align}
\label{eta_1}
\eta_1 &= {\rm E}_F \left[ \left( 1-\frac{1}{m} \right) w_1(\norm{\mathcal{E}}_{\Sigma}) + \frac{1}{m} \psi_1'(\norm{\mathcal{E}}_{\Sigma}) \right], \\
\label{pi_1}
\pi_1 &= \frac{1}{m+2} {\rm E}_F [(m+1) \psi_1(\norm{\mathcal{E}}_{\Sigma}) \norm{\mathcal{E}}_{\Sigma} + \psi_1'(\norm{\mathcal{E}}_{\Sigma}) \norm{\mathcal{E}}_{\Sigma}^2], \\
\label{gamma_0}
\gamma_0 &= {\rm E}_F [\psi_0(\norm{\mathcal{E}}_{\Sigma}) \norm{\mathcal{E}}_{\Sigma}].
\end{align}
\end{theorem}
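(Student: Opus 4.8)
\noindent The plan is to derive both influence functions from the estimating equations satisfied by the S- and MM-functionals, following the standard delta-method approach for $M$-type functionals (as in~\citet{Bilodeau2000} for the S-functionals). Because the MM-functionals reduce to S-functionals when $\rho_1=\rho_0$, and because $\hat\Sigma(H)=\tilde\sigma^2(H)\hat\Gamma(H)$, it suffices to obtain the influence functions of three objects: the MM-regression functional $\hat\beta$, the MM-shape functional $\hat\Gamma$, and the S-scale functional $\tilde\sigma$; then, at the model, $\text{IF}(z;\hat\Sigma,H_{\beta,\Sigma})=2\tilde\sigma\,\text{IF}(z;\tilde\sigma,H_{\beta,\Sigma})\,\Gamma+\tilde\sigma^2\,\text{IF}(z;\hat\Gamma,H_{\beta,\Sigma})$, where $\Gamma=\Sigma/\abs{\Sigma}^{1/m}$ and $\tilde\sigma^2=\abs{\Sigma}^{1/m}$.

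\noindent First I would record Fisher consistency at $H_{\beta,\Sigma}$: since $F_\Sigma$ is elliptically symmetric and $\delta_0,\delta_1$ are matched to $F_\Sigma$, the S-functionals satisfy $(\tilde{\mathcal{B}},\tilde\Sigma)=(\mathcal{B},\Sigma)$ and the MM-functionals satisfy $(\hat{\mathcal{B}},\hat\Gamma)=(\mathcal{B},\Gamma)$, so $\hat\Sigma=\Sigma$. Next I would write the functional estimating equations---the population analogues of~\eqref{EstimatingEquationS}--\eqref{EstimatingEquationMM}, the S-scale equation ${\rm E}_H[\rho_0(\norm{e(\tilde{\mathcal{B}})}_{\tilde\Sigma})]=\delta_0$, and the first-order condition (with a Lagrange multiplier) enforcing $\abs{G}=1$---evaluate them at the contaminated distribution $H_{\epsilon,\Delta_z}$, differentiate with respect to $\epsilon$ at $\epsilon=0$, and identify the $\epsilon$-derivatives of the functionals with their influence functions. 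Throughout, the independence of $\tilde X$ and $\mathcal{E}$ lets the expectations factor as ${\rm E}_K[\cdot]\,{\rm E}_{F}[\cdot]$ with $F=F_\Sigma$.

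\noindent The key observation is that the resulting linear system in $(\text{IF}(\hat\beta),\text{IF}(\hat\Gamma),\text{IF}(\tilde\sigma))$ is effectively decoupled. In the differentiated $\hat\beta$-equation the terms carrying $\text{IF}(\hat\Gamma)$ and $\text{IF}(\tilde\sigma)$ each integrate an odd function of $e$ against the symmetric $F_\Sigma$ and hence vanish; the $\text{IF}(\hat{\mathcal{B}})$ contributions to both the shape equation and the S-scale equation vanish for the same reason; and the $\text{IF}(\tilde\sigma)$ contribution to the shape equation and the $\text{IF}(\tilde\Gamma)$ contribution to the S-scale equation vanish because the relevant matrix factor is trace-free---differentiating $\log\abs{G}=0$ gives ${\rm tr}(\Gamma^{-1}\text{IF}(\hat\Gamma))=0$---combined with the identity ${\rm E}_F[h(\norm{\mathcal{E}}_\Sigma)(m\,\mathcal{E}\mathcal{E}^\top/\norm{\mathcal{E}}_\Sigma^2-\Sigma)]=0$. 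Consequently each influence function can be read off from a single equation: solving the S-scale equation gives $\text{IF}(z;\tilde\sigma,H_{\beta,\Sigma})=(\tilde\sigma/\gamma_0)(\rho_0(\norm{e}_\Sigma)-\delta_0)$, so that $2\tilde\sigma\,\text{IF}(\tilde\sigma)\,\Gamma=\frac{2}{\gamma_0}(\rho_0(\norm{e}_\Sigma)-\delta_0)\Sigma$ supplies the ``scale'' term of $\text{IF}(z;\hat\Sigma)$; solving the $\hat\beta$-equation and the shape equation and reassembling $\hat\Sigma=\tilde\sigma^2\hat\Gamma$ yields the stated expressions for $\text{IF}(z;\hat\beta)$ and $\text{IF}(z;\hat\Sigma)$. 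Note this is also where the independence of the efficiency of $\hat\beta$ from $\rho_0$ becomes transparent: $\rho_0$ enters only through $\text{IF}(\tilde\sigma)$, which the $\hat\beta$-equation does not see.

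\noindent The last step is the evaluation of the constants $\eta_1$, $\pi_1$, $\gamma_0$. Writing $\mathcal{E}=R\,\Sigma^{1/2}U$ with $U$ uniform on the unit sphere of $\mathbb{R}^m$ and $R=\norm{\mathcal{E}}_\Sigma$ independent of $U$, the $\beta$- and $G$-derivatives of the integrands reduce, via the sphere moments ${\rm E}[UU^\top]=I_m/m$ and the fourth-order moment ${\rm E}[(UU^\top)\kron(UU^\top)]$ (which produces the $1/(m+2)$ and $(m+1)$ factors in $\pi_1$), to the compact forms in~\eqref{eta_1}--\eqref{gamma_0}. The main obstacle I anticipate is organizational rather than conceptual: setting up the joint first-order conditions of the MM-minimization correctly, in particular handling the block-diagonal structure of $B$ and the unit-determinant constraint on $G$ (the latter via a Lagrange multiplier), and then carrying out the matrix-calculus bookkeeping that verifies the vanishing of the off-diagonal blocks and collapses the surviving terms to $\eta_1$, $\pi_1$, $\gamma_0$. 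Once the decoupled structure is established, the remaining computations are routine.
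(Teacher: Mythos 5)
Your proposal is correct, and for the regression coefficients it follows essentially the same route as the paper: differentiate the functional estimating equation for $\hat{\beta}$ at $H_{\epsilon,\Delta_z}$, discard the cross terms by the symmetry of $F_\Sigma$ and the independence of $\tilde{X}$ and $\mathcal{E}$, and collapse the surviving expectation to $\eta_1$ via the spherical moment identity. Where you genuinely diverge is on the scatter matrix. The paper's written proof derives only $\text{IF}(z;\hat{\beta},H_{\beta,\Sigma})$ and displays the $\hat{\Sigma}$ estimating equation in its combined M-form; differentiating that equation alone would never produce the $\rho_0$-dependent term $\frac{2}{\gamma_0}(\rho_0(\norm{e}_{\Sigma})-\delta_0)\Sigma$, because that term can only enter through the S-scale, which the MM-step fixes rather than re-estimates. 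Your decomposition $\text{IF}(\hat{\Sigma})=2\tilde{\sigma}\,\text{IF}(\tilde{\sigma})\,\Gamma+\tilde{\sigma}^2\,\text{IF}(\hat{\Gamma})$, with $\text{IF}(z;\tilde{\sigma},H_{\beta,\Sigma})=(\tilde{\sigma}/\gamma_0)(\rho_0(\norm{e}_{\Sigma})-\delta_0)$ from the S-scale equation and the constrained shape equation supplying the $m/\pi_1$ term, is precisely the mechanism needed to reconcile the appearance of both $\rho_0$ and $\rho_1$ in the stated formula, and it mirrors how the paper itself organizes the estimators in the FRB section (the split between $g_2$ and $g_3$). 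What your approach buys is a self-contained derivation of the scatter influence function, which the paper delegates to earlier references; what it costs is the extra bookkeeping of the Lagrange multiplier for $\abs{G}=1$. The one point to execute carefully is the decoupling: the shape/scale cross terms do not vanish by oddness but only through the combination of ${\rm tr}(\Gamma^{-1}\text{IF}(\hat{\Gamma}))=0$ with ${\rm E}_F[h(\norm{\mathcal{E}}_{\Sigma})\,\mathcal{E}\mathcal{E}^\top]\propto\Sigma$, exactly as you indicate; both ingredients are needed.
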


Note that the influence function of the regression functional $\hat{\beta}$ is bounded in $e$ but unbounded in $x$. Hence, contamination in the direction of the response has a bounded influence on $\hat{\beta}$. The effect becomes zero for far away outliers because the weight function $w_1(\norm{e}_{\Sigma})$ becomes zero for large values of its argument. On the other hand, contamination in the predictor space can have an infinitely large effect on the estimator, but only if the corresponding residual is sufficiently small. This means that the point is a good leverage point since it does not deviate from the SUR model. Moreover, the influence function of the scatter functional $\hat{\Sigma}$ only depends on $e$ and is bounded. Hence, contamination in the predictor space does not affect the scatter functional while the effect of contamination in the response remains bounded.


\subsection{Asymptotic Variance}

\noindent Following~\citet{Hampel1986}, the asymptotic variance of a functional $T$ is obtained by
\[ \text{ASV} (T,H) = {\rm E}_H [\text{IF}(z;T,H) \text{IF}(z;T,H)^\top]. \]
By using the expressions for the influence functions in Theorem~\ref{TheoremInfluenceFunctions} we immediately obtain the asymptotic variances of the MM-estimators for the SUR model in Theorem~\ref{TheoremAsymptoticVariance} below. We use the notation $K_m$ for the commutation matrix of size $m^2 \times m^2$ such that $K_m {\rm vec} (A) = {\rm vec} (A)^\top$ for any matrix $A \in \mathbb{R}^{m \times m}$. Note that vec denotes the vector operator, stacking all columns of its matrix argument into one vector.

\begin{theorem}
\label{TheoremAsymptoticVariance}
If $Z=(\tilde{X}^\top,Y^\top)^\top$ has model distribution $H_{\beta,\Sigma}$, then the asymptotic variances of the MM-estimators for the SUR model are given by
\begin{equation}
\label{AsymptoticVariancebeta}
\text{ASV} (\hat{\beta},H_{\beta,\Sigma}) = \frac{\alpha_1}{m \eta_1^2} {\rm E}_K [X^\top \Sigma^{-1} X]^{-1},
\end{equation}
and
\[ \text{ASV} (\hat{\Sigma},H_{\beta,\Sigma}) = \sigma_1 (I_{m^2} + K_m) (\Sigma \kron \Sigma) + \sigma_2 {\rm vec} (\Sigma) {\rm vec} (\Sigma)^\top. \]
With $F:= F_\Sigma $ as before, the constants $\alpha_1$, $\sigma_1$ and $\sigma_2$ are equal to
\begin{align}
\label{alpha_1}
\alpha_1 &= {\rm E}_F [\psi_1^2(\norm{\mathcal{E}}_{\Sigma})], \\
\nonumber \sigma_1 &= \frac{m}{\pi_1^2 (m+2)} {\rm E}_F [\psi_1^2(\norm{\mathcal{E}}_{\Sigma}) \norm{\mathcal{E}}_{\Sigma}^2], \\
\nonumber \sigma_2 &= \frac{4}{\gamma_0^2} {\rm E}_F [(\rho_0(\norm{\mathcal{E}}_{\Sigma})-\delta_0)^2] - \frac{2}{m} \sigma_1,
\end{align}
and $\eta_1$, $\pi_1$ and $\gamma_0$ are given by~\eqref{eta_1}, \eqref{pi_1} and~\eqref{gamma_0}, respectively.
\end{theorem}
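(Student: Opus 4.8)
The plan is to derive both asymptotic variances directly from the influence functions in Theorem~\ref{TheoremInfluenceFunctions} by means of the identity $\text{ASV}(T,H_{\beta,\Sigma}) = {\rm E}_{H_{\beta,\Sigma}}[\text{IF}(z;T,H_{\beta,\Sigma})\,\text{IF}(z;T,H_{\beta,\Sigma})^\top]$ of~\citet{Hampel1986} that is already quoted above. The whole computation reduces to a few scalar moments of the error distribution once its elliptical structure is exploited: writing $V = \Sigma^{-1/2}\mathcal{E}$, the vector $V$ is spherically symmetric, hence $V \stackrel{d}{=} R\,U$ with radial part $R = \norm{\mathcal{E}}_{\Sigma} \geq 0$ and $U$ uniformly distributed on the unit sphere $S^{m-1}$, independent of $R$. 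Consequently $\mathcal{E} \stackrel{d}{=} R\,\Sigma^{1/2}U$, so ${\rm E}[UU^\top] = \frac{1}{m} I_m$, and the independence of $\tilde{X}$ and $\mathcal{E}$ lets every expectation factor into a regressor part and an error part.

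For the regression estimator I would substitute~\eqref{InfluenceFunctionbeta} into Hampel's identity and pull the constant matrices ${\rm E}_K[X^\top \Sigma^{-1} X]^{-1}$ out on both sides. What remains in the middle is ${\rm E}[w_1^2(\norm{\mathcal{E}}_{\Sigma})\,\tilde{X}^\top \Sigma^{-1} \mathcal{E}\mathcal{E}^\top \Sigma^{-1} \tilde{X}]$, which by independence equals ${\rm E}_K[X^\top \Sigma^{-1}\,({\rm E}[w_1^2(\norm{\mathcal{E}}_{\Sigma})\,\mathcal{E}\mathcal{E}^\top])\,\Sigma^{-1} X]$. Using $\mathcal{E} \stackrel{d}{=} R\,\Sigma^{1/2}U$ together with ${\rm E}[UU^\top] = \frac{1}{m}I_m$ and the identity $u\,w_1(u) = \psi_1(u)$ gives ${\rm E}[w_1^2(\norm{\mathcal{E}}_{\Sigma})\,\mathcal{E}\mathcal{E}^\top] = \frac{1}{m}\,{\rm E}_F[\psi_1^2(\norm{\mathcal{E}}_{\Sigma})]\,\Sigma = \frac{\alpha_1}{m}\,\Sigma$, so the middle factor collapses to $\frac{\alpha_1}{m}\,{\rm E}_K[X^\top \Sigma^{-1} X]$, which telescopes against one of the inverses to yield~\eqref{AsymptoticVariancebeta}.

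For the scatter estimator I would write its influence function as $\text{IF}(z;\hat{\Sigma},H_{\beta,\Sigma}) = A_1 + A_2$, where, after substituting $\mathcal{E}\mathcal{E}^\top / \norm{\mathcal{E}}_{\Sigma}^2 = \Sigma^{1/2}UU^\top \Sigma^{1/2}$, $A_1 = \frac{m}{\pi_1}\psi_1(\norm{\mathcal{E}}_{\Sigma})\norm{\mathcal{E}}_{\Sigma}\,(\Sigma^{1/2}UU^\top\Sigma^{1/2} - \frac{1}{m}\Sigma)$ and $A_2 = \frac{2}{\gamma_0}(\rho_0(\norm{\mathcal{E}}_{\Sigma}) - \delta_0)\,\Sigma$. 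Vectorising and expanding the outer product, the cross terms ${\rm E}[{\rm vec}(A_1)\,{\rm vec}(A_2)^\top]$ vanish because ${\rm E}[\Sigma^{1/2}UU^\top\Sigma^{1/2} \mid R] = \frac{1}{m}\Sigma$, so that ${\rm E}[{\rm vec}(A_1)\mid R] = 0$ while ${\rm vec}(A_2)$ is $R$-measurable; the $A_2$ term then contributes $\frac{4}{\gamma_0^2}\,{\rm E}_F[(\rho_0(\norm{\mathcal{E}}_{\Sigma}) - \delta_0)^2]\,{\rm vec}(\Sigma){\rm vec}(\Sigma)^\top$ at once. For the $A_1$ term I would use ${\rm vec}(\Sigma^{1/2}UU^\top\Sigma^{1/2}) = (\Sigma^{1/2}\kron\Sigma^{1/2})(U\kron U)$ and the fourth-order sphere moment ${\rm E}[(U\kron U)(U\kron U)^\top] = \frac{1}{m(m+2)}(I_{m^2} + K_m + {\rm vec}(I_m){\rm vec}(I_m)^\top)$, then move the factors $\Sigma^{1/2}\kron\Sigma^{1/2}$ through with the identities $K_m(A\kron B) = (B\kron A)K_m$ and $(A\kron B){\rm vec}(C) = {\rm vec}(BCA^\top)$; this leaves terms proportional to $(I_{m^2} + K_m)(\Sigma\kron\Sigma)$ and to ${\rm vec}(\Sigma){\rm vec}(\Sigma)^\top$. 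Collecting coefficients — the factor $\norm{\mathcal{E}}_{\Sigma}^2$ together with $\frac{m^2}{\pi_1^2}$ and $\frac{1}{m(m+2)}$ producing $\sigma_1$, and the residual ${\rm vec}(\Sigma){\rm vec}(\Sigma)^\top$ piece from $A_1$ combining with the $A_2$ contribution to give $\sigma_2 = \frac{4}{\gamma_0^2}{\rm E}_F[(\rho_0(\norm{\mathcal{E}}_{\Sigma}) - \delta_0)^2] - \frac{2}{m}\sigma_1$ — produces the stated expression.

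The main obstacle is the bookkeeping in the scatter case: evaluating the fourth moment of $U\kron U$ on the sphere correctly and carrying the commutation-matrix and Kronecker-product identities through without sign or index slips, while verifying that the cross terms between the ``shape'' direction $A_1$ and the ``scale'' direction $A_2$ of the influence function really do cancel. The regression variance is essentially immediate once the error moment ${\rm E}[w_1^2(\norm{\mathcal{E}}_{\Sigma})\mathcal{E}\mathcal{E}^\top]$ has been evaluated.
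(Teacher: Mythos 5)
Your proposal is correct and follows the same route as the paper: Hampel's identity $\text{ASV}(T,H)={\rm E}_H[\text{IF}\,\text{IF}^\top]$ combined with the independence of $\tilde{X}$ and $\mathcal{E}$ and the spherical reduction $\mathcal{E}\stackrel{d}{=}R\,\Sigma^{1/2}U$, which is exactly how the paper obtains the factor $\alpha_1/m$ ``by symmetry'' and then telescopes against ${\rm E}_K[X^\top\Sigma^{-1}X]^{-1}$. Your treatment of the scatter part (vanishing cross terms between the shape and scale components of the influence function, the fourth-order sphere moment ${\rm E}[(U\kron U)(U\kron U)^\top]=\tfrac{1}{m(m+2)}(I_{m^2}+K_m+{\rm vec}(I_m){\rm vec}(I_m)^\top)$, and the resulting coefficients $\sigma_1$ and $\sigma_2$) is also correct and in fact more explicit than the paper, which only works out the regression case and defers the scatter variance to the cited literature.
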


In case of S-estimators ($\rho_1=\rho_0$), these expressions correspond to the asymptotic variances of S-estimators in~\citet{Bilodeau2000}. Moreover, the asymptotic variance of the scatter $\hat{\Sigma}$ coincides with that in~\citet{Lopuhaa1989} and~\citet{Salibian2006}.

The asymptotic relative efficiency (ARE) for the regression coefficients $\hat{\beta}$, relative to the MLE $\hat{\beta}_{\text{MLE}}$, becomes
\[ \text{ARE}(\hat{\beta},H_{\beta,\Sigma}) = \frac{\text{ASV}(\hat{\beta}_{\text{MLE}},H_{\beta,\Sigma})}{\text{ASV}(\hat{\beta},H_{\beta,\Sigma})} = \frac{m \eta_1^2}{\alpha_1}. \]
Note that the ARE does not depend on the number of predictors $p$ in the SUR model nor on the distribution of $\tilde{X}$, but only depends on the number of blocks $m$ in the model and the distribution of the errors. Moreover, it can immediately be seen that the ARE of the MM-estimator $\hat{\beta}$ does not depend on the initial loss function $\rho_0$ for the S-estimator, but only depends on the loss function $\rho_1$. Hence, the constant $c_1$ in $\rho_1$ can indeed be tuned to guarantee a desired efficiency at the central model, independently of the breakdown point which is determined by the constant $c_0$ in $\rho_0$.


\section{Asymptotic Results of the Proposed Test Statistics}
\label{Asymptotic Results of the Proposed Test Statistics}

Furthermore, we present some asymptotic results of the robust test statistics $\Lambda_.$ and $\text{LM}_.$ (see Sections~\ref{Robust Tests for the Regression Parameters} and~\ref{Robust Test for Diagonality of the Covariance Matrix} of the manuscript respectively).


\subsection{Robust Tests for the Regression Parameters}

\noindent Under the null hypothesis in~\eqref{HypothesisTestbeta} the asymptotic distributions of the test statistics $\Lambda_{\text{S}}$ and $\Lambda_{\text{MM}}$ are proportional to a chi-squared distribution with $r$ degrees of freedom. Denote
\begin{align*}
\eta_\ell &= {\rm E}_F \left[ \left( 1-\frac{1}{m} \right) w_\ell(\norm{\mathcal{E}}_{\Sigma}) + \frac{1}{m} \psi_\ell'(\norm{\mathcal{E}}_{\Sigma}) \right], \\
\gamma_\ell &= {\rm E}_F [\psi_\ell(\norm{\mathcal{E}}_{\Sigma}) (\norm{\mathcal{E}}_{\Sigma})], \\
\alpha_\ell &= {\rm E}_F [\psi_\ell^2(\norm{\mathcal{E}}_{\Sigma})],
\end{align*}
for $\ell=0,1$ and with $F = F_{\Sigma}$. Remark that the constants $\eta_1$, $\gamma_0$ and $\alpha_1$ are already defined in~\eqref{eta_1},~\eqref{gamma_0} and~\eqref{alpha_1} respectively. Then, we have the following result.

\begin{theorem}
\label{TheoremAsymptoticDistributionLambda}
Let $Z=(\tilde{X}^\top,Y^\top)^\top$ have model distribution $H_{\beta,\Sigma}$. Under the null hypothesis $H_0: R \beta = q$ it holds that
\[ \Lambda_{\text{S}} \quad \overset{d}{\longrightarrow} \quad \frac{\alpha_0}{\eta_0 \gamma_0} \hspace{1mm} \chi_r^2, \]
and
\[ \Lambda_{\text{MM}} \quad \overset{d}{\longrightarrow} \quad \frac{\alpha_1}{\eta_1 \gamma_1} \hspace{1mm} \chi_r^2. \]
\end{theorem}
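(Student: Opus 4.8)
The plan is to derive the asymptotic distribution of the likelihood-ratio type statistics by a standard Taylor expansion of the multivariate M-scale around the true parameter, exploiting the fact that the S- and MM-estimators satisfy first-order M-estimating equations and are therefore $\sqrt{n}$-consistent and asymptotically normal. I will treat $\Lambda_{\text{S}}$ and $\Lambda_{\text{MM}}$ in parallel, writing $\rho_\ell$, $\psi_\ell$, $w_\ell$ for $\ell=0$ (the S case) and $\ell=1$ (the MM case). Throughout I work under $H_0: R\beta = q$ and at the model distribution $H_{\beta,\Sigma}$, and by affine equivariance (property (a)--(c) of the MM-functionals) I may assume without loss of generality that $\Sigma = I_m$, so that $\norm{e}_\Sigma = \norm{e}$.

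First I would linearize the M-scale. Recall from~\eqref{MultivariateMscale} that $\tilde s := \tilde s(\tilde{\mathcal B},\tilde\Gamma)$ solves $n^{-1}\sum_i \rho_0(\norm{e_i(\tilde{\mathcal B})}_{\tilde\Gamma}/\tilde s) = \delta_0$. Writing $\tilde s = \sigma(1 + \tau_n/\sqrt n)$ with $\sigma$ the true error scale (here $\sigma=1$ since $\Sigma=I_m$), and similarly expanding the residual distances $\norm{e_i(\tilde{\mathcal B})}_{\tilde\Gamma}$ around $\norm{e_i}$ using $\sqrt n(\tilde\beta - \beta) = O_p(1)$ and $\sqrt n(\tilde\Gamma - I_m) = O_p(1)$, a first-order expansion of the defining equation gives
\begin{equation*}
\sqrt n\,\frac{\tilde s - \sigma}{\sigma} = \frac{1}{\gamma_0}\,\frac{1}{\sqrt n}\sum_{i=1}^n \bigl(\rho_0(\norm{e_i}) - \delta_0\bigr) + \text{(terms in }\sqrt n(\tilde\beta-\beta)\text{ and }\sqrt n(\tilde\Gamma - I_m)\text{)} + o_p(1),
\end{equation*}
where $\gamma_0 = {\rm E}_F[\psi_0(\norm{\mathcal E})\norm{\mathcal E}]$ comes from differentiating in the scale argument, exactly as in~\eqref{gamma_0}. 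The key simplification is that the cross terms involving $\sqrt n(\tilde\beta-\beta)$ and $\sqrt n(\tilde\Gamma - I_m)$ have coefficients that are expectations of odd functions of $\mathcal E$ under the elliptically symmetric $F$ (the residual $e_i$ enters linearly), hence vanish; so $\sqrt n(\tilde s - \sigma)/\sigma$ is asymptotically a sum of i.i.d.\ terms $\gamma_0^{-1}(\rho_0(\norm{e_i})-\delta_0)$ and does \emph{not} depend, to first order, on the regression or shape fluctuations. The same expansion applied in the restricted model gives $\sqrt n(\tilde s_r - \sigma)/\sigma$ in terms of $\gamma_0^{-1}(\rho_0(\norm{e_i})-\delta_0)$ plus a contribution from $\sqrt n(\tilde\beta_r - \beta)$ which, under $H_0$, is the projection of $\sqrt n(\tilde\beta - \beta)$ onto the constraint space --- and here the relevant coefficient is no longer odd once $\tilde\beta_r$ is pinned to $H_0$; it is $\eta_0 = {\rm E}_F[(1-1/m)w_0(\norm{\mathcal E}) + \tfrac1m\psi_0'(\norm{\mathcal E})]$, the same constant appearing in~\eqref{eta_1} (for $\ell=0$), arising from differentiating the M-scale equation in the regression direction.

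Next I would form the statistic. From~\eqref{RobustLikelihoodRatioS_consistent}, $\Lambda_{\text{S}} = -2nm\ln(\tilde s/\tilde s_r) = 2nm(\ln \tilde s_r - \ln \tilde s) = 2m\sqrt n\bigl(\sqrt n(\tilde s_r - \sigma)/\sigma - \sqrt n(\tilde s - \sigma)/\sigma\bigr) + o_p(1)$ by a further Taylor expansion of $\ln$. Substituting the two expansions above, the i.i.d.\ terms $\gamma_0^{-1}(\rho_0(\norm{e_i})-\delta_0)$ cancel, leaving
\begin{equation*}
\Lambda_{\text{S}} = 2m\,\eta_0 \cdot \sqrt n\,(\tilde\beta - \tilde\beta_r)^\top (\text{something}) + o_p(1),
\end{equation*}
which after collecting the quadratic form becomes a quadratic form in $\sqrt n\, R(\tilde\beta - \beta)$. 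Now I invoke the asymptotic normality of $\hat\beta$ (equivalently $\tilde\beta$): from~\eqref{AsymptoticVariancebeta}, $\sqrt n(\tilde\beta - \beta) \overset{d}{\to} \mathcal N(0, \tfrac{\alpha_0}{m\eta_0^2}{\rm E}_K[X^\top\Sigma^{-1}X]^{-1})$. Plugging $R(\tilde\beta-\beta)$ into the quadratic form with the appropriate $({\rm E}_K[X^\top\Sigma^{-1}X], R)$ weighting, the matrix factors cancel against the covariance, the scalar prefactors combine as $2m\eta_0 \cdot \tfrac{1}{2m\eta_0}\cdot \tfrac{\alpha_0}{m\eta_0^2}\cdot m \cdot \tfrac{1}{\cdots}$, and one is left with $\dfrac{\alpha_0}{\eta_0\gamma_0}\chi_r^2$ --- the $r$ degrees of freedom coming from $\text{rank}(R)=r$ and the Gaussian quadratic form. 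The MM case is identical with $\rho_0\mapsto\rho_1$, $\eta_0\mapsto\eta_1$, $\gamma_0\mapsto\gamma_1$, $\alpha_0\mapsto\alpha_1$, except that one must additionally check that the S-scale $\tilde\sigma$ used inside $\hat s$ in~\eqref{GeneralizedMultivariateMscale} contributes only a term that cancels between full and restricted models (it does, since $\tilde\sigma$ is common to both) --- this is precisely the reason the consistency condition~\eqref{consistencycondition} and the rewriting~\eqref{RobustLikelihoodRatioM_consistent} were introduced.

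The main obstacle is bookkeeping the vanishing of the odd cross terms carefully and tracking the exact scalar constants through the two-sided expansion, together with justifying that the remainder terms in all the Taylor expansions are genuinely $o_p(1)$ (this needs the $\sqrt n$-consistency of the S- and MM-estimators of $\beta$, $\Gamma$, $\sigma$, which follows from their M-estimating-equation characterization plus the smoothness conditions (C1)--(C2) on $\rho$, standard empirical-process arguments giving stochastic equicontinuity). The rest is a chi-squared-from-Gaussian-quadratic-form computation. I would present the argument first for $\Lambda_{\text{S}}$ in full and then indicate that $\Lambda_{\text{MM}}$ follows mutatis mutandis.
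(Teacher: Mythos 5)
There is a genuine gap in the central step. You expand each M-scale to \emph{first} order around the true $\sigma$ and claim that the restricted fit contributes a surviving linear term with coefficient $\eta_0$ while the unrestricted one does not. That is not what happens: the first-order coefficient of $\sqrt n(\tilde\beta_r-\beta)$ in the expansion of the defining equation~\eqref{MultivariateMscale} is (a multiple of) ${\rm E}_F[\psi_0(\norm{\mathcal E}_\Sigma)\,\mathcal E/\norm{\mathcal E}_\Sigma]=0$ by the same symmetry argument you invoke for the full model — pinning $\tilde\beta_r$ to $H_0$ does not break the oddness of the integrand, since under $H_0$ the true $\beta$ already satisfies the constraint and $\tilde\beta_r$ is still $\sqrt n$-consistent for it. Consequently your two expansions give $\sqrt n(\tilde s_r-\tilde s)=o_p(1)$, and after multiplying by $2m\sqrt n$ you are left with $\sqrt n\cdot o_p(1)$, i.e.\ nothing of determinate order; your displayed expression for $\Lambda_{\text S}$, linear in $\tilde\beta-\tilde\beta_r$ with a $\sqrt n\cdot\sqrt n$ prefactor, cannot ``become a quadratic form'' by collecting terms. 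The statistic converges at rate $n$ precisely because the first-order terms cancel, so the non-degenerate limit must come from the \emph{second}-order term. The paper gets this by Taylor-expanding $\tilde s^2(\tilde{\mathcal B}_r,\tilde\Gamma_r)$ around the unrestricted S-estimator $(\tilde{\mathcal B},\tilde\Gamma)$, where the gradient vanishes exactly by the estimating equations, and showing the Hessian in the regression direction converges to $(\sigma\eta_0/\gamma_0)\,{\rm E}_K[X^\top\Sigma^{-1}X]$ — that Hessian computation, not a first derivative, is where $\eta_0/\gamma_0$ actually enters. Your proposal contains no second-derivative computation, so the constant $\eta_0$ appears in your argument without a valid source.

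A second, smaller omission: to pass from a quadratic form in $\tilde\beta-\tilde\beta_r$ to one in $R\tilde\beta-q$ with exactly $r$ degrees of freedom, the paper needs Lemma~\ref{Lemmafirstorderapprox}, the projection identity $\sqrt n(\tilde\beta-\tilde\beta_r)=\sqrt n\,\Omega^{-1}R^\top(R\Omega^{-1}R^\top)^{-1}(R\tilde\beta-q)+o_p(1)$ with $\Omega={\rm E}_K[X^\top\Sigma^{-1}X]$, which itself requires comparing the first-order approximations of the restricted and unrestricted estimators. You assert that ``the matrix factors cancel against the covariance'' without establishing this identity; without it the degrees of freedom and the cancellation of ${\rm E}_K[X^\top\Sigma^{-1}X]$ are not justified. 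Your overall architecture (Taylor expansion of the scale, quadratic form, asymptotic normality of $\tilde\beta$, Slutsky) is the right one and matches the paper's, but as written the two load-bearing steps — the second-order expansion and the projection lemma — are missing or replaced by an incorrect first-order mechanism.
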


These asymptotic null distributions can be used to obtain p-values corresponding to the test statistics in the finite-sample case. However, this standard approach requires a sufficiently large sample size and also accurate estimates of the expectations in the proportionality factors to obtain reliable results.

Robustness of these test statistics is investigated through their influence functions. The (first-order) influence function of these test statistics equals zero. Therefore, we consider their second-order influence function~\citep{Croux2008}, which is defined as
\[ \text{IF2}(z;T,H) = \diff(^2,\epsilon^2) \left( T(H_{\epsilon,\Delta_z}) \right) \Big{|}_{\epsilon = 0}. \]
Boundedness of this influence function guarantees stability of the asymptotic level and power of the asymptotic test in presence of contamination~\citep{Heritier1994}. The next theorem yields the second-order influence functions of $\Lambda_{\text{S}}$ and $\Lambda_{\text{MM}}$ at model distribution $H_{\beta,\Sigma}$ under the null hypothesis $H_0$.

\begin{theorem}
\label{TheoremInfluenceFunctionsLambda}
If $Z=(\tilde{X}^\top,Y^\top)^\top$ has model distribution $H := H_{\beta,\Sigma}$ and if $H_0$ is true, then the second-order influence functions of $\Lambda_{\text{S}}$ and $\Lambda_{\text{MM}}$ are given by
\[
\text{IF2}(z;\Lambda_{\text{S}},H) = - \frac{2m \eta_0}{\gamma_0} (R \hspace{0.5mm} \text{IF}(z;\tilde{\beta},H) - q)^\top \left( R \hspace{0.5mm} {\rm E}_K [X^\top \Sigma^{-1} X]^{-1} R^\top \right)^{-1} (R \hspace{0.5mm} \text{IF}(z;\tilde{\beta},H) - q),\]
and
\[ \text{IF2}(z;\Lambda_{\text{MM}},H) = \left( 1 - \frac{\gamma_1}{2\delta_1} \right) \text{IF2}(z;\Lambda_{\text{S}},H) \\
- \frac{m \eta_1}{\delta_1} (R \hspace{0.5mm} \text{IF}(z;\hat{\beta},H) - q)^\top \left( R \hspace{0.5mm} {\rm E}_K [X^\top \Sigma^{-1} X]^{-1} R^\top \right)^{-1} (R \hspace{0.5mm} \text{IF}(z;\hat{\beta},H) - q). \]
\end{theorem}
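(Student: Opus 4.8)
The plan is to compute the second-order influence function of $\Lambda_{\text{S}}$ and $\Lambda_{\text{MM}}$ by a direct Taylor expansion in the contamination parameter $\epsilon$, using the already-established first-order influence functions of the S- and MM-functionals from Theorem~\ref{TheoremInfluenceFunctions}. Since both test statistics have the common form $-2nm\ln(\tilde s/\tilde s_r)$ (or its MM-analogue), I would first pass to the functional version: at $H_{\epsilon,\Delta_z}$ write $\Lambda_{\text{S}}(\epsilon) = -2m \left( \ln \tilde s(\epsilon) - \ln \tilde s_r(\epsilon) \right)$ where $\tilde s(\epsilon)$ and $\tilde s_r(\epsilon)$ are the M-scale functionals evaluated at the full-model and restricted-model S-functionals for the contaminated distribution. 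The factor $nm$ versus $m$ reflects the usual rescaling so that the test-statistic influence analysis is carried out per observation; the key point is that under $H_0$ the functionals agree at $\epsilon=0$, i.e. $\tilde s(0) = \tilde s_r(0) = \sigma$, so $\Lambda_{\text{S}}(0) = 0$, and moreover $\frac{d}{d\epsilon}\Lambda_{\text{S}}(\epsilon)\big|_{\epsilon=0} = 0$ because the first derivatives of $\ln\tilde s$ and $\ln\tilde s_r$ coincide at the null (the restricted and unrestricted scale functionals have the same influence function when $H_0$ holds, since $R\beta=q$ forces the two fits to agree to first order along the constraint-complementary directions). This is what forces us to the second order.

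The main computation is then the second derivative at $\epsilon=0$. I would expand $\ln \tilde s(\epsilon)$ to second order using the defining equation ${\rm E}_{H_\epsilon}[\rho_0(\norm{e(\tilde{\mathcal B}(\epsilon))}/\tilde s(\epsilon))] = \delta_0$, differentiating twice implicitly; the first-order term gives the (vanishing) first-order IF, and the second-order term produces a quadratic form in $\text{IF}(z;\tilde\beta,H)$. The crucial structural fact is that the difference $\ln\tilde s(\epsilon) - \ln\tilde s_r(\epsilon)$, to second order, isolates exactly the component of $\text{IF}(z;\tilde\beta,H)$ lying in the row space of $R$: writing $\Delta(\epsilon) = R\tilde\beta(\epsilon) - q$, one gets $\Lambda_{\text{S}}(\epsilon) \approx c\,\epsilon^2 \Delta'(0)^\top M^{-1}\Delta'(0)$ for the appropriate Mahalanobis-type matrix $M = R\,{\rm E}_K[X^\top\Sigma^{-1}X]^{-1}R^\top$ coming from the asymptotic variance of $\tilde\beta$ in Theorem~\ref{TheoremAsymptoticVariance}, and the scalar $c$ is assembled from the constants $\eta_0$, $\gamma_0$ via the implicit-differentiation bookkeeping — yielding the claimed factor $-2m\eta_0/\gamma_0$. (The sign is negative in the stated formula presumably because of a convention on how $\ln$ of a ratio less than one enters; I would track this carefully.) For $\Lambda_{\text{MM}}$ I would repeat the expansion for the generalized M-scale $\hat s$ in~\eqref{GeneralizedMultivariateMscale}, which is a product $\tilde s(\tilde{\mathcal B},\tilde\Gamma)\cdot(\text{correction involving }\rho_1, \hat{\mathcal B},\hat\Gamma,\delta_1)$; the logarithm splits the product additively, the $\tilde s$-part reproduces $\text{IF2}(z;\Lambda_{\text{S}},H)$ up to the factor $\gamma_1/2\delta_1$ coming from differentiating the $\rho_1$-ratio evaluated at the S-scale, and the genuinely new part is a quadratic form in $\text{IF}(z;\hat\beta,H)$ with the constant $m\eta_1/\delta_1$.

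The hard part will be the careful bookkeeping of the implicit differentiation of the M-scale equations at second order while simultaneously tracking how the restricted estimator's influence function differs from the unrestricted one — in particular verifying that at $\epsilon=0$ under $H_0$ the two coincide and that their difference at order $\epsilon$ is precisely the projection of $\text{IF}(z;\hat\beta,H)$ onto the constrained directions, so that only the $R\,\text{IF}(z;\hat\beta,H)-q$ piece survives in the quadratic form. I would handle this by expanding the Lagrangian first-order conditions of the constrained minimization and using that the Lagrange multiplier is $O(\epsilon)$ under $H_0$. Once this projection identity is in hand, the remaining steps are routine: substitute the explicit first-order influence function~\eqref{InfluenceFunctionbeta}, collect the constants $\eta_0,\eta_1,\gamma_0,\gamma_1,\delta_1$, and match against the stated expressions. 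I would also double-check the second formula's combination $\left(1-\frac{\gamma_1}{2\delta_1}\right)$ by evaluating the $\rho_1$-ratio term's second derivative at the consistency point where ${\rm E}_F[\rho_1(\norm{\mathcal E}_\Sigma)] = \delta_1$, which is where that particular algebraic cancellation originates.
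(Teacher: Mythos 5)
Your plan matches the paper's own proof in all essentials: pass to the functional version $\Lambda_{\text{S}}(H)=-2m\ln(\tilde\sigma(H)/\tilde\sigma_r(H))$, note that the first-order term vanishes under $H_0$, differentiate the M-scale defining equation twice in $\epsilon$ to get $\text{IF2}(z;\tilde\sigma,H)$ as a quadratic form $\frac{\eta_0\sigma}{\gamma_0}\,\text{IF}(z;\tilde\beta,H)^\top\Omega\,\text{IF}(z;\tilde\beta,H)$ plus $\beta$-free remainder terms that cancel in the full-minus-restricted difference, reduce that difference to the $R$-direction via the block-matrix/projection identity, and treat $\Lambda_{\text{MM}}$ by splitting $\ln\hat\sigma=\ln\tilde\sigma+\tfrac12\ln(\cdot)$ at the consistency point ${\rm E}_F[\rho_1]=\delta_1$, which is indeed where the $(1-\gamma_1/2\delta_1)$ and $m\eta_1/\delta_1$ constants originate. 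The only cosmetic difference is that you propose a Lagrangian argument for general $R\beta=q$ whereas the paper works out the special case $\beta_{p_mm}=0$ with an explicit block-partition of $\Omega$; both yield the same projection identity.
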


The second-order influence functions in Theorem~\ref{TheoremInfluenceFunctionsLambda} are unbounded in $x$ but bounded in $e$. The redescending nature of the functions $w_0$ and $w_1$ guarantees that contamination in the response does not affect the test statistics when $\norm{e}_{\Sigma}$ becomes large. Hence, only good leverage points can have a large effect on the test statistics. Since $\gamma_j < 2b_j$ and the constants $\eta_j$ and $\gamma_j$ are always positive, the impact of contamination is larger for $\Lambda_{\text{MM}}$ than for $\Lambda_{\text{S}}$. The increased efficiency of MM-estimators thus implies some loss in robustness.


\subsection{Robust Test for Diagonality of the Covariance Matrix}

\noindent The following theorem shows that under the null hypothesis in~\eqref{HypothesisTestSigma} the asymptotic distribution of the robust test statistics $\text{LM}_{\text{S}}$ and $\text{LM}_{\text{MM}}$ is proportional to a chi-squared distribution.

\begin{theorem}
\label{TheoremAsymptoticDistributionLM}
Let $Z=(\tilde{X}^\top,Y^\top)^\top$ have model distribution $H_{\beta,\Sigma}$. Assume that $\Sigma$ is a diagonal matrix. Then, it holds that
\[ \text{LM}_{\text{S}} \quad \overset{d}{\longrightarrow} \quad \frac{m}{(m+2) \gamma_0^2} {\rm E}_F [\psi_0^2(\norm{\mathcal{E}}_{\Sigma}) \norm{\mathcal{E}}_{\Sigma}^2] \hspace{1mm} \chi_{m(m-1)/2}^2, \]
and
\[ \text{LM}_{\text{MM}} \quad \overset{d}{\longrightarrow} \quad \frac{m}{(m+2) \gamma_1^2} {\rm E}_F [\psi_1^2(\norm{\mathcal{E}}_{\Sigma}) \norm{\mathcal{E}}_{\Sigma}^2] \hspace{1mm} \chi_{m(m-1)/2}^2. \]
\end{theorem}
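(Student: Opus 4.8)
The plan is to reduce both statements to a single computation about the asymptotic distribution of the weighted off-diagonal "correlations'' $r_{jk}$ built from the restricted S- or MM-estimators. The key observation is that under $H_0$ the scatter matrix $\Sigma$ is diagonal, so by the scale equivariance of responses (property (c) of the MM-functionals, applied blockwise) we may assume without loss of generality that $\Sigma = I_m$; the general diagonal case then follows by rescaling each block. Under this normalization, the restricted estimators $(\tilde{\mathcal{B}}_r,\tilde{\Sigma}_r)$ and $(\hat{\mathcal{B}}_r,\hat{\Sigma}_r)$ are consistent for $(\beta, I_m)$, and the weights $w_0(d_i)$ (resp.\ $w_1(d_i)$) are, to first order, evaluated at the true residual norms $\norm{\mathcal{E}_i}$.

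First I would expand $\sqrt{n}\, r_{jk}$ around the true value $0$. Writing $e_i(\hat{\mathcal{B}}_r) = \varepsilon_i + O_p(n^{-1/2})$ and using consistency of the scatter estimate, the numerator of $r_{jk}$ becomes $\frac{1}{n}\sum_i w(\norm{\varepsilon_i}) \varepsilon_{ij}\varepsilon_{ik} + o_p(n^{-1/2})$ — here the replacement of $\hat{\mathcal{B}}_r$ by $\beta$ in the leading term is justified because the regression part enters only through terms that are $O_p(n^{-1/2})$ multiplied by expectations of the form ${\rm E}_F[w(\norm{\varepsilon}) \varepsilon_j]$, which vanish by elliptical symmetry of $F$. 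Likewise each denominator factor $\frac{1}{n}\sum_i w(\norm{\varepsilon_i})\varepsilon_{ij}^2$ converges in probability to the constant $\kappa := {\rm E}_F[w(\norm{\mathcal{E}})\,\mathcal{E}_j^2]$, which by symmetry does not depend on $j$. Hence
\[
\sqrt{n}\, r_{jk} = \frac{1}{\kappa\sqrt{n}}\sum_{i=1}^n w(\norm{\varepsilon_i})\,\varepsilon_{ij}\varepsilon_{ik} + o_p(1).
\]
A standard multivariate CLT applied to the $m(m-1)/2$-vector of terms $\big(w(\norm{\varepsilon_i})\varepsilon_{ij}\varepsilon_{ik}\big)_{j<k}$ then shows that $\{\sqrt{n}\, r_{jk}\}_{j<k}$ is jointly asymptotically normal with mean zero; by symmetry of $F$ and diagonality of $\Sigma = I_m$ the covariance matrix is diagonal with common variance ${\rm E}_F[w^2(\norm{\mathcal{E}})\,\mathcal{E}_j^2\,\mathcal{E}_k^2] / \kappa^2$. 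Therefore $\text{LM}_. = n\sum_{j<k} r_{jk}^2$ converges in distribution to that common variance times a $\chi^2_{m(m-1)/2}$ variable.

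It remains to evaluate the two moment constants in closed form for an elliptical $F$. Using the representation $\mathcal{E} = \norm{\mathcal{E}}\, U$ with $U$ uniform on the unit sphere in $\mathbb{R}^m$ and independent of $\norm{\mathcal{E}}$, one has ${\rm E}[\mathcal{E}_j^2 \mid \norm{\mathcal{E}}] = \norm{\mathcal{E}}^2/m$, so $\kappa = {\rm E}_F[w(\norm{\mathcal{E}})\norm{\mathcal{E}}^2]/m = {\rm E}_F[\psi(\norm{\mathcal{E}})\norm{\mathcal{E}}]/m = \gamma_\ell/m$ (for $\ell=0$ in the S-case, $\ell=1$ in the MM-case), since $w(u)u = \psi(u)$. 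For the numerator moment, ${\rm E}[\mathcal{E}_j^2\mathcal{E}_k^2 \mid \norm{\mathcal{E}}] = \norm{\mathcal{E}}^4 {\rm E}[U_j^2 U_k^2] = \norm{\mathcal{E}}^4/(m(m+2))$ for $j\ne k$, whence ${\rm E}_F[w^2(\norm{\mathcal{E}})\mathcal{E}_j^2\mathcal{E}_k^2] = {\rm E}_F[\psi^2(\norm{\mathcal{E}})\norm{\mathcal{E}}^2]/(m(m+2))$. Combining, the proportionality constant is
\[
\frac{{\rm E}_F[\psi_\ell^2(\norm{\mathcal{E}})\norm{\mathcal{E}}^2]/(m(m+2))}{(\gamma_\ell/m)^2} = \frac{m}{(m+2)\gamma_\ell^2}\,{\rm E}_F[\psi_\ell^2(\norm{\mathcal{E}})\norm{\mathcal{E}}^2],
\]
which is exactly the claimed factor, with $\ell=0$ giving the $\text{LM}_{\text{S}}$ statement and $\ell=1$ the $\text{LM}_{\text{MM}}$ statement.

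The main obstacle, and the step requiring genuine care rather than routine bookkeeping, is the first-order asymptotic expansion of the restricted estimators and the justification that the estimation error in $\hat{\mathcal{B}}_r$ and $\hat{\Sigma}_r$ contributes only $o_p(n^{-1/2})$ to $\sqrt{n}\, r_{jk}$: one must verify that the relevant derivative (the "influence'' of $(\hat{\mathcal{B}}_r,\hat{\Sigma}_r)$ on the off-diagonal correlation functional) vanishes at the diagonal model, which is where elliptical symmetry of $F$ — forcing odd moments such as ${\rm E}_F[\psi(\norm{\mathcal{E}})\mathcal{E}_j]$ and ${\rm E}_F[\psi(\norm{\mathcal{E}})\mathcal{E}_j\mathcal{E}_k\mathcal{E}_\ell^2/\norm{\mathcal{E}}]$-type terms to zero — does the real work. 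This is the diagonality analogue of the zero first-order influence function already noted for $\Lambda_.$, and once it is established the remainder is the CLT-plus-moment-computation outlined above.
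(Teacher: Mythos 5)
Your proposal is correct and follows essentially the same route as the paper: the paper likewise reduces $\text{LM}_{\text{MM}}$ to the weighted cross-products $\sum_i w_1(d_i)e_{i1}e_{i2}$, notes their mean is zero under $H_0$, applies the CLT with the variance formula ${\rm Var}_F[w_1(d)e_1e_2]=\frac{\sigma_{11}\sigma_{22}}{m(m+2)}{\rm E}_F[\psi_1^2(\norm{e}_\Sigma)\norm{e}_\Sigma^2]$ (quoted from Lopuha\"a, which you instead derive directly from the spherical representation), and finishes with Slutsky using consistency of the restricted scale estimates and $\frac1n\sum_i\psi_1(d_i)d_i\to\gamma_1$. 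The only differences are presentational: the paper works out $m=2$ and asserts the generalization, whereas you handle general $m$ via the joint CLT with diagonal limiting covariance, and you are more explicit than the paper about why the estimation error in $(\hat{\mathcal{B}}_r,\hat{\Sigma}_r)$ is asymptotically negligible.
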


To investigate the robustness of the resulting tests, we again derive the second-order influence function of the test statistics under the null hypothesis.

\begin{theorem}
\label{TheoremInfluenceFunctionsLM}
If $Z=(\tilde{X}^\top,Y^\top)^\top$ has model distribution $H_{\beta,\Sigma}$ and if $\Sigma$ is a diagonal matrix, then the second-order influence function of $\text{LM}_{\text{S}}$ and $\text{LM}_{\text{MM}}$ are given by
\[ \text{IF2}(z;\text{LM}_{\text{S}},H) = \frac{2 m^2}{\gamma_0^2} w_0^2(\norm{e}_{\Sigma}) \sum_{j<k} \frac{e_j^2 e_k^2}{\sigma_{jj} \sigma_{kk}}, \]
and
\[ \text{IF2}(z;\text{LM}_{\text{MM}},H) = \frac{2 m^2}{\gamma_1^2} w_1^2(\norm{e}_{\Sigma}) \sum_{j<k} \frac{e_j^2 e_k^2}{\sigma_{jj} \sigma_{kk}}. \]
\end{theorem}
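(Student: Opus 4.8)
The plan is to compute the second-order influence function by differentiating the test statistic functional $\text{LM}_\cdot(H_{\epsilon,\Delta_z}) = n_{\text{eff}} \sum_{j<k} r_{jk}(H_{\epsilon,\Delta_z})^2$ twice in $\epsilon$ at $\epsilon=0$, where $r_{jk}$ is the weighted correlation functional corresponding to the restricted estimators. Since under $H_0$ the matrix $\Sigma$ is diagonal, the population value is $r_{jk}(H) = 0$ for every $j<k$, so each summand $r_{jk}^2$ vanishes to first order and its second derivative equals $2\,(\partial_\epsilon r_{jk}(H_{\epsilon,\Delta_z})|_{\epsilon=0})^2 = 2\,\text{IF}(z;r_{jk},H)^2$. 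Hence the whole problem reduces to computing the (first-order) influence function of the weighted correlation functional $r_{jk}$ at the diagonal model, and then summing the squares over $j<k$. This also explains why the first-order influence function of $\text{LM}_\cdot$ itself is zero and why only the second order is informative, consistent with the remark preceding Theorem~\ref{TheoremInfluenceFunctionsLambda}.

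First I would write $r_{jk}$ explicitly as a smooth function of the three weighted cross-moment functionals
\[
A_{jk}(H) = {\rm E}_H\!\left[ w_\ell(\norm{e(\hat{\mathcal{B}}_r(H))}_{\hat\Sigma_r(H)})\, e_j(\hat{\mathcal{B}}_r(H))\, e_k(\hat{\mathcal{B}}_r(H)) \right],
\]
for $(j,k)$, $(j,j)$ and $(k,k)$, namely $r_{jk} = A_{jk}/\sqrt{A_{jj}A_{kk}}$, with $\ell=0$ for $\text{LM}_{\text{S}}$ and $\ell=1$ for $\text{LM}_{\text{MM}}$. Differentiating this ratio at $\epsilon=0$ and using $A_{jk}(H)=0$ (since $\Sigma$ is diagonal and the error distribution is elliptically symmetric, so the weighted cross-moment is zero), the only surviving term is $\text{IF}(z;r_{jk},H) = \text{IF}(z;A_{jk},H)/\sqrt{A_{jj}(H)A_{kk}(H)}$; the denominator contributes nothing to first order. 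So the remaining task is $\text{IF}(z;A_{jk},H)$. Here I would apply the standard chain rule for influence functions of an expectation functional ${\rm E}_H[\varphi(H)(Z)]$: it splits into the ``point-mass part'' $\varphi(H)(z)$ evaluated at the contamination point, plus the ``plug-in part'' coming from $\text{IF}$ of the inner functionals $\hat{\mathcal{B}}_r$, $\hat\Sigma_r$. The plug-in part integrates against the symmetric error distribution and, by the same symmetry/diagonality argument used to kill $A_{jk}(H)$ itself, these contributions vanish (each would be an odd moment or a product of a diagonal matrix with an off-diagonal position). What is left is simply the point-mass term $w_\ell(\norm{e}_{\Sigma})\, e_j e_k$ with $e = y - x\beta$, where I have used that the restricted estimators are Fisher-consistent so $\hat{\mathcal{B}}_r(H)=\mathcal{B}$, $\hat\Sigma_r(H)=\Sigma$. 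Dividing by $\sqrt{A_{jj}(H)A_{kk}(H)} = \sqrt{\sigma_{jj}\sigma_{kk}}\cdot(\gamma_\ell/m)$ — where the constant $\gamma_\ell/m$ is obtained by reducing the weighted second moment ${\rm E}_F[w_\ell(\norm{\mathcal{E}}_\Sigma)\mathcal{E}_j^2]$ via the elliptical identity ${\rm E}_F[w_\ell(\norm{\mathcal{E}}_\Sigma)\mathcal{E}\mathcal{E}^\top] = (\gamma_\ell/m)\Sigma$ — yields
\[
\text{IF}(z;r_{jk},H) = \frac{m}{\gamma_\ell}\, w_\ell(\norm{e}_{\Sigma})\, \frac{e_j e_k}{\sqrt{\sigma_{jj}\sigma_{kk}}}.
\]
Squaring, multiplying by $2$ and summing over $j<k$ (the factor $n$ is absorbed into the normalization of the statistic as a functional of the empirical distribution, exactly as for $\text{LM}_{\text{MLE}}$) gives the two displayed formulas, with $\ell=0$ for $\text{LM}_{\text{S}}$ and $\ell=1$ for $\text{LM}_{\text{MM}}$.

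The main obstacle is the careful bookkeeping of the plug-in terms: one must justify that the contributions of $\text{IF}(z;\hat{\mathcal{B}}_r,H)$ and $\text{IF}(z;\hat\Sigma_r,H)$ to $\text{IF}(z;A_{jk},H)$ really do vanish. This rests on two facts — (i) under the diagonal model with elliptically symmetric, mean-zero errors, the relevant mixed moments ${\rm E}_F[(\text{derivative of }w_\ell\text{ or }e_je_k)\cdot\mathcal{E}]$ are odd in components of $\mathcal{E}$ and hence zero, and (ii) ${\rm E}_F[w_\ell(\norm{\mathcal{E}}_\Sigma)\mathcal{E}\mathcal{E}^\top]$ is proportional to $\Sigma$, which is diagonal, so it has no $(j,k)$ entry for $j\neq k$ to perturb. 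Establishing these requires the expressions for $\text{IF}(z;\hat{\mathcal{B}}_r,H)$ and $\text{IF}(z;\hat\Sigma_r,H)$, which have the same structure as those in Theorem~\ref{TheoremInfluenceFunctions} but for the restricted (diagonal-$\Sigma$) estimator; since those influence functions are themselves linear in $e$ (for $\hat{\mathcal{B}}$) or quadratic/even in $e$ but multiplied against the already-diagonal $\Sigma$ (for $\hat\Sigma$), the symmetry cancellation goes through. Once this vanishing is in hand, the rest is the elementary ratio-differentiation and the elliptical moment reduction sketched above.
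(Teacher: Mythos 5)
Your proposal is correct and follows essentially the same route as the paper: both write the statistic as a ratio of weighted (cross-)moment functionals, use that the off-diagonal weighted cross-moment vanishes at the diagonal model so that $\text{IF2}$ reduces to twice the squared first derivative of the numerator over the product of denominators, kill the plug-in contributions of the restricted estimators by elliptical symmetry, and evaluate the denominators via ${\rm E}_F[w_\ell(\norm{\mathcal{E}}_\Sigma)\mathcal{E}\mathcal{E}^\top]=(\gamma_\ell/m)\Sigma$. The paper merely carries this out explicitly for $m=2$ and remarks that it generalizes, whereas you phrase it termwise for each $r_{jk}$; the content is the same.
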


This theorem shows that leverage points do not influence the test statistic and that large response outliers have zero influence as well. The boundedness of the second-order influence functions ensures the stability of the asymptotic level and power of these diagonality tests~\citep{Heritier1994}.


\section{Finite-Sample Performance of diagonality test (continued)}

In Section~\ref{Finite-Sample Performance} of the manuscript we have investigated the power of the diagonality test for the situation where $\Sigma$ is an equicorrelation matrix with correlation $\tau$ ranging from 0 to 0.5 with step length 0.1. While for this setting the deviation from diagonality was present in all covariance elements, we now consider a situation that is less diverging from diagonality. In particular, we consider the same simulation setting as in the final paragraph of Section~\ref{Finite-Sample Performance} but now set $\Sigma$ equal to
\[ \begin{bmatrix}
1 & \tau & 0 \\
\tau & 1 & 0 \\
0 & 0 & 1 \\
\end{bmatrix}, \]
where $\tau$ takes values from 0 to 0.5 with step length 0.1. For data simulated under this alternative hypothesis, the resulting power curves are shown in Figure~\ref{paper_simulationFRBsurLM_diagtestH1_extra}.
\begin{figure}[ht!]
\centering
\includegraphics[width=\textwidth, trim= 0mm 152.4mm 0mm 0mm, clip]{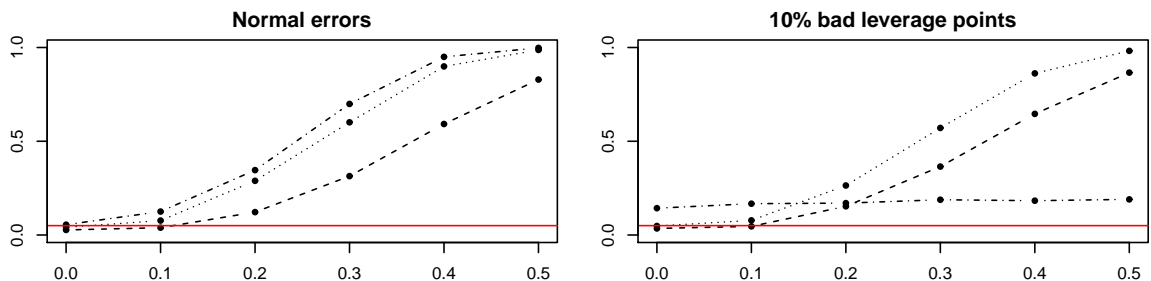}
\caption{Power curves of the hypothesis test in~\eqref{HypothesisTestSigma} based on the test statistics $\text{LM}_{\text{S}}$ (dashed), $\text{LM}_{\text{MM}}$ (dotted) and $\text{LM}_{\text{MLE}}$ (dash-dotted). The solid (red) line represents the rejection level of 5\%.}
\label{paper_simulationFRBsurLM_diagtestH1_extra}
\end{figure}
The left plot corresponds to the situation with normal errors without outliers. The right panel shows the power curves in case 10\% contamination is added to the data as in Section~\ref{Finite-Sample Performance} of the manuscript.  Compared to the equicorrelation setting considered in the manuscript, all power curves increase at a slower pace because we are now considering a difficult case where only one correlation is responsible for the deviation from diagonality. 
However, when comparing the classical and robust Breusch-Pagan tests, the same conclusions can be drawn as in the manuscript. In absence of contamination the test based on MM-estimators performs almost as well as the classical test. Moreover, in contrast to the classical test its performance is not much affected in the presence of bad leverage points. 


\section{Robust Confidence Intervals}

\noindent The results in Theorem~\ref{TheoremAsymptoticVariance} can be used to construct confidence intervals for the parameters in the SUR model based on their MM-estimates. For example, a $100(1-\alpha)\%$ confidence interval for a regression parameter $\beta_{kl}$ can be obtained as
\begin{equation}
\label{AsymptoticConfidenceInterval}
[\hat{\beta}_{kl} - z_{1-\alpha/2} \sqrt{\text{ASV}(\hat{\beta}_{kl},\hat{H}_n)/n}, \hat{\beta}_{kl} + z_{1-\alpha/2} \sqrt{\text{ASV}(\hat{\beta}_{kl},\hat{H}_n)/n}],
\end{equation}
with $z_{\alpha}$ the $\alpha$ quantile of the standard normal distribution and $\text{ASV}(\hat{\beta}_{kl},\hat{H}_n)$ an estimate of the asymptotic variance in~\eqref{AsymptoticVariancebeta} based on the empirical distribution corresponding to the data.

Alternatively, regular bootstrap confidence intervals are constructed as follows. Let $(\hat{\beta}_{kl}^*)_1,\ldots,(\hat{\beta}_{kl}^*)_N$ be a set of $N$ parameter estimates based on bootstrap samples. Then a $100(1-\alpha)\%$ percentile confidence interval for $\beta_{kl}$ is obtained as
\begin{equation*}
[(\hat{\beta}_{kl}^*)_{((N+1)\alpha_L)}, (\hat{\beta}_{kl}^*)_{((N+1)\alpha_R)}],
\end{equation*}
where $(\hat{\beta}_{kl}^*)_{(.)}$ denotes the order statistics corresponding to the bootstrap estimates. Basic percentile (BP) confidence intervals select $\alpha_L=\alpha/2$ and $\alpha_R=1-\alpha/2$. To improve the accuracy of the confidence intervals, the bias-corrected and accelerated (BCa) method~\citep{Efron1987} can be used to determine the confidence levels $\alpha_L$ and $\alpha_R$. See, e.g., \citet{Davison1997} for more details on percentile methods.

As explained in Section~\ref{Fast and Robust Bootstrap} of the manuscript, confidence intervals based on standard bootstrap are not attractive because they are not robust. Therefore, we propose to construct bootstrap confidence intervals based on the FRB estimates. For example, a $100(1-\alpha)\%$ FRB percentile confidence interval for $\beta_{kl}$ is computed as
\begin{equation}
\label{FRBPercentileConfidenceInterval}
[(\hat{\beta}_{kl}^{R*})_{((N+1)\alpha_L)}, (\hat{\beta}_{kl}^{R*})_{((N+1)\alpha_R)}],
\end{equation}
where $(\hat{\beta}_{kl}^{R*})_1,\ldots,(\hat{\beta}_{kl}^{R*})_N$ is a set of $N$ FRB bootstrap replicates.


\section{Finite-Sample Performance for Confidence Intervals}

\noindent The performance of confidence intervals obtained by FRB based on robust S and MM-estimators for the SUR model is investigated by simulation. We focus on intervals for the regression coefficients $\beta$ with 95\% confidence level. The performance is measured by their coverage and their average length.

We consider the same simulation setting as in Section~\ref{Finite-Sample Performance} of the manuscript. Robust S-estimators and MM-estimators are computed with maximal breakdown point of 50\% and the MM-estimator has 90\% efficiency. $N=1000$ bootstrap samples are generated for the FRB. Three different confidence intervals are calculated for the slopes in the model: asymptotic confidence intervals (AS) given by~\eqref{AsymptoticConfidenceInterval}, and BP and BCa confidence intervals according to~\eqref{FRBPercentileConfidenceInterval} based on FRB. For each simulation setting the coverage is estimated by the fraction of the confidence intervals that contains the true value of the parameter. The reported coverage and average lengths of the confidence intervals are the average results for all slopes in the model. In Figure~\ref{paper_simulationFRBsurMM_cover} the coverage is depicted as a function of sample size, while the average interval lengths are given in Tables~\ref{simulation_FRBsurMM_meanlength_normal} and~\ref{simulation_FRBsurMM_meanlength_cont10} for data with normal errors, containing 0\% or 10\% of contamination, respectively.
\begin{figure}[!ht]
\centering
\includegraphics[width=\textwidth]{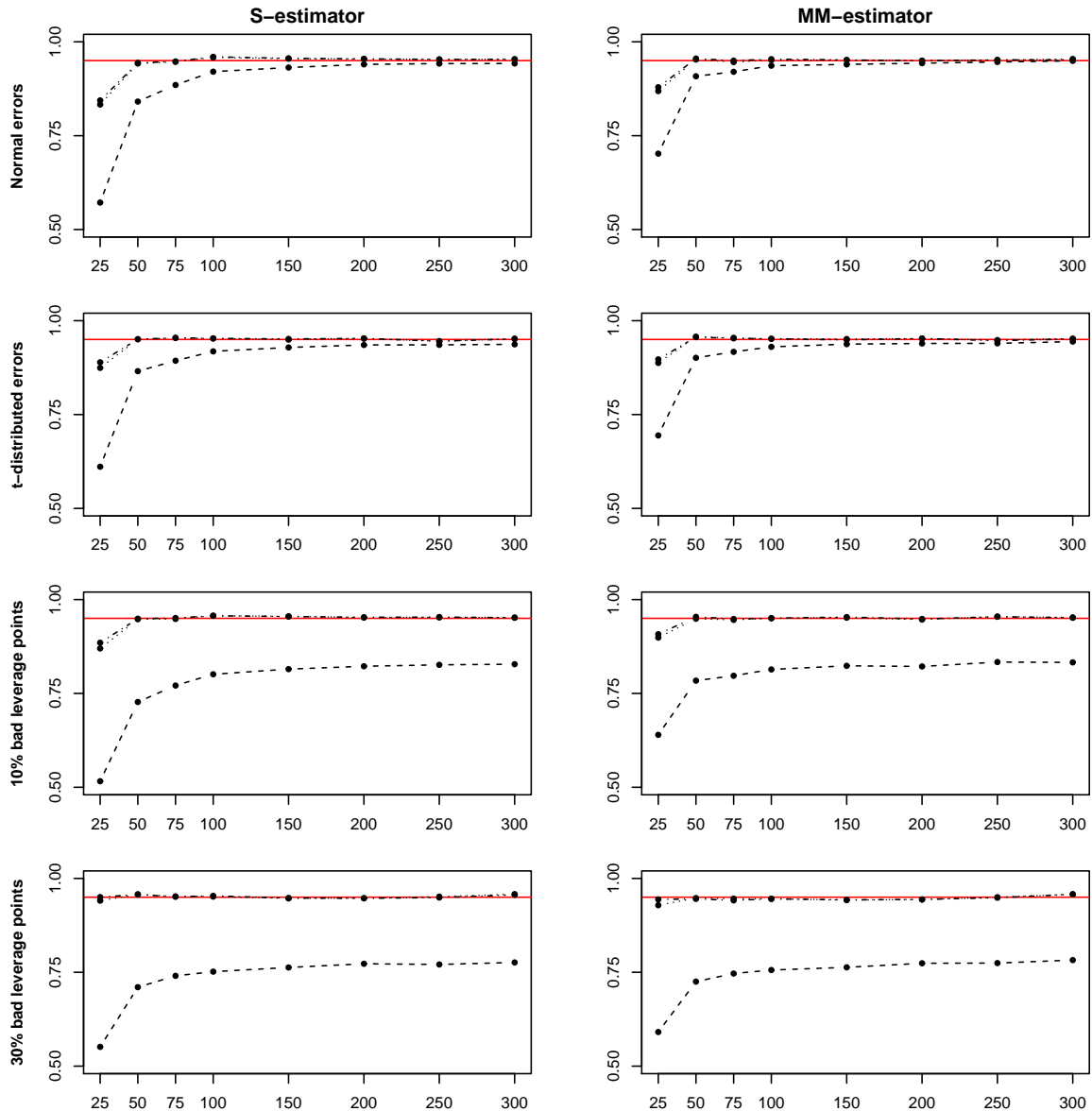}
\caption{Coverage results of 95\% confidence intervals obtained by the AS (dashed), BP (dash-dotted) and BCa (dotted) methods. The solid (red) line represents the nominal level of 95\%.}
\label{paper_simulationFRBsurMM_cover}
\end{figure}
\begin{table}[ht!]
\renewcommand{\arraystretch}{1.25}
\begin{center}
\begin{tabular}{l l c c c c c c c c}
\hline
\textbf{Estimator} & \textbf{Type} & \multicolumn{8}{c}{\textbf{Sample size $n$}} \\
& & 25 & 50 & 75 & 100 & 150 & 200 & 250 & 300 \\
\hline
  & AS  & 0.467 & 0.470 & 0.406 & 0.359 & 0.298 & 0.260 & 0.234 & 0.214 \\
S & BP  & 1.315 & 0.786 & 0.538 & 0.437 & 0.336 & 0.283 & 0.250 & 0.226 \\
  & BCa & 1.319 & 0.787 & 0.539 & 0.438 & 0.338 & 0.284 & 0.251 & 0.227 \\
\hline
   & AS  & 0.533 & 0.458 & 0.380 & 0.331 & 0.272 & 0.236 & 0.212 & 0.194 \\
MM & BP  & 1.064 & 0.572 & 0.431 & 0.362 & 0.288 & 0.245 & 0.218 & 0.198 \\
   & BCa & 1.126 & 0.573 & 0.433 & 0.364 & 0.289 & 0.246 & 0.219 & 0.199 \\
\hline
\end{tabular}
\end{center}
\caption{Average length of 95\% confidence intervals obtained with the AS, BP and BCa methods for normal errors and without contamination.}
\label{simulation_FRBsurMM_meanlength_normal}
\end{table}
\begin{table}[ht!]
\renewcommand{\arraystretch}{1.25}
\begin{center}
\begin{tabular}{l l c c c c c c c c}
\hline
\textbf{Estimator} & \textbf{Type} & \multicolumn{8}{c}{\textbf{Sample size $n$}} \\
& & 25 & 50 & 75 & 100 & 150 & 200 & 250 & 300 \\
\hline
  & AS  & 0.387 & 0.349 & 0.298 & 0.261 & 0.216 & 0.188 & 0.169 & 0.155 \\
S & BP  & 1.506 & 0.751 & 0.526 & 0.430 & 0.334 & 0.282 & 0.250 & 0.226 \\
  & BCa & 1.543 & 0.754 & 0.528 & 0.432 & 0.336 & 0.283 & 0.251 & 0.227 \\
\hline
   & AS  & 0.426 & 0.340 & 0.283 & 0.246 & 0.202 & 0.175 & 0.157 & 0.144 \\
MM & BP  & 1.084 & 0.587 & 0.444 & 0.374 & 0.297 & 0.254 & 0.226 & 0.205 \\
   & BCa & 1.108 & 0.590 & 0.446 & 0.376 & 0.298 & 0.255 & 0.227 & 0.206 \\
\hline
\end{tabular}
\end{center}
\caption{Average length of 95\% confidence intervals obtained with the AS, BP and BCa methods for normal errors and 10\% bad leverage points.}
\label{simulation_FRBsurMM_meanlength_cont10}
\end{table}
From Figure~\ref{paper_simulationFRBsurMM_cover} we can see that the coverage for S and MM-estimators is very similar for all settings. These results clearly show that the FRB confidence intervals reach the nominal 95\% coverage level much sooner (for $n \geq 50$ already) then the asymptotic confidence intervals, which for $n=300$ still haven't completely reached the nominal level. Moreover, there is almost no difference between the two types of FRB percentile confidence intervals. Hence, the more complex BCa intervals do not seem to offer any gain over the more simple basic percentile intervals in this case. For the situations with normal and t-distributed errors, the coverage converges to the nominal level for all three methods. On the other hand, for data contaminated with 10\% or 30\% bad leverage points in each block, the asymptotic confidence intervals fail to get close to 95\% coverage while the FRB confidence intervals still reach the nominal level quickly. This clearly shows the robustness of the FRB based confidence intervals.

Using MM-estimators does not yield confidence intervals with better coverage compared to S-estimators. However, as can be seen from Table~\ref{simulation_FRBsurMM_meanlength_normal} and~\ref{simulation_FRBsurMM_meanlength_cont10}, confidence intervals based on MM-estimators are generally shorter than those based on S-estimators. The increased efficiency of the MM-estimators thus leads to more informative confidence intervals. These tables also show that the asymptotic confidence intervals are much shorter than the FRB intervals in all cases. However, these intervals are too short, resulting in (severe) under-coverage as seen in Figure~\ref{paper_simulationFRBsurMM_cover}. Note that in terms of average length there is again little difference between the BCa and BP confidence intervals. Finally, by comparing the two tables it can be seen that 10\% of bad leverage points does not affect the average length of the FRB confidence intervals much in this setting.

Similarly as for the regression coefficients, confidence intervals for the elements of the scatter matrix $\Sigma$ or shape matrix $\Gamma$ and scale $\sigma$ can be constructed. For the shape matrix, the behavior of the confidence intervals is the same as for the regression coefficients. For the scale and the elements of the scatter matrix the performance is generally worse in presence of contamination. The reason is that the scale S-estimator is not redescending and thus contamination has a persistent effect on the scale estimate which also affects the confidence intervals.

In summary, we can conclude that asymptotic confidence intervals only yield reliable results for clean data with large sample size while FRB confidence intervals remain reliable for contaminated data and smaller sample sizes. Moreover, MM-estimators yield more informative inference than S-estimators.


\section{Example: Grunfeld Data (Continued)}
\label{Example: Grunfeld Data (Continued}

\noindent
As in Section~\ref{Example: Grunfeld Data} of the manuscript we use the Grunfeld data and consider a SUR model with three blocks corresponding to the U.S.\ corporations General Electric (GE), Westinghouse (W), and Diamond Match (DM). The SUR model is given in~\eqref{Grunfeld_model}. As before, the MM-estimates are calculated with 50\% breakdown point and a normal efficiency of 90\%. The robust coefficient estimates (and their bootstrap standard errors) are presented in Table~\ref{Grunfeld_beta_se} of the manuscript.

We consider the construction of confidence intervals corresponding to the robust MM-estimators. Confidence intervals are computed based on asymptotic results and the fast and robust bootstrap. For the FRB $N=999$ bootstrap samples are generated using case resampling.

We now compare inference results for regression coefficients in the SUR model. As an example, we first focus on $\beta_{22}$, the slope for predictor Capital in the regression equation for Westinghouse. A histogram of the FRB replications of $\hat{\beta}_{22}$ is presented in Figure~\ref{paper_Grunfeld_bootstraphist_MMbeta22}.
\begin{figure}[ht!]
\centering
\includegraphics[width=\textwidth]{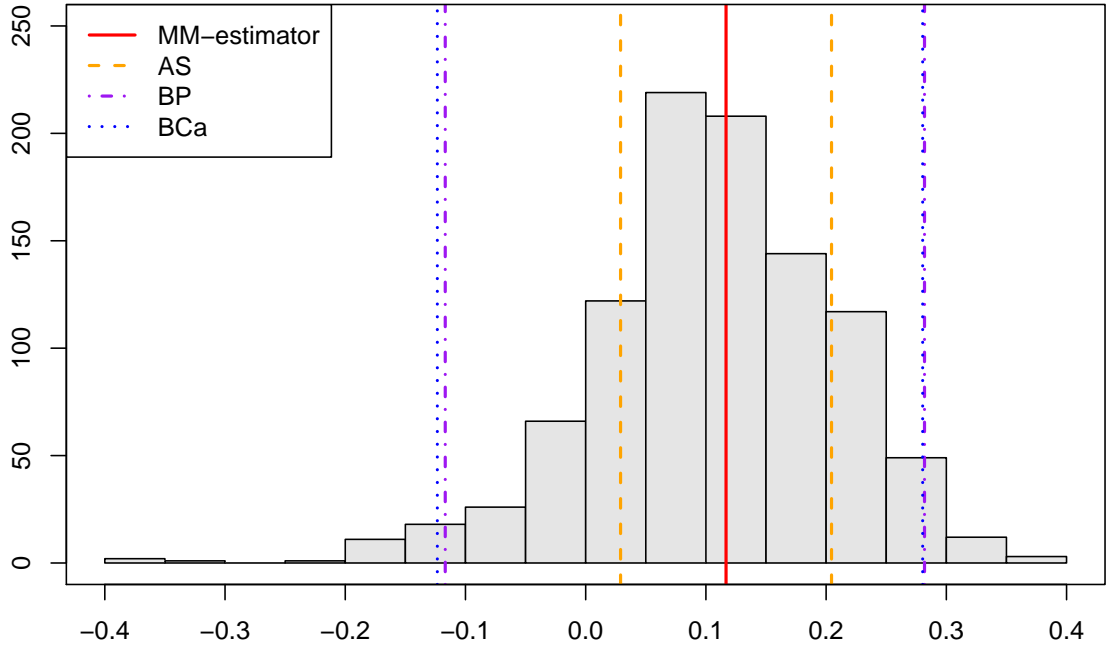}
\caption{Histogram of FRB replications of $\hat{\beta}_{22}$ in the SUR model for the Grunfeld data. The solid (red) line corresponds to the MM-estimate $\hat{\beta}_{22}$. Three $95\%$ confidence intervals based on the MM-estimate are shown. The dashed (orange) lines represent the boundaries of the asymptotic confidence interval. The dash-dotted (purple) and dotted (blue) lines show the bound of the BP and BCa confidence intervals, respectively.}
\label{paper_Grunfeld_bootstraphist_MMbeta22}
\end{figure}
The solid (red) vertical line corresponds to the MM-estimate of this coefficient as reported in Table~\ref{Grunfeld_beta_se}. The dashed (orange) lines represent the asymptotic $95\%$ confidence interval based on the MM-estimate as given by~\eqref{AsymptoticConfidenceInterval}. The dash-dotted (purple) and dotted (blue) lines represent the BP and BCa confidence intervals as given by~\eqref{FRBPercentileConfidenceInterval}, respectively. It can immediately be seen that the asymptotic confidence interval which relies on the assumption that the distribution of $\hat{\beta}_{22}$ is a normal distribution is much narrower than the other two. However, from the histogram of the FRB replications it is clear that the bootstrap distribution is skewed, which indicates that the normality assumption is not realistic. The two bootstrap confidence intervals do not rely on the normality assumption and they can also better resist the effect of outliers, which makes them more reliable in this case. When checking significance of this regression coefficient, the bootstrap confidence intervals yield a different conclusion than the asymptotic confidence interval. Indeed, both bootstrap percentile confidence intervals contain zero, implying that the coefficient is non-significant, but based on the asymptotic confidence interval the coefficient would be considered significant. However, as seen in the simulations, the asymptotic confidence interval is most likely too small, leading to under-coverage and too optimistic conclusions.

The three confidence intervals for each of the regression coefficients are reported in Table~\ref{Grunfeld_beta_conf}.
\begin{table}[ht!]
\renewcommand{\arraystretch}{1.25}
\begin{center}
\begin{tabular}{l l r@{.}l  r@{.}l r@{.}l  r@{.}l r@{.}l r@{.}l}
\hline
\textbf{Corporation} & \textbf{Coefficient} & \multicolumn{4}{c}{\textbf{AS}} & \multicolumn{4}{c}{\textbf{BP}} & \multicolumn{4}{c}{\textbf{BCa}} \\
 & & \multicolumn{2}{r}{lower} & \multicolumn{2}{r}{upper} & \multicolumn{2}{r}{lower} & \multicolumn{2}{r}{upper} & \multicolumn{2}{r}{lower} & \multicolumn{2}{r}{upper} \\
\hline
\multirow{3}{*}{GE} & $\beta_{01}$ & -68&941 & 7&619 & -84&541 & 21&659 & -89&872 &  17&221 \\
                    & $\beta_{11}$ &   0&015 & 0&051 &   0&010 &  0&069 &   0&012 &   0&071 \\
                    & $\beta_{21}$ &   0&117 & 0&187 &   0&083 &  0&184 &   0&096 &   0&186 \\
\hline
\multirow{3}{*}{W}  & $\beta_{02}$ & -19&106 & 6&465 & -25&379 & 18&015 & -34&574 &  10&736 \\
                    & $\beta_{12}$ &   0&035 & 0&082 &   0&017 &  0&105 &   0&027 &   0&121 \\
                    & $\beta_{22}$ &   0&029 & 0&204 &  -0&117 &  0&282 &  -0&123 &   0&280 \\
\hline
\multirow{3}{*}{DM} & $\beta_{03}$ &  -2&253 & 0&543 &  -2&187 &  0&110 &  -2&167 &   0&170 \\
                    & $\beta_{13}$ &  -0&016 & 0&020 &  -0&011 &  0&021 &  -0&013 &   0&019 \\
                    & $\beta_{23}$ &   0&554 & 0&674 &   0&411 &  0&773 &   0&390 &   0&757 \\
\hline
\end{tabular}
\end{center}
\caption{Three $95\%$ confidence intervals (AS, BP and BCa) for the regression coefficients in the SUR model for the Grunfeld data, based on their MM-estimates.}
\label{Grunfeld_beta_conf}
\end{table}
As already seen in the simulations, both percentile confidence intervals are very similar, while the asymptotic confidence intervals are generally much shorter. This illustrates again that asymptotic confidence intervals may lead to unreliable conclusions.


\section*{Appendix}


\noindent \textbf{Partial derivatives of $g$.} In order to apply the fast and robust bootstrap procedure the partial derivatives $\nabla g$ need to be computed. The Jacobian of $g = (g_1^\top, g_2^\top, g_3^\top, g_4^\top)^\top$ given in equation~\eqref{FRBgfunctionMMestimator} has the following form
\[ \nabla g =
\begin{bmatrix}
\diff(g_1,\hat{\beta}) & \diff(g_1,\hat{\Gamma}) & \diff(g_1,\tilde{\Sigma}) & 0                        \\[3mm]
\diff(g_2,\hat{\beta}) & \diff(g_2,\hat{\Gamma}) & \diff(g_2,\tilde{\Sigma}) & 0                        \\[3mm]
0                      & 0                       & \diff(g_3,\tilde{\Sigma}) & \diff(g_3,\tilde{\beta}) \\[3mm]
0                      & 0                       & \diff(g_4,\tilde{\Sigma}) & \diff(g_4,\tilde{\beta})
\end{bmatrix}. \]
Note that the two upper rows in this gradient correspond to the estimating equations of the MM-estimator, while the two bottom rows correspond to those of the S-estimator. The expressions for the S-estimator are omitted because these are similar to the derivatives for the MM-estimator. Consider the matrices
\[ A_i = {\rm bdiag}(a_i,\ldots,a_i) \quad i=1,\ldots,n \]
where the vector $a_i$ is repeated $m$ times. The vector $a_i$ has length $n$ and is defined as $a_i = (0, \ldots, 0, 1, 0, \ldots, 0)^\top$ with the 1 at the $i$th entry of the vector. Write $\tilde{y}_i = A_i^\top y$ and $x_i = A_i^\top X$, that is, $\tilde{y}_i$ and $x_i$ contain the information of the $i$th observation across all blocks. Introduce the following notation as well:
\begin{alignat*}{2}
U \quad &= \quad X^\top ( \hat{\Sigma}^{-1} \kron D) X \quad && (p \times p) \\
W \quad &= \quad X^\top ( \hat{\Sigma}^{-1} \kron D) y \quad && (p \times 1) \\
T \quad &= \quad (W \kron I_p)^\top (U^{-1} \kron U^{-1}) \quad && (p \times p^2) \\
V \quad &= \quad (Y - \tilde{X} \hat{\mathcal{B}})^\top D (Y - \tilde{X} \hat{\mathcal{B}}) \quad && (m \times m) \\
S \quad &= \quad \abs{V}^{-1/m} \left( I_{m^2} - \frac{1}{m} {\rm vec} (V) {\rm vec} (V^{-1})^\top \right) \quad && (m^2 \times m^2)
\end{alignat*}
Straightforward derivations then lead to the following expressions:
\begin{align*}
\diff(g_1,\hat{\beta}) &= T \Bigg( \sum_{i=1}^n \frac{w_1'(d_i)}{d_i \tilde{\sigma}^2} {\rm vec} (x_i^\top \hat{\Gamma}^{-1} x_i) (x_i^\top \hat{\Gamma}^{-1} e_i)^\top \Bigg) - U^{-1} \Bigg( \sum_{i=1}^n \frac{w_1'(d_i)}{d_i \tilde{\sigma}^2} (x_i^\top \hat{\Gamma}^{-1} \tilde{y}_i) (x_i^\top \hat{\Gamma}^{-1} e_i)^\top \Bigg), \\
\diff(g_1,\hat{\Gamma}) &= T \Bigg( \sum_{i=1}^n w_1(d_i) (x_i^\top \kron x_i^\top) (\hat{\Gamma}^{-1} \kron \hat{\Gamma}^{-1}) + \frac{w_1'(d_i)}{2 d_i \tilde{\sigma}^2} {\rm vec} (x_i^\top \hat{\Gamma}^{-1} x_i) {\rm vec} (\hat{\Gamma}^{-1} e_i e_i^\top \hat{\Gamma}^{-1})^\top \Bigg) \\
& \hspace{5mm} - U^{-1} \Bigg( \sum_{i=1}^n w_1(d_i) (\tilde{y}_i^\top \kron x_i^\top) (\hat{\Gamma}^{-1} \kron \hat{\Gamma}^{-1}) + \frac{w_1'(d_i)}{2 d_i \tilde{\sigma}^2} (x_i^\top \hat{\Gamma}^{-1} \tilde{y}_i) {\rm vec} (\hat{\Gamma}^{-1} e_i e_i^\top \hat{\Gamma}^{-1})^\top \Bigg), \\
\diff(g_1,\tilde{\Sigma}) &= T \left( \sum_{i=1}^n \frac{w_1'(d_i) d_i}{2 m \tilde{\sigma}^2} {\rm vec} (x_i^\top \hat{\Gamma}^{-1} x_i) {\rm vec} (\tilde{\Gamma}^{-1})^\top \right) - U^{-1} \left( \sum_{i=1}^n \frac{w_1 (d_i) d_i}{2 m \tilde{\sigma}^2} (x_i^\top \hat{\Gamma}^{-1} \tilde{y}_i) {\rm vec} (\tilde{\Gamma}^{-1})^\top \right), \\
\diff(g_2,\hat{\beta}) &= - S \Bigg( \sum_{i=1}^n w_1(d_i) \big( x_i \kron \tilde{y}_i + \tilde{y}_i \kron x_i - (x_i \kron x_i) (I_p \kron \hat{\beta} + \hat{\beta} \kron I_p) \big) + \frac{w_1'(d_i)}{d_i \tilde{\sigma}^2} {\rm vec} (e_i e_i^\top) (x_i^\top \hat{\Gamma}^{-1} e_i)^\top \Bigg), \\
\diff(g_2,\hat{\Gamma}) &= - S \left( \sum_{i=1}^n \frac{w_1'(d_i)}{2 d_i \tilde{\sigma}^2} {\rm vec} (e_i e_i^\top) {\rm vec} (\hat{\Gamma}^{-1} e_i e_i^\top \hat{\Gamma}^{-1})^\top \right), \\
\diff(g_2,\tilde{\Sigma}) &= - S \left( \sum_{i=1}^n \frac{w_1'(d_i) d_i}{2 m \tilde{\sigma}^2} {\rm vec} (e_i e_i^\top) {\rm vec} (\tilde{\Gamma}^{-1})^\top \right).
\end{align*}


\noindent \textbf{Consistency condition of $\Lambda_{\text{S}}$ and $\Lambda_{\text{MM}}$.} Consider the $h$ function of $\Lambda_{\text{S}}$ defined through equations~\eqref{RobustLikelihoodRatioS_consistent} and~\eqref{MultivariateMscale} (a similar derivation holds for $\Lambda_{\text{MM}}$). In order for the partial derivatives of $h$ to vanish it is sufficient to show that the partial derivatives of $\tilde{s}(b,G)$ converge to zero for $(\tilde{\beta},\tilde{\Gamma})$. Differentiating~\eqref{MultivariateMscale2} with respect to $b$ leads to
\[ \frac{1}{n} \sum_{i=1}^n \frac{\psi_0 (d_i(b,G))}{\tilde{s}^2(b,G)} \left( \diff(,b) \left( \sqrt{e_i^\top(b) \phi(G^{-1}) e_i(b)} \right) \tilde{s}(b,G) - d_i \diff({\tilde{s}(b,G)},b) \right) = 0. \]
Rearranging terms and evaluating the inner derivative we obtain
\[ \diff({\tilde{s}(b,G)},b) = - \left( \sum_{i=1}^n \frac{w_0(d_i(b,G))}{\tilde{s}^2(b,G)} X_i^\top \phi(G^{-1}) e_i(b) \right) \left( \sum_{i=1}^n \frac{\psi_0 (d_i(b,G)) d_i(b,G)}{\tilde{s}(b,G)} \right)^{-1}, \]
which is exactly zero for $(b,G) = (\tilde{\beta},\tilde{\Gamma})$ due to the estimating equations of $\tilde{\beta}$.

Differentiating~\eqref{MultivariateMscale2} with respect to $G$ leads to
\[ \frac{1}{n} \sum_{i=1}^n \frac{\psi_0 (d_i(b,G)) d_i(b,G)}{\tilde{s}(b,G)} \diff({\tilde{s}(b,G)},G) = \frac{1}{n} \sum_{i=1}^n \frac{\psi_0 (d_i(b,G))}{2 \tilde{s}^2(b,G) d_i(b,G)} \diff(,G) \left( \sqrt{e_i^\top(b) \phi(G^{-1}) e_i(b)} \right). \]
The right hand-side of this equality can be simplified to
\[ - \abs{G^{-1}}^{-1/m} G^{-1} \left( \sum_{i=1}^n \frac{\psi_0 (d_i(b,G))}{2 \tilde{s}^2(b,G) d_i(b,G)} e_i(b) e_i^\top(b) \right) G^{-1} + \frac{1}{m} \abs{G^{-1}}^{-1/m} G^{-1} \sum_{i=1}^n \frac{\psi_0 (d_i(b,G)) d_i(b,G)}{2}. \]
By evaluating the previous line at $(b,G) = (\tilde{\beta},\tilde{\Gamma})$ and using the estimating equations for $\tilde{\Sigma}$, it can be shown that the right hand-side reduces to zero. \vspace{0.5cm}


\begin{proof}[Proof of Theorem~\ref{TheoremInfluenceFunctions}]
Let $Z=(\tilde{X}^\top,Y^\top)^\top$ have model distribution $H=(K,F)$ with $K$ the distribution of $\tilde{X}$ and $F:=F_\Sigma$ the elliptically symmetric distribution of $Y$. Let us denote $H_{\epsilon} = H_{\epsilon,\Delta_{z}}$ to simplify the notation. The influence function of $\hat{\beta}(H)$ is obtained by differentiating the estimating equations for $\hat{\beta}(H_{\epsilon})$ w.r.t.\ $\epsilon$ and evaluating the result at $\epsilon = 0$. The derivation of these equations is similar as in the finite-sample case. For a general distribution function $H$ of $Z=(\tilde{X}^\top,Y^\top)^\top$, the estimating equations of the MM-functionals $\hat{\beta}(H)$ and $\hat{\Sigma}(H)$ are given by
\begin{gather*}
\int w_1(d(H)) x^\top \hat{\Sigma}^{-1}(H) e(H) \hspace{0.5mm} dH(z) = 0 \\
\hat{\Sigma}(H) \int \psi_1(d(H)) d(H) \hspace{0.5mm} dH(z) = m \int w_1(d(H)) e(H) e(H)^\top \hspace{0.5mm} dH(z)
\end{gather*}
where $d^2(H) = e(H)^\top \hat{\Sigma}^{-1}(H) e(H)$ and $e(H) = y - x \hat{\beta}(H)$. We thus have
\[ \diff(,\epsilon) \left[ \int w_1(d(H_{\epsilon})) x^\top \hat{\Sigma}^{-1}(H_{\epsilon}) e(H_{\epsilon}) \hspace{0.5mm} dH_{\epsilon}(z) \right] \Bigg{|}_{\epsilon = 0} = 0, \]
which can be rewritten as
\[ \diff(,\epsilon) \bigg[ (1-\epsilon) \int w_1(d(H_{\epsilon})) x^\top \hat{\Sigma}^{-1}(H_{\epsilon}) e(H_{\epsilon}) \hspace{0.5mm} dH(z) + \epsilon \int w_1(d(H_{\epsilon})) x^\top \hat{\Sigma}^{-1}(H_{\epsilon}) e(H_{\epsilon}) \hspace{0.5mm} d\Delta_{z}(z) \bigg] \Bigg{|}_{\epsilon = 0} = 0. \]
Applying the chain rule and using the estimating equation at $H$ yields
\[ \diff(,\epsilon) \left[ \int w_1(d(H_{\epsilon})) x^\top \hat{\Sigma}^{-1}(H_{\epsilon}) e(H_{\epsilon}) \hspace{0.5mm} dH(z) \right] \Bigg{|}_{\epsilon = 0} + \int w_1(d(H)) x^\top \hat{\Sigma}^{-1}(H) e(H) \hspace{0.5mm} d\Delta_{z} (z) = 0. \]
The second term simplifies to $w_1(\norm{y}_{\Sigma}) x^\top \Sigma^{-1} y$. Differentiation of the first term and symmetry of $F$ yields
\[\int \diff(,\epsilon) \left( w_1(d(H_{\epsilon})) \right) \Big{|}_{\epsilon = 0} x^\top \Sigma^{-1} y \hspace{0.5mm} dH(z) + \int w_1(\norm{y}_{\Sigma}) x^\top \Sigma^{-1} \diff(,\epsilon) \left( e(H_{\epsilon}) \right) \Big{|}_{\epsilon = 0} \hspace{0.5mm} dH(z). \]
Computing the inner derivatives and simplifying the result leads to
\begin{equation}
- \int \frac{w_1'(\norm{y}_{\Sigma})}{\norm{y}_{\Sigma}} y^\top \Sigma^{-1} x \text{IF}(z;\hat{\beta},H) x^\top \Sigma^{-1} y \hspace{0.5mm} dH(z) - \int w_1(\norm{y}_{\Sigma}) x^\top \Sigma^{-1} x \hspace{0.5mm} dH(z) \hspace{0.5mm} \text{IF}(z;\hat{\beta},H).
\label{A.1}
\end{equation}
Splitting the first integral into a $\tilde{x}$ and a $y$ component yields
\[ \int x^\top \Sigma^{-1/2} \left( \int \frac{w_1'(\norm{y}_{\Sigma})}{\norm{y}_{\Sigma}} \Sigma^{-1/2} y y^\top \Sigma^{-1/2} \hspace{0.5mm} dF(y) \right) \Sigma^{-1/2} x \hspace{0.5mm} dK(\tilde{x}) \hspace{0.5mm} \text{IF}(z;\hat{\beta},H). \]
Using symmetry and results in~\citep{Lopuhaa1999} this can be rewritten as
\[ \int x^\top \Sigma^{-1/2} \left( \frac{1}{m} \int w_1'(\norm{y}_{\Sigma}) \norm{y}_{\Sigma} \hspace{0.5mm} dF(y) \hspace{0.5mm} I_m \right) \Sigma^{-1/2} x \hspace{0.5mm} dK(\tilde{x}) \hspace{0.5mm} \text{IF}(z;\hat{\beta},H). \]
Combining both integrals in~\eqref{A.1} now yields
\[ - {\rm E}_F \left[ \frac{1}{m} w_1'(\norm{Y}_{\Sigma}) \norm{Y}_{\Sigma} + w_1(\norm{Y}_{\Sigma}) \right] {\rm E}_K [X^\top \Sigma^{-1} X] \hspace{0.5mm} \text{IF}(z;\hat{\beta},H) \\
+ w_1(\norm{y}_{\Sigma}) x^\top \Sigma^{-1} y = 0. \]
Rearranging terms leads to the result in~\eqref{InfluenceFunctionbeta}.
\end{proof}


\begin{proof}[\textbf{Proof of Theorem~\ref{TheoremAsymptoticVariance}}]
Consider $Z=(\tilde{X}^\top,Y^\top)^\top$ with model distribution $H:=H_{0,\Sigma}$. The asymptotic variance of $\hat{\beta}$ is given by
\[ \text{ASV} (\hat{\beta},H) = \int \text{IF}(z;\hat{\beta},H) \text{IF}^\top(z;\hat{\beta},H) \hspace{0.5mm} dH(z). \]
Using the expression for the influence function in~\eqref{InfluenceFunctionbeta}, we obtain
\[ \int \frac{1}{\eta_1^2} w_1^2(\norm{y}_{\Sigma}) {\rm E}_K [X^\top \Sigma^{-1} X]^{-1} x^\top \Sigma^{-1} y y^\top \Sigma^{-1} x {\rm E}_K [X^\top \Sigma^{-1} X]^{-1} \hspace{0.5mm} dH(z). \]
Splitting the remaining integral yields
\[ \int x^\top \Sigma^{-1/2} \left( \int w_1^2(\norm{y}_{\Sigma}) \Sigma^{-1/2} y y^\top \Sigma^{-1/2} \hspace{0.5mm} dF(y) \right) \Sigma^{-1/2} x \hspace{0.5mm} dK(\tilde{x}), \]
which by symmetry can be rewritten as
\[ \frac{\alpha_1}{m} {\rm E}_K [X^\top \Sigma^{-1} X]. \]
Combining the results yields~\eqref{AsymptoticVariancebeta}. For a general distribution $H_{\beta,\Sigma}$ this result is obtained by using the affine equivariance property.
\end{proof}


To proof Theorem~\ref{TheoremAsymptoticDistributionLambda}, we need the following lemma.

\begin{lemma}
\label{Lemmafirstorderapprox}
Under $H_0: R \beta = q$ and the conditions of Theorem~\ref{TheoremAsymptoticDistributionLambda}, it holds that
\begin{equation}
\label{firstorderapprox}
\sqrt{n} (\tilde{\beta} - \tilde{\beta}_r) = \sqrt{n} {\rm E}_K [X^\top \Sigma^{-1} X]^{-1} R^\top \left( R \hspace{0.5mm} {\rm E}_K [X^\top \Sigma^{-1} X]^{-1} R^\top \right)^{-1} ( R \tilde{\beta} - q ) + o_p(1).
\end{equation}
\end{lemma}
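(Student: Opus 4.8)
The plan is to establish asymptotically linear (Bahadur-type) expansions for the unrestricted S-estimator $\tilde{\beta}$ and for the restricted one $\tilde{\beta}_r$, take their difference, and then remove the Lagrange multiplier that enters the restricted expansion by exploiting the constraint $R\tilde{\beta}_r = q$. Throughout I abbreviate $A := {\rm E}_K[X^\top\Sigma^{-1}X]$ and, for an observation $z=(\tilde{x}^\top,y^\top)^\top$, write $\psi(z;b) := w_0(\norm{e(b)}_{\Sigma})\,x^\top\Sigma^{-1}e(b)$ with $e(b)=y-xb$, so that (up to the replacement of the consistent nuisance estimators by their limits, which I address at the end) the estimating equations in~\eqref{EstimatingEquationS} read $n^{-1/2}\sum_{i=1}^{n}\psi(z_i;\tilde{\beta}) = o_p(1)$. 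The first preparatory step is to record that under $H_0$ both $\tilde{\beta}$ and $\tilde{\beta}_r$ are $\sqrt{n}$-consistent for the true value $\beta$ (which satisfies $R\beta=q$) and that $\tilde{\sigma},\tilde{\sigma}_r\overset{p}{\longrightarrow}\sigma$ and $\tilde{\Sigma},\tilde{\Sigma}_r\overset{p}{\longrightarrow}\Sigma$; these follow from the consistency and asymptotic normality of S-estimators in the cited literature.

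Next I would Taylor-expand the empirical estimating function. Reusing the symmetry-of-$F$ argument from the proof of Theorem~\ref{TheoremInfluenceFunctions} — which shows that differentiating $n^{-1/2}\sum_i\psi(z_i;b)$ with respect to the scatter/scale nuisance contributes nothing to first order and that its derivative in $b$ at $\beta$ has expectation $-\eta_0 A$ — one obtains, uniformly over $\norm{b-\beta}=O_p(n^{-1/2})$,
\[
\frac{1}{\sqrt{n}}\sum_{i=1}^{n}\psi(z_i;b) \;=\; S_n - \eta_0\,A\,\sqrt{n}\,(b-\beta) + o_p(1), \qquad S_n := \frac{1}{\sqrt{n}}\sum_{i=1}^{n}\psi(z_i;\beta) = O_p(1).
\]
Applied at $b=\tilde{\beta}$ this gives $\sqrt{n}(\tilde{\beta}-\beta) = \eta_0^{-1}A^{-1}S_n + o_p(1)$.

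Now I turn to $\tilde{\beta}_r$. By definition $(\tilde{\mathcal{B}}_r,\tilde{\Sigma}_r)$ minimizes $\abs{\tilde{\Sigma}_r}$ subject to the M-scale constraint in Definition~\ref{Sestimator} \emph{and} the linear restriction $Rb=q$; differentiating the associated Lagrangian with respect to $b$, and noting that the $b$-gradient of the scale constraint is, up to a nonzero scalar, exactly the unrestricted estimating function, yields $n^{-1/2}\sum_i\psi(z_i;\tilde{\beta}_r) = R^\top c_n + o_p(1)$ for a random vector $c_n\in\mathbb{R}^r$ with $c_n=O_p(1)$. Inserting $b=\tilde{\beta}_r$ in the expansion above gives $\sqrt{n}(\tilde{\beta}_r-\beta) = \eta_0^{-1}A^{-1}S_n - \eta_0^{-1}A^{-1}R^\top c_n + o_p(1)$, hence
\[
\sqrt{n}\,(\tilde{\beta}-\tilde{\beta}_r) \;=\; \eta_0^{-1}A^{-1}R^{\top}c_n + o_p(1).
\]
Since $R$ has full row rank $r$ and $A$ is positive definite, $RA^{-1}R^\top$ is invertible; left-multiplying this relation by $R$ and using $R\tilde{\beta}_r=q=R\beta$ gives $\sqrt{n}(R\tilde{\beta}-q) = \eta_0^{-1}RA^{-1}R^\top c_n + o_p(1)$, so that $\eta_0^{-1}c_n = (RA^{-1}R^\top)^{-1}\sqrt{n}(R\tilde{\beta}-q) + o_p(1)$. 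Substituting back into the displayed relation, the factor $\eta_0$ cancels and~\eqref{firstorderapprox} follows.

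The concluding algebra and the derivative evaluations are routine, the latter mirroring computations already carried out for Theorem~\ref{TheoremInfluenceFunctions}. The main obstacle is the analytic justification of the uniform expansion: one must control the Taylor remainder $o_p(1)$ uniformly over an $n^{-1/2}$-neighbourhood of $\beta$ (a stochastic-equicontinuity argument exploiting the smoothness of $\rho_0$ from (C1)) and, crucially, show that replacing $\tilde{\Sigma},\tilde{\Sigma}_r,\tilde{\sigma},\tilde{\sigma}_r$ by $\Sigma,\sigma$ in the estimating and Karush–Kuhn–Tucker equations perturbs them only by $o_p(1)$ — here the block-triangular structure of the joint gradient (the $\beta$-block decoupling to first order) is what makes the nuisance estimation asymptotically negligible for $\tilde{\beta}-\tilde{\beta}_r$.
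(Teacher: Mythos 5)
Your proof is correct, but it takes a genuinely different route from the paper's. The paper establishes the Bahadur expansion for the \emph{unrestricted} estimator exactly as you do, but handles the restricted estimator by reparametrization: it treats only the special case $H_0:\beta_{p_mm}=0$, where $\tilde{\beta}_r$ is simply the S-estimator of the submodel with the last column of $X$ deleted, writes its expansion in terms of $(\Omega^{(1)})^{-1}x_i^{(1)\top}$, and then uses the block-matrix inversion formula to identify the difference of the two expansions with $P\Omega^{-1}R^\top(R\Omega^{-1}R^\top)^{-1}R\Omega^{-1}x_i^\top$; for general $(R,q)$ the paper explicitly states that the derivation ``is quite lengthy and therefore is omitted.'' Your Lagrange-multiplier argument closes precisely that gap: characterizing $\tilde{\beta}_r$ by the KKT condition $n^{-1/2}\sum_i\psi(z_i;\tilde{\beta}_r)=R^\top c_n+o_p(1)$ (legitimate, since the $b$-gradient of the M-scale constraint is proportional to the unrestricted estimating function, as the paper's own consistency-condition computation shows), subtracting the two expansions so that $S_n$ cancels, and then eliminating $c_n$ via $R\tilde{\beta}_r=q$ handles arbitrary full-row-rank $R$ and arbitrary $q$ in one stroke, with the factor $\eta_0$ cancelling exactly as you note. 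Both arguments lean on the same unproved regularity — uniform control of the Taylor remainder over $n^{-1/2}$-neighbourhoods and the asymptotic negligibility of plugging in $\tilde{\Sigma},\tilde{\sigma}$ for $\Sigma,\sigma$ — which the paper passes over with an appeal to the delta method and which you at least flag explicitly. In short: same analytic backbone, but your treatment of the constraint is more general and arguably cleaner than the coordinate-elimination argument the paper carries out only for the rank-one special case.
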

\begin{proof}[\textbf{Proof of Lemma~\ref{Lemmafirstorderapprox}}]
We prove the lemma for the simple case $H_0: \beta_{p_mm} = 0$, that is, $R=(0,\ldots,0,1)$ and $q=0$. First, application of the delta method yields the following first-order approximation
\[ \sqrt{n} (\tilde{\beta} - \beta) = \frac{1}{\sqrt{n} \eta_0} \sum_{i=1}^n w(d_i(\beta,\tilde{\Sigma})) \Omega^{-1} x_i^\top \tilde{\Sigma}^{-1} e_i(\beta) + o_p(1), \]
where $d_i^2(\beta,\Sigma) = e_i^\top(\beta) \Sigma^{-1} e_i^\top(\beta)$, $e_i^\top(\beta) = \tilde{y}_i - x_i \beta$ and $\Omega = {\rm E}_K [X^\top \Sigma^{-1} X]$. If we replace $\tilde{\Sigma}$ with its true value $\Sigma$, we obtain an asymptotic equivalent expression. A similar expression is true for $\tilde{\beta}_r$. Decompose $\beta=(\beta^{(1)^{\scriptstyle t}},\beta^{(2)})^\top$ with $\beta^{(2)} = \beta_{p_mm}$ and similarly for $x_i$ and other variables. Then a first-order approximation for $\tilde{\beta}_r^{(1)}$ is given by
\[ \sqrt{n} (\tilde{\beta}_r^{(1)} - \beta^{(1)}) = \frac{1}{\sqrt{n} \eta_0} \sum_{i=1}^n w(d_i(\beta,\tilde{\Sigma}_r)) (\Omega^{(1)})^{-1} x_i^{(1)^{\scriptstyle t}} \tilde{\Sigma}_r^{-1} e_i(\beta) + o_p(1), \]
with $\Omega^{(1)} = {\rm E}_K [X^{(1)^{\scriptstyle t}} \Sigma^{-1} X^{(1)}]$, since under $H_0$ it holds that $e_i(\beta) = \tilde{y}_i - x_i^{(1)} \beta^{(1)}$.

In this simple case it is easy to show that the $p$th component of the right hand-side of equation~\eqref{firstorderapprox} is equal to $\tilde{\beta}^{(2)}$. Therefore, we only need to prove the lemma for the remaining components. Denote $P$ as the $(p-1) \times p$ elimination matrix, i.e., $P=(I_{p-1},(0,\ldots,0))$. Then, using the first-order approximations we obtain
\[ \sqrt{n} (\tilde{\beta}^{(1)} - \tilde{\beta}_r^{(1)}) = \frac{1}{\sqrt{n} \eta_0} \sum_{i=1}^n w(d_i(\beta,\Sigma)) \left[ P \Omega^{-1} x_i^\top - (\Omega^{(1)})^{-1} x_i^{(1)^{\scriptstyle t}} \right] \Sigma^{-1} e_i(\beta) + o_p(1). \]
Considering the general expression for the inverse of a block matrix, the terms between brackets reduce to
\[ P \Omega^{-1} R^\top (R \Omega^{-1} R^\top)^{-1} R \Omega^{-1} x_i^\top. \]
Consequently, we have that
\[ \sqrt{n} (\tilde{\beta}^{(1)} - \tilde{\beta}_r^{(1)}) = P \Omega^{-1} R^\top (R \Omega^{-1} R^\top)^{-1} R \frac{1}{\sqrt{n} \eta_0} \sum_{i=1}^n w(d_i(\beta,\Sigma)) \Omega^{-1} x_i^\top \Sigma^{-1} e_i(\beta) + o_p(1). \]
In the last line we recognize the first-order approximation of $\tilde{\beta}$. Hence,
\[ \sqrt{n} (\tilde{\beta}^{(1)} - \tilde{\beta}_r^{(1)}) = \sqrt{n} P \Omega^{-1} R^\top (R \Omega^{-1} R^\top)^{-1} \tilde{\beta}^{(2)} + o_p(1). \]
By combining these results the lemma is proven for the case $H_0: \beta_{p_mm} = 0$. To obtain the general result, we need to obtain a first-order approximation for $\tilde{\beta}_r$ starting from its (general) estimating equation and continue as above, but this derivation is quite lengthy and therefore is omitted.
\end{proof}

\begin{proof}[\textbf{Proof of Theorem~\ref{TheoremAsymptoticDistributionLambda}}]
Consider $\Lambda_{\text{S}}$ first. Expectations in the proof are with respect to $K$. Write $\tilde{y}_i = A_i^\top y$ and $x_i = A_i^\top X$ as above. Application of the delta method permits us to rewrite the test statistic as
\[ \Lambda_{\text{S}} = -n \ln \left( \frac{\abs{\tilde{\Sigma}}}{\abs{\tilde{\Sigma}_r}} \right) = - nm \ln \left( \frac{\tilde{\sigma}^2}{\tilde{\sigma}_r^2} \right) = nm \frac{\tilde{\sigma}_r^2 - \tilde{\sigma}^2}{\tilde{\sigma}_r^2} + o_p(1). \]
An alternative to~\eqref{MultivariateMscale} is to define $\tilde{s}(b,G)$ as the solution of
\begin{equation}
\label{MultivariateMscale2}
\frac{1}{n} \sum_{i=1}^n \rho_0 \left( \frac{\sqrt{e_i(b)^\top \phi(G^{-1}) e_i(b)}}{\tilde{s}(b,G)} \right) = \delta_0,
\end{equation}
with $\phi(A) = \abs{A}^{-1/m} A$ for a $m \times m$ matrix $A$ and where $e_i(b) = A_i^\top (y - X b) = \tilde{y}_i - x_i b$. Now, $G$ can be any positive definite matrix of size $m \times m$ (without imposing the restriction that $\abs{G} = 1$). Moreover, it holds that $\tilde{s}(\tilde{\beta},\tilde{\Gamma}) = \tilde{\sigma}$ and $\tilde{s}(\tilde{\beta}_r,\tilde{\Gamma}_r) = \tilde{\sigma}_r$. Hence,
\[ \Lambda_{\text{S}} = nm \frac{\tilde{s}^2(\tilde{\beta}_r,\tilde{\Gamma}_r) - \tilde{s}^2(\tilde{\beta},\tilde{\Gamma})}{\tilde{s}^2(\tilde{\beta}_r,\tilde{\Gamma}_r)} + o_p(1). \]
A Taylor expansion of $\tilde{s}^2(\tilde{\beta}_r,\tilde{\Gamma}_r)$ around $(\tilde{\beta},\tilde{\Gamma})$ yields
\begin{equation}
\label{eq1}
\tilde{s}^2(\tilde{\beta}_r,\tilde{\Gamma}_r) - \tilde{s}^2(\tilde{\beta},\tilde{\Gamma}) = \frac{1}{2} (\tilde{\beta}_r - \tilde{\beta})^\top \left( \frac{\partial^2 \tilde{s}^2(b,G)}{\partial b^\top \partial b} \right) \Big{|}_{(\tilde{\beta}^{*},\tilde{\Gamma}^{*})} (\tilde{\beta}_r - \tilde{\beta}) + o_p(1/n),
\end{equation}
where $\tilde{\beta}^{*}$ is an intermediate point between $\tilde{\beta}_r$ and $\tilde{\beta}$ and $\tilde{\Gamma}^{*}$ is an intermediate point between $\tilde{\Gamma}_r$ and $\tilde{\Gamma}$. Due to the definition of the S-estimator, the first-order derivatives vanish (see also the consistency condition of $\Lambda_{\text{S}}$). The second-order derivative of $G$ and the mixed derivative can be shown to be of order $o_p(1/n)$. Then we simplify the second-order derivative w.r.t.\ $b$. It can be shown that
\[ \left( \sum_{i=1}^n \frac{\psi_0(d_i(b,G)) d_i(b,G)}{\tilde{s}(b,G)} \right) \diff({\tilde{s}(b,G)},b) = - \left( \sum_{i=1}^n \frac{w_0(d_i(b,G))}{\tilde{s}^2(b,G)} x_i^\top \phi(G^{-1}) e_i(b) \right), \]
with $d_i^2(b,G) = e_i^\top(b) \phi(G^{-1}) e_i(b) / \tilde{s}^2(b,G)$. Taking derivatives w.r.t.\ $b$, we obtain
\[ \left( \frac{\partial^2 \tilde{s}(b,G)}{\partial b^\top \partial b} \right) \left( \sum_{i=1}^n \frac{\psi_0(d_i(b,G)) d_i(b,G)}{\tilde{s}(b,G)} \right) + \left( \diff({\tilde{s}(b,G)},b) \right) \left( \diff(,b) \sum_{i=1}^n \frac{\psi_0(d_i(b,G)) d_i(b,G)}{\tilde{s}(b,G)} \right)^\top, \]
for the left hand side and
\begin{multline*}
\sum_{i=1}^n \frac{w_0(d_i(b,G))}{\tilde{s}^2(b,G)} x_i^\top \phi(G^{-1}) x_i + \sum_{i=1}^n \frac{w_0'(d_i(b,G))}{d_i(b,G) \tilde{s}^4(b,G)} x_i^\top \phi(G^{-1}) e_i(b) e_i^\top(b) \phi(G^{-1}) x_i \\
+ \sum_{i=1}^n \frac{w_0(d_i(b,G)) + w_0'(d_i(b,G)) d_i(b,G)/2}{\tilde{s}^4(b,G)} x_i^\top \phi(G^{-1}) e_i(b) \left( \diff({\tilde{s}^2(b,G)},b) \right)^\top,
\end{multline*}
for the right hand side. Since $\tilde{\beta}$ and $\tilde{\beta}_r$ are consistent estimators (under $H_0$), also $\tilde{\beta}^{*}$ is consistent. Similarly for $\Gamma$. Therefore, since it is true that
\[ \diff({\tilde{s}(b,G)},b) \Big{|}_{(\tilde{\beta}^{*},\tilde{\Gamma}^{*})} \overset{a.s.}{\longrightarrow} 0, \]
we have
\[ \left( \frac{\partial^2 \tilde{s}(b,G)}{\partial b^\top \partial b} \right) \Big{|}_{(\tilde{\beta}^{*},\tilde{\Gamma}^{*})} \overset{a.s.}{\longrightarrow} \frac{\sigma \eta_0}{\gamma_0} {\rm E}_K [X^\top \Sigma^{-1} X]. \]
Then~\eqref{eq1} reduces to
\begin{equation}
\label{eq2}
\tilde{s}^2(\tilde{\beta}_r,\tilde{\Gamma}_r) - \tilde{s}^2(\tilde{\beta},\tilde{\Gamma}) = \frac{\sigma^2 \eta_0}{\gamma_0} (\tilde{\beta}_r - \tilde{\beta})^\top {\rm E}_K [X^\top \Sigma^{-1} X] (\tilde{\beta}_r - \tilde{\beta}) + o_p(1/n),
\end{equation}
and the test-statistic can be rewritten as
\[ \Lambda_{\text{S}} = \frac{nm \eta_0}{\gamma_0} (\tilde{\beta}_r - \tilde{\beta})^\top {\rm E}_K [X^\top \Sigma^{-1} X] (\tilde{\beta}_r - \tilde{\beta}) + o_p(1). \]
Using the result of lemma~\ref{Lemmafirstorderapprox} this expression for $\Lambda_{\text{S}}$ becomes
\[ \Lambda_{\text{S}} = \frac{nm \eta_0}{\gamma_0} ( R \tilde{\beta} - q )^\top \left( R \hspace{0.5mm} {\rm E}_K [X^\top \Sigma^{-1} X]^{-1} R^\top \right)^{-1} ( R \tilde{\beta} - q ) + o_p(1). \]
Using the results from Theorem~\ref{TheoremAsymptoticVariance}, we can rewrite this as
\[ \Lambda_{\text{S}} = \frac{n \alpha_0}{\eta_0 \gamma_0} ( R \tilde{\beta} - q )^\top \left( R \hspace{0.5mm} \text{ASV}(\tilde{\beta},H_{\beta,\Sigma}) R^\top \right)^{-1} ( R \tilde{\beta} - q ) + o_p(1). \]
Finally, the result follows by applying Slutzky's theorem.

Now, consider $\Lambda_{\text{MM}}$. The proof is similar as above, therefore, we only give a sketch of the proof. Write $\Lambda_{\text{MM}}$ as the difference of $\hat{\sigma}_r^2$ and $\hat{\sigma}^2$. Consider these estimates as a function of $\hat{\beta}$, $\hat{\Gamma}$ and $\tilde{\sigma}^2$ (and their restricted versions respectively). Then a Taylor expansion leads to similar expressions as in~\eqref{eq1} and~\eqref{eq2}. Since the result of Lemma~\ref{Lemmafirstorderapprox} can be generalized to MM-estimators, a similar derivation as above ends the proof.
\end{proof}


\begin{proof}[\textbf{Proof of Theorem~\ref{TheoremInfluenceFunctionsLambda}}]
Let $Z=(\tilde{X}^\top,Y^\top)^\top$ have model distribution $H=(K,F_\Sigma)$ with $K$ the distribution of $\tilde{X}$ and $F_\Sigma$ the elliptically symmetric distribution of the error terms. Let us denote $H_{\epsilon} = H_{\epsilon,\Delta_{z}}$ to simplify the notation. We only derive the second-order influence function of $\Lambda_{\text{S}}$ and $\Lambda_{\text{MM}}$ (see~\eqref{RobustLikelihoodRatioS} and~\eqref{RobustLikelihoodRatioM}) under $H_0: \beta_{p_mm}=0$.

First, consider $\Lambda_{\text{S}}$. We introduce its functional version as
\[ \Lambda_{\text{S}}(H) = - 2m \ln \left( \frac{\tilde{\sigma}(H)}{\tilde{\sigma}_r(H)} \right), \]
with $\tilde{\sigma}(H)$ the population version of $\tilde{\sigma}$. Under the null hypothesis we obtain
\[ \text{IF2}(z;\Lambda_{\text{S}},H) = \frac{2m}{\sigma} \left( \text{IF2}(z;\tilde{\sigma}_r,H) - \text{IF2}(z;\tilde{\sigma},H) \right), \]
by taking the second-order derivative with respect to $\epsilon$. Hence, the proof requires the second-order influence function of $\tilde{\sigma}$. For the scale functional, the following equation holds:
\[ \int \rho_0 \left( \frac{\sqrt{e(H)^\top \hat{\Gamma}^{-1}(H) e(H)}}{\tilde{\sigma}(H)} \right) dH(z) = \delta_0, \]
with $e(H) = y - x \hat{\beta}(H)$. To find $\text{IF2}(z;\tilde{\sigma},H)$, we consider this equation for $H=H_{\epsilon}$ and differentiate it twice. Since we need the difference of $\text{IF2}(z;\tilde{\sigma},H)$ and $\text{IF2}(z;\tilde{\sigma}_r,H)$, we only mind about the terms that are different, i.e., only the terms involving $\beta$ since these are different. Performing similar steps as in the proof of Theorem~\ref{TheoremInfluenceFunctions} we obtain
\[ \text{IF2}(z;\tilde{\sigma},H) = \frac{\eta_0 \sigma}{\gamma_0} \text{IF}(z;\tilde{\beta},H)^\top \hspace{0.5mm} \Omega \hspace{0.5mm} \text{IF}(z;\tilde{\beta},H) + R(\tilde{\Sigma}), \]
with $\Omega = {\rm E}_K [X^\top \Sigma^{-1} X]$ and where $R(\tilde{\Sigma})$ contains the remaining terms not involving $\beta$. Remark that $R(\tilde{\Sigma})$ has an explicit expression, but to save space we do not show it. Likewise, such an expression can be obtained for $\tilde{\sigma}_r$. Consequently, we have
\[ \text{IF2}(z;\Lambda_{\text{S}},H) = - \frac{2m \eta_0}{\gamma_0} \text{IF}(z;\tilde{\beta},H)^\top \hspace{0.5mm} \Omega \hspace{0.5mm} \text{IF}(z;\tilde{\beta},H) + \frac{2m \eta_0}{\gamma_0} \text{IF}(z;\tilde{\beta}_r^{(1)},H)^\top \hspace{0.5mm} \Omega^{(1)} \hspace{0.5mm} \text{IF}(z;\tilde{\beta}_r^{(1)},H), \]
with $\Omega^{(1)} = {\rm E}_K [X^{(1)^{\scriptstyle t}} \Sigma^{-1} X^{(1)}]$, where $X^{(1)}$ is defined as in Lemma~\ref{Lemmafirstorderapprox}. Plugging in the results of Theorem~\ref{TheoremInfluenceFunctions} we get
\[ \text{IF2}(z;\Lambda_{\text{S}},H) = - \frac{2m}{\eta_0 \gamma_0} w_0^2(\norm{e(H)}_{\Sigma}) e(H)^\top \Sigma^{-1} \left[ x \Omega^{-1} x^\top - x^{(1)} (\Omega^{1})^{-1} x^{(1)^{\scriptstyle t}} \right] \Sigma^{-1} e(H), \]
where $x^{(1)}$ is defined similarly as $X^{(1)}$. A comparable reasoning as in Lemma~\ref{Lemmafirstorderapprox} shows that the previous line reduces to
\[ - \frac{2m}{\eta_0 \gamma_0} w_0^2(\norm{e(H)}_{\Sigma}) e(H)^\top \Sigma^{-1} x \Omega^{-1} R^\top \left( R \Omega^{-1} R^\top \right)^{-1} R \Omega^{-1} x^\top \Sigma^{-1} e(H). \]
Now we recognize the influence function of $\tilde{\beta}$ and find
\[ \text{IF2}(z;\Lambda_{\text{S}},H) = - \frac{2m \eta_0}{\gamma_0} \text{IF}(z;\tilde{\beta},H)^\top R^\top \left( R \Omega^{-1} R^\top \right)^{-1} R \hspace{0.5mm} \text{IF}(z;\tilde{\beta},H), \]
as was to be proven.

Then, consider $\Lambda_{\text{MM}}$ with functional version
\[ \Lambda_{\text{MM}}(H) = - 2m \ln \left( \frac{\hat{\sigma}(H)}{\hat{\sigma}_r(H)} \right), \]
with $\hat{\sigma}(H)$ the population version of $\hat{\sigma}$. Under the null hypothesis the second-order influence function of $\Lambda_{\text{MM}}$ is again the difference of the second-order influence functions of $\hat{\sigma}$. An identical derivation verifies the result.
\end{proof}


\begin{proof}[\textbf{Proof of Theorem~\ref{TheoremAsymptoticDistributionLM}}]
We proof the result for $\text{LM}_{\text{MM}}$. Consider a SUR model with two blocks ($m=2$) for ease of notation. The results can be generalized to $m>2$.

Since the estimating equations of $\hat{\Sigma}_r$ are the diagonal parts of equations~\eqref{EstimatingEquationMM}, $\text{LM}_{\text{MM}}$ can be rewritten as
\[ \text{LM}_{\text{MM}} = \frac{n m^2 \left( \sum_{i=1}^n w_1(d_i) e_{i1} e_{i2} \right)^2}{\hat{\sigma}_{r,11} \hat{\sigma}_{r,22} \left( \sum_{i=1}^n \psi_1(d_i) d_i \right)^2}, \]
with $\hat{\sigma}_{r,jj}$ the $j$th diagonal element of $\hat{\Sigma}_r$. According to the null hypothesis we have
\[ {\rm E}_F [w_1(d) e_1 e_2] = 0, \]
with $d^2=e^\top \Sigma^{-1} e$ and $e=(e_1,e_2)^\top \sim F = F_{\Sigma}$. Moreover, due to a result of~\citet{Lopuhaa1999}, for the variance we obtain
\[ {\rm Var}_F [w_1(d) e_1 e_2] = \frac{\sigma_{11} \sigma_{22}}{m(m+2)} {\rm E}_F [\psi_1^2(\norm{e}_{\Sigma}) \norm{e}_{\Sigma}^2]. \]
Since $w_1(d_i) e_{i1} e_{i2}$, $i=1,\ldots,n$ are independent identically distributed random variables, the central limit theorem gives
\[ \sqrt{n} \left( \frac{1}{n} \sum_{i=1}^n w_1(d_i) e_{i1} e_{i2} \right) \quad \overset{d}{\longrightarrow} \quad N(0,{\rm Var}_F [w_1(d) e_1 e_2]), \]
or equivalently
\[ \frac{nm(m+2)}{\sigma_{11} \sigma_{22} {\rm E}_F [\psi_1^2(\norm{e}_{\Sigma}) \norm{e}_{\Sigma}^2]} \left( \frac{1}{n} \sum_{i=1}^n w_1(d_i) e_{i1} e_{i2} \right)^2 \quad \overset{d}{\longrightarrow} \quad \chi_1^2. \]
Since for $j=1,2$, $\hat{\sigma}_{r,jj}$ is a consistent estimator under $H_0$ and
\[ \frac{1}{n} \sum_{i=1}^n \psi_1(d_i) d_i \quad \overset{a.s.}{\longrightarrow} \quad \gamma_1, \]
the result now follows.
\end{proof}


\begin{proof}[\textbf{Proof of Theorem~\ref{TheoremInfluenceFunctionsLM}}]
Let $Z=(\tilde{X}^\top,Y^\top)^\top$ have model distribution $H=(K,F_\Sigma)$ with $K$ the distribution of $\tilde{X}$ and $F_\Sigma$ the elliptically symmetric distribution of $Y$. Let us denote $H_{\epsilon} = H_{\epsilon,\Delta_{z}}$ to simplify the notation. We derive the second-order influence function of $\text{LM}_{\text{MM}}$. Again we assume $m=2$ for simplicity.

We introduce the functional version as
\[ \text{LM}_{\text{MM}}(H) = \frac{\left( \int w_1(d(H)) e_1(H) e_2(H) \hspace{0.5mm} dH(z) \right)^2}{\left( \int w_1(d(H)) e_1^2(H) \hspace{0.5mm} dH(z) \right) \left( \int w_1(d(H)) e_2^2(H) \hspace{0.5mm} dH(z) \right)}, \]
with $d^2(H)=e^\top(H) \hat{\Sigma}^{-1}(H) e(H)$ and $e=(e_1,e_2)^\top$. Since the first-order influence function is exactly zero and
\[ \int w_1(d(H)) y_1 y_2 \hspace{0.5mm} dH(z) = 0, \]
under the null hypothesis, we obtain
\[ \text{IF2}(z;\text{LM}_{\text{MM}},H) = \frac{2m^2}{\gamma_1^2 \sigma_{11} \sigma_{22}} \left( \diff(,\epsilon) \left( \int w_1(d(H_{\epsilon})) e_1(H_{\epsilon}) e_2(H_{\epsilon}) \hspace{0.5mm} dH_{\epsilon}(z) \right) \Bigg{|}_{\epsilon = 0} \right)^2. \]
Following the same steps as in the proof of Theorem~\ref{TheoremInfluenceFunctions} the above derivative becomes
\[ \int \frac{w_1'(\norm{y}_{\Sigma})}{2\norm{y}_{\Sigma}} (-2y \Sigma^{-1} x \text{IF}(z;\hat{\beta},H) + y^\top \text{IF}(z;\hat{\Sigma}^{-1},H) y) y_1 y_2 \hspace{0.5mm} dH(z) - \int w_1(d(H)) y_1 y_2 \hspace{0.5mm} dH(z) + w_1(\norm{y}_{\Sigma}) y_1 y_2. \]
Due to symmetry and under $H_0$ the integrals vanish. Combining the results, proves the theorem.
\end{proof}

\end{document}